\theoremstyle{definition}
\newtheorem{theorem}{Theorem}
\begin{document}
\baselineskip 0.6cm

\def\bra#1{\langle #1 |}
\def\ket#1{| #1 \rangle}
\def\inner#1#2{\langle #1 | #2 \rangle}
\def\app#1#2{%
  \mathrel{%
    \setbox0=\hbox{$#1\sim$}%
    \setbox2=\hbox{%
      \rlap{\hbox{$#1\propto$}}%
      \lower1.1\ht0\box0%
    }%
    \raise0.25\ht2\box2%
  }%
}
\def\approxprop{\mathpalette\app\relax}
\DeclarePairedDelimiter{\norm}{\lVert}{\rVert}

\begin{titlepage}

\begin{flushright}
\end{flushright}

\vskip 1.2cm

\begin{center}
{\Large \bf Toward a Holographic Theory for General Spacetimes}

\vskip 0.7cm

{\large Yasunori Nomura$^{a,b}$,
  Nico Salzetta$^{a,b}$,
  Fabio Sanches$^{a,b}$,
  and Sean J. Weinberg$^c$}

\vskip 0.5cm

$^a$ {\it Berkeley Center for Theoretical Physics, Department of Physics,\\
  University of California, Berkeley, CA 94720, USA}

\vskip 0.2cm

$^b$ {\it Theoretical Physics Group, Lawrence Berkeley National Laboratory, 
 CA 94720, USA}

\vskip 0.2cm

$^c$ {\it Department of Physics, University of California, 
  Santa Barbara, CA 93106, USA}

\vskip 0.8cm

\abstract{We study a holographic theory of general spacetimes that does 
 not rely on the existence of asymptotic regions.  This theory is to be 
 formulated in a holographic space.  When a semiclassical description 
 is applicable, the holographic space is assumed to be a holographic 
 screen:\ a codimension-1 surface that is capable of encoding states 
 of the gravitational spacetime.  Our analysis is guided by conjectured 
 relationships between gravitational spacetime and quantum entanglement 
 in the holographic description.  To understand basic features of this 
 picture, we catalog predictions for the holographic entanglement structure 
 of cosmological spacetimes.  We find that qualitative features of 
 holographic entanglement entropies for such spacetimes differ from 
 those in AdS/CFT but that the former reduce to the latter in the 
 appropriate limit.  The Hilbert space of the theory is analyzed, and 
 two plausible structures are found:\ a direct sum and ``spacetime equals 
 entanglement'' structure.  The former preserves a naive relationship 
 between linear operators and observable quantities, while the latter 
 respects a more direct connection between holographic entanglement 
 and spacetime.  We also discuss the issue of selecting a state in 
 quantum gravity, in particular how the state of the multiverse may 
 be selected in the landscape.}

\end{center}
\end{titlepage}

\section{Introduction}
\label{sec:intro}

As with any other classical object, spacetime is expected to consist 
of a large number of quantum degrees of freedom.  The first explicit 
hint of this came from the discovery that empty spacetime can carry 
entropy~\cite{Bekenstein:1972tm,Bekenstein:1973ur,Bardeen:1973gs,%
Hawking:1974rv,Hawking:1974sw,Gibbons:1977mu}.  What theory describes 
these degrees of freedom as well as the excitations on them, i.e.\ 
matter?

Part of the difficulty in finding such a theory is the large 
redundancies present in the description of gravitational spacetime. 
The holographic principle~\cite{'tHooft:1993gx,Susskind:1994vu,%
Bousso:2002ju} suggests that the natural space in which the microscopic 
degrees of freedom for spacetime (and matter) live is a non-dynamical 
spacetime whose dimension is one less than that in the original 
description (as demonstrated in the special case of the AdS/CFT 
correspondence~\cite{Maldacena:1997re}).  This represents a 
huge redundancy in the original gravitational description beyond 
that associated with general coordinate transformations.  For 
general spacetimes, causality plays a central role in fixing 
this redundancy~\cite{Fischler:1998st,Bousso:1999xy}.  A similar 
idea also plays an important role in addressing problems in the 
semiclassical descriptions of black holes~\cite{Susskind:1993if} 
and cosmology~\cite{Nomura:2011dt,Bousso:2011up}.

In this paper, we explore a holographic theory for general spacetimes. 
We follow a ``bottom-up'' approach given the lack of a useful description 
in known frameworks, such as AdS/CFT and string theory in asymptotically 
Minkowski space.  We assume that our holographic theory is formulated 
on a holographic screen~\cite{Bousso:1999cb}, a codimension-1 surface on 
which the information about the original spacetime can be encoded.  This 
construction can be extended beyond the semiclassical regime by considering 
all possible states on all possible slices---called leaves---of holographic 
screens~\cite{Nomura:2011dt,Nomura:2011rb}, where the nonuniqueness 
of erecting a holographic screen is interpreted as the freedom in fixing 
the redundancy associated with holography.  The resulting picture is 
consistent with the recently discovered area theorem applicable to the 
holographic screens~\cite{Bousso:2015mqa,Bousso:2015qqa,Sanches:2016pga}.

To study the structure of the theory, we use conjectured relationships 
between spacetime in the gravitational description and quantum 
entanglement in the holographic theory.  Recently, it has become 
increasingly clear that quantum entanglement among holographic 
degrees of freedom plays an important role in the emergence of 
classical spacetime~\cite{Ryu:2006bv,Ryu:2006ef,Hubeny:2007xt,%
VanRaamsdonk:2009ar,Swingle:2012wq,Lewkowycz:2013nqa,Maldacena:2013xja,%
Sanches:2016sxy,Freedman:2016zud,Almheiri:2016blp,Nomura:2016aww,%
Harlow:2016vwg}.  In particular, Ref.~\cite{Sanches:2016sxy} showed 
that the areas of the extremal surfaces anchored to the boundaries 
of regions on a leaf of a holographic screen satisfy relations obeyed 
by entanglement entropies, so that they can indeed be identified as 
the entanglement entropies associated with the corresponding regions 
in the holographic space.  We analyze properties of these surfaces 
and discuss their implications for a holographic theory of general 
spacetimes.

We lay down our general framework in Section~\ref{sec:framework}. 
We then study the behavior of extremal surfaces in cosmological 
Friedmann-Robertson-Walker (FRW) spacetimes in Section~\ref{sec:FRW}. 
Here we focus on initially expanding flat and open universes, in which 
the area of the leaves monotonically increases.  We first consider 
universes dominated by a single component in the Friedmann equation, 
and we identify how screen entanglement entropies---the entanglement 
entropies among the degrees of freedom in the holographic space---encode 
information about the spacetimes.  We discuss next how the screen 
entanglement entropies behave in a transition period in which the 
dominant component of the universe changes.  We find an interesting 
theorem when the holographic screen is spacelike:\ the change of a 
screen entanglement entropy is always monotonic.  The proof of this 
theorem is given in Appendix~\ref{app:spacelike}.  If the holographic 
screen is timelike, no such theorem holds.

In Section~\ref{sec:beyond}, we study the structure of the holographic 
theory for general spacetimes, building on the results obtained earlier. 
In particular, we discuss how the holographic entanglement entropies for 
general spacetimes differ from those in AdS/CFT and how, nevertheless, 
the former reduce to the latter in an appropriate limit.  We emphasize 
that the holographic entanglement entropies for cosmological spacetimes 
obey a volume law, rather than an area law, implying that the relevant 
holographic states are not ground states of local field theories. 
This is the case despite the fact that the dynamics of the holographic 
theory respects some sense of locality, indicated by the fact that 
the area of a leaf increases in a local manner on a holographic screen.

The Hilbert space of the theory is analyzed in 
Section~\ref{subsec:structure} under two assumptions:
\begin{itemize}
\item[(i)]
The holographic theory has (effectively) a qubit degree of freedom 
per each volume of $4 \ln 2$ in Planck units.  These degrees of 
freedom appear local at lengthscales larger than a microscopic 
cutoff $l_{\rm c}$.
\item[(ii)]
If a holographic state represents a semiclassical spacetime, the 
area of an extremal surface anchored to the boundary of a region 
$\Gamma$ on a leaf $\sigma$ and contained in the causal region 
associated with $\sigma$ represents the entanglement entropy of 
$\Gamma$ in the holographic theory.
\end{itemize}
We find that these two assumptions strongly constrain the structure 
of the Hilbert space, although they do not determine it uniquely. 
There are essentially two possibilities:
\begin{itemize}
\item[] {\bf Direct sum structure} 
--- Holographic states representing different semiclassical spacetimes 
${\cal M}$ live in different Hilbert spaces ${\cal H}_{\cal M}$ even 
if these spacetimes have the same boundary space (or leaf) $B$
\begin{equation}
  {\cal H}_B = \bigoplus_{\cal M} {\cal H}_{\cal M}.
\label{eq:DS}
\end{equation}
In each Hilbert space ${\cal H}_{\cal M}$, the states representing 
the semiclassical spacetime comprise only a tiny subset of all the 
states---the vast majority of the states in ${\cal H}_{\cal M}$ do 
not allow for a semiclassical interpretation, which we call ``firewall'' 
states borrowing the terminology in Refs.~\cite{Almheiri:2012rt,%
Almheiri:2013hfa,Marolf:2013dba}.  In fact, the states allowing 
for a semiclassical spacetime interpretation do not even form a 
vector space---their superposition may lead to a firewall state if 
it involves a large number of terms, of order a positive power of 
${\rm dim}\,{\cal H}_{\cal M}$.  This is because a superposition 
involving such a large number of terms significantly alters the 
entanglement entropy structure, so under assumption~(ii) above 
we cannot interpret the resulting state as a semiclassical state 
representing ${\cal M}$.  In this picture, small excitations 
over spacetime ${\cal M}$ can be represented by standard linear 
operators acting on the (suitably extended) Hilbert space 
${\cal H}_{\cal M}$, which can be trivially promoted to linear 
operators in ${\cal H}_B$.
\item[] {\bf Spacetime equals entanglement} 
--- Holographic states that represent different semiclassical 
spacetimes but have same boundary space $B$ are all elements 
of a single Hilbert space ${\cal H}_B$.  And yet, the number of 
independent microstates representing {\it each} of these spacetimes, 
${\cal M}, {\cal M}', {\cal M}'', \cdots$, is the dimension of 
${\cal H}_B$:
\begin{equation}
  \ket{\Psi^{\cal M}_i},\,\, \ket{\Psi^{{\cal M}'}_{i'}},\,\, 
  \ket{\Psi^{{\cal M}''}_{i''}},\,\, \cdots \,\in\, {\cal H}_B;
\qquad
  i, i', i'',\cdots = 1, \cdots, {\rm dim}\,{\cal H}_B,
\label{eq:RD}
\end{equation}
which implies that the microstates representing different spacetimes 
are not independent.  This picture arises if we require the 
converse of assumption~(ii) and is called ``spacetime equals 
entanglement''~\cite{Nomura:2016aww}:\ if a holographic state 
has the form of entanglement entropies corresponding to a certain 
spacetime, then the state indeed represents that spacetime.  The 
structure of Eq.~(\ref{eq:RD}) is then obtained because arbitrary 
unitary transformations acting in each cutoff size cell in $B$ 
do not change the entanglement entropies, implying that the 
number of microstates for any geometry is ${\rm dim}\,{\cal H}_B$ 
(so they span a basis of ${\cal H}_B$).  Despite the intricate 
structure of the states, this picture admits the standard many 
worlds interpretation for classical spacetimes, as shown in 
Ref.~\cite{Nomura:2016aww}.  Small excitations over spacetime 
are represented by non-linear/state-dependent operators, 
along the lines of Ref.~\cite{Papadodimas:2015jra} (see 
also~\cite{Papadodimas:2012aq,Verlinde:2012cy,Nomura:2012ex}), 
since a superposition of background spacetimes may lead to 
another spacetime, so that operators representing excitations 
must know the entire quantum state they act on.
\end{itemize}

We note that a dichotomy similar to the one described above was 
discussed earlier in Ref.~\cite{Papadodimas:2015jra}, but the 
interpretation and the context in which it appears here are 
distinct.  First, the state-dependence of the operators representing 
excitations in the second scenario (as well as that of the 
time evolution operator) becomes relevant when the boundary 
space is involved in the dynamics as in the case of cosmological 
spacetimes.  Hence, this particular state-dependence need not 
persist in the AdS/CFT limit.  This does not imply anything about 
the description of the interior a black hole in the CFT.  It is 
possible that the CFT does not provide a semiclassical description 
of the black hole interior, i.e.\ it gives only a distant description. 
Alternatively, there may be a way of obtaining a state-dependent 
semiclassical description of the black hole interior within a CFT, 
as envisioned in Ref.~\cite{Papadodimas:2015jra}.  We are agnostic 
about this issue.

Second, Ref.~\cite{Papadodimas:2015jra} describes the dichotomy 
as state-dependence vs.\ firewalls.  Our picture, on the other hand, 
does not have a relation with firewalls because the following two 
statements apply to {\it both} the direct sum and spacetime equals 
entanglement pictures:
\begin{itemize}
\item
Most of the states in the Hilbert space, e.g.\ in the Haar measure, 
are firewalls in the sense that they do not represent smooth 
semiclassical spacetimes, which require special entanglement 
structures among the holographic degrees of freedom.
\item
The fact that most of the states are firewalls does not mean that 
these states are realized as a result of standard time evolution, 
in which the volume of the boundary space increases in time. 
In fact, the direct sum picture even has a built-in mechanism 
of eliminating firewalls through time evolution, as we will see 
in Section~\ref{subsec:time-evo}.%
\footnote{This is natural because any dynamics leading to 
 classicalization selects only a very special set of states as 
 the result of time evolution:\ states interpreted as a superposition 
 of a small number of classical worlds, where small means a number 
 (exponentially) smaller than the dimension of the full microscopic 
 Hilbert space.}
\end{itemize}
Rather, the real tension is between the linearity/state-independence 
of operators representing observables (including the time evolution 
operator) and the spacetime equals entanglement hypothesis, i.e.\ 
the hypothesis that if a holographic state has entanglement entropies 
corresponding to a semiclassical spacetime, then the state indeed 
represents that spacetime.  If we insist on the linearity of observables, 
we are forced to take the direct sum picture; if we adopt the spacetime 
equals entanglement hypothesis, then we must give up linearity.

Our analysis in Section~\ref{sec:beyond} also includes the following. 
In Section~\ref{subsec:reconst}, we discuss bulk reconstruction 
from a holographic state, which suggests that the framework 
provides a distant description for a dynamical black hole.  In 
Section~\ref{subsec:exterior}, we consider how the theory encodes 
information about spacetime outside the causal region of a leaf, 
which is needed for autonomous time evolution.  Our analysis 
suggests a strengthened covariant entropy bound:\ the entropy 
on the {\it union} of two light sheets (future-directed ingoing 
and past-directed outgoing) of a leaf is bounded by the area of 
the leaf divided by $4$.  This bound is stronger than the original 
bound in Ref.~\cite{Bousso:1999xy}, which says that the entropy on 
{\it each} of the two light sheets is bounded by the area divided 
by $4$.  In Section~\ref{subsec:time-evo}, we analyze properties 
of time evolution, in particular a built-in mechanics of eliminating 
firewalls in the direct sum picture and the required non-linearity 
of the time evolution operator in the spacetime equals entanglement 
picture.  In Sections~\ref{subsec:exterior} and \ref{subsec:time-evo}, 
we discuss how our framework may reduce to AdS/CFT and string theory 
in an asymptotically Minkowski background in the appropriate limits. 
We argue that the dynamics of these theories (in which the boundaries 
are sent to infinity) describe that of the general holographic 
theory modded out by ``vacuum degeneracies'' relevant for the 
dynamics of the boundaries and the exteriors.

In Section~\ref{sec:discuss}, we devote our final discussion to the 
issue of selecting a state.  In general, specifying a system requires 
selection conditions on a state in addition to determining the theory. 
To address this issue in quantum gravity, we need to study the problem 
of time~\cite{DeWitt:1967yk,Wheeler:1967}.  We discuss possible 
signals from a past singularity or past null infinity, closed 
universes and ``fine-tuning'' of states, and selection conditions 
for the string theory landscape~\cite{Bousso:2000xa,Kachru:2003aw,%
Susskind:2003kw,Douglas:2003um}, especially the scenario called the 
``static quantum multiverse''~\cite{Nomura:2012zb}.  While our discussion 
in this section is schematic, it allows us to develop intuition about 
how quantum gravity might work at the fundamental level when applied 
to the real world.

Throughout the paper, we adopt the Schr\"{o}dinger picture of quantum 
mechanics and take the Planck length to be unity, $l_{\rm P} = 1$. 
When the semiclassical picture is applicable, we assume the null and 
causal energy conditions to be satisfied.  These impose the conditions 
$\rho \geq -p$ and $|\rho| \geq |p|$, respectively, on the energy 
density $\rho$ and pressure $p$ of an ideal fluid component.  The 
equation of state parameter $w = p/\rho$, therefore, takes a value 
in the range $|w| \leq 1$.

\section{Holography and Quantum Gravity}
\label{sec:framework}

The holographic principle states that quantum mechanics of a system 
with gravity can be formulated as a non-gravitational theory in spacetime 
with dimension one less than that in the gravitational description. 
The covariant entropy bound, or Bousso bound,~\cite{Bousso:1999xy} 
suggests that this holographically reduced---or ``boundary''---spacetime 
may be identified as a hypersurface in the original gravitational 
spacetime determined by a collection of light rays.  Specifically, 
it implies that the entropy on a null hypersurface generated by a 
congruence of light rays terminating at a caustic or singularity is 
bounded by its largest cross sectional area ${\cal A}$; in particular, 
the entropy on each side of the largest cross sectional surface is 
bounded by ${\cal A}/4$ in Planck units.%
\footnote{We will conjecture a stronger bound in 
 Section~\ref{subsec:exterior}.}
It is therefore natural to consider that, for a fixed gravitational 
spacetime, the holographic theory lives on a hypersurface---called 
the holographic screen---on which null hypersurfaces foliating the 
spacetime have the largest cross sectional areas~\cite{Bousso:1999cb}.

This procedure of erecting a holographic screen has a large 
ambiguity, presumably reflecting a large freedom in fixing the 
redundancy of the gravitational description associated with the 
holographic principle.  A particularly useful choice advocated in 
Refs.~\cite{Nomura:2011dt,Nomura:2011rb,Nomura:2013nya} is to adopt 
an ``observer centric reference frame.''  Let the origin of the 
reference frame follow a timelike curve $p(\tau)$ which passes through 
a fixed spacetime point $p_0$ at $\tau = 0$, and consider the congruence 
of past-directed light rays emanating from $p_0$.%
\footnote{In Refs.~\cite{Nomura:2011dt,Nomura:2011rb,Nomura:2013nya}, 
 $p(\tau)$ was chosen to be a timelike geodesic with $\tau$ being the 
 proper time measured at $p(\tau)$.  We suspect that this simplifies 
 the time evolution operator in the holographic theory.}
The expansion of the light rays $\theta$ satisfies
\begin{equation}
  \frac{\partial \theta}{\partial \lambda} + \frac{1}{2} \theta^2 
  \leq 0,
\label{eq:convergence}
\end{equation}
where $\lambda$ is the affine parameter associated with the light rays. 
This implies that the light rays emitted from $p_0$ focus toward the 
past (starting from $\theta = +\infty$ at $\lambda = 0_+$), and we may 
identify the apparent horizon, i.e.\ the codimension-2 surface with
\begin{equation}
  \theta = 0,
\label{eq:app-hor}
\end{equation}
to be an equal-time hypersurface---called a leaf---of a holographic 
screen.  Repeating the procedure for all $\tau$, we obtain a specific 
holographic screen, with the leaves parameterized by $\tau$, corresponding 
to foliating the spacetime region accessible to the observer at 
$p(\tau)$; see Fig.~\ref{fig:p-tau}.  Such a foliation is consonant 
with the complementarity hypothesis~\cite{Susskind:1993if}, which 
asserts that a complete description of a system is obtained by 
referring only to the spacetime region that can be accessed by 
a single observer.
\begin{figure}[t]
\begin{center}
  \includegraphics[height=6.5cm]{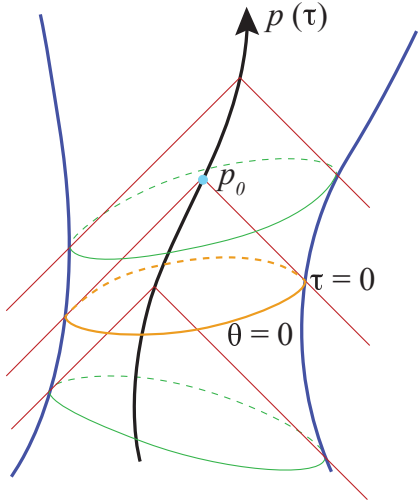}
\end{center}
\caption{For a fixed semiclassical spacetime, the holographic screen 
 is a hypersurface obtained as the collection of codimension-2 
 surfaces (labeled by $\tau$) on which the expansion of the 
 light rays emanating from a timelike curve $p(\tau)$ vanishes, 
 $\theta = 0$.  This way of erecting the holographic screen 
 automatically deals with the redundancy associated with 
 complementarity.  The ambiguity of choosing $p(\tau)$ reflects 
 a large freedom in fixing the redundancy associated with 
 holography.}
\label{fig:p-tau}
\end{figure}

With this construction, we can view a quantum state of the holographic 
theory as living on a leaf of the holographic screen obtained 
in the above observer centric manner.  We can then consider the 
collection of all possible quantum states on all possible leaves, 
obtained by considering all timelike curves in all spacetimes.  We 
take the view that a state of quantum gravity lives in the Hilbert 
space spanned by all of these states (together with other states that 
do not admit a full spacetime interpretation)~\cite{Nomura:2011dt,%
Nomura:2011rb}.  It is often convenient to consider a Hilbert space 
${\cal H}_B$ spanned by the holographic states that live on the 
``same'' boundary space $B$.%
\footnote{The exact way in which the boundary spaces are grouped 
 into different $B$'s is unimportant.  For example, one can regard 
 the boundary spaces having the same area ${\cal A}$ within some 
 precision $\delta {\cal A}$ to be in the same $B$, or one can 
 discriminate them further by their induced metrics.  This ambiguity 
 does not affect any of the results, unless one takes $\delta {\cal A}$ 
 to be exponentially small in ${\cal A}$ or discriminates induced 
 metrics with the accuracy of order the Planck length (which 
 corresponds to resolving microstates of the spacetime). 
 \label{ft:def-B}}
The relevant Hilbert space can then be written as
\begin{equation}
  {\cal H} = \sum_B {\cal H}_B,
\label{eq:H}
\end{equation}
where the sum of Hilbert spaces is defined by%
\footnote{Unlike Ref.~\cite{Nomura:2011rb}, here we do not 
 assume specific relations between ${\cal H}_B$'s; for example, 
 ${\cal H}_{B_1}$ and ${\cal H}_{B_2}$ for different boundary spaces 
 $B_1$ and $B_2$ may not be orthogonal.  Also, we have included in 
 the sum over $B$ the cases in which $B$ is outside the semiclassical 
 regime, i.e.\ the cases in which the holographic space does not 
 correspond to a leaf of a holographic screen in a semiclassical 
 regime.  These issues will be discussed in Section~\ref{sec:beyond}.}
\begin{equation}
  {\cal H}_1 + {\cal H}_2 
  = \{ v_1 + v_2\, |\, v_1 \in {\cal H}_1, v_2 \in {\cal H}_2 \}.
\label{eq:H_sum}
\end{equation}
This formulation is not restricted to descriptions based on fixed 
semiclassical spacetime backgrounds.  For example, we may consider 
a state in which macroscopically different spacetimes are superposed; 
in particular, this picture describes the eternally inflating 
multiverse as a state in which macroscopically different universes 
are superposed~\cite{Nomura:2011dt,Nomura:2012zb}.  The space in 
Eq.~(\ref{eq:H}) is called the covariant Hilbert space with observer 
centric gauge fixing.

Recently, Bousso and Engelhardt have identified two special classes 
of holographic screens~\cite{Bousso:2015mqa,Bousso:2015qqa}:\ if a 
portion of a holographic screen is foliated by marginally anti-trapped 
(trapped) surfaces, then that portion is called a past (future) 
holographic screen.  Specifically, denoting the two future-directed 
null vector fields orthogonal to a portion of a leaf by $k^a$ 
and $l^a$, with $k^a$ being tangent to light rays emanating from 
$p(\tau)$, the expansion of the null geodesic congruence generated 
by $l^a$ satisfies $\theta_l > 0$ and $< 0$ for past and future 
holographic screens, respectively.  They proved, building on earlier 
works~\cite{Hayward:1993wb,Hayward:1997jp,Ashtekar:2002ag,Ashtekar:2003hk}, 
that the area of leaves ${\cal A}(\tau)$ monotonically increases 
(decreases) for a past (future) holographic screen:
\begin{equation}
  \left\{ \begin{array}{l}
    \theta_k = 0 \\ \theta_l \gtrless 0
  \end{array} \right.
\quad\Leftrightarrow\qquad
  \frac{d}{d\tau} {\cal A}(\tau) \gtrless 0;
\label{eq:area-law}
\end{equation}
see Fig.~\ref{fig:def}.
\begin{figure}[t]
\begin{center}
  \includegraphics[height=6.5cm]{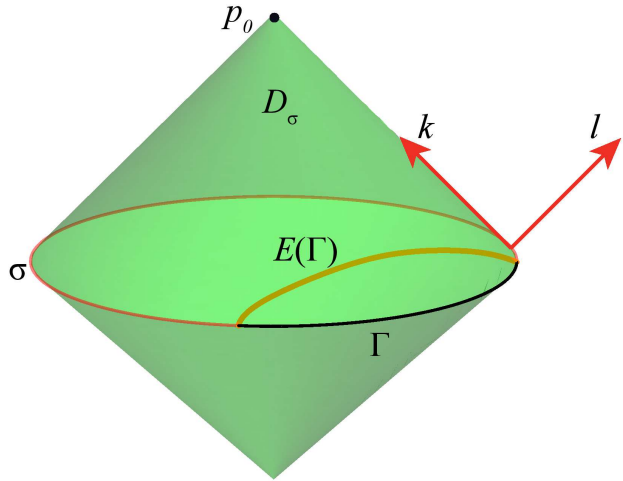}
\end{center}
\caption{The congruence of past-directed light rays emanating from $p_0$ 
 (the origin of the reference frame) has the largest cross sectional 
 area on a leaf $\sigma$, where the holographic theory lives.  At any 
 point on $\sigma$, there are two future-directed null vectors orthogonal 
 to the leaf:\ $k^a$ and $l^a$.  For a given region $\Gamma$ of the leaf, 
 we can find a codimension-2 extremal surface $E(\Gamma)$ anchored to 
 the boundary $\partial \Gamma$ of $\Gamma$, which is fully contained 
 in the causal region $D_\sigma$ associated with $\sigma$.}
\label{fig:def}
\end{figure}
In many regular circumstances, including expanding FRW universes, the 
holographic screen is a past holographic screen, so that the area of 
the leaves monotonically increases, $d{\cal A}(\tau)/d\tau > 0$.  In 
this paper we mostly focus on this case, and we interpret the area theorem 
in terms of the second law of thermodynamics applied to the Hilbert space 
of Eq.~(\ref{eq:H}).  Moreover, in Ref.~\cite{Sanches:2016pga} it was 
proved that this area theorem holds locally on the holographic screen:\ 
the area of any fixed spatial portion of the holographic screen, 
determined by a vector field tangent to the holographic screen and 
normal to its leaves, increases monotonically in time.  This implies 
that the dynamics of the holographic theory respects some notion 
of locality.

What is the structure of the holographic theory and how can we explore it? 
Recently, a conjecture has been made in Ref.~\cite{Sanches:2016sxy} which 
relates geometries of general spacetimes in the gravitational description 
to the entanglement entropies of states in the holographic theory. 
This extends the analogous theorem/conjecture in the AdS/CFT 
context~\cite{Ryu:2006bv,Ryu:2006ef,Hubeny:2007xt} to more general 
cases, allowing us to probe the holographic description of general 
spacetimes, including those that do not have an obvious spacetime 
boundary on which the holographic theory can live.  In particular, 
Ref.~\cite{Sanches:2016sxy} proved that for a given region $\Gamma$ 
of a leaf $\sigma$, a codimension-2 extremal surface $E(\Gamma)$ 
anchored to the boundary $\partial \Gamma$ of $\Gamma$ is fully 
contained in the causal region $D_\sigma$ of $\sigma$:
\begin{equation}
  D_\sigma:\,\, \mbox{the domain of dependence of an interior achronal 
    hypersurface whose only boundary is $\sigma$},
\label{eq:D_sigma}
\end{equation}
where the concept of the interior is defined so that a vector on $\sigma$ 
pointing toward the interior takes the form $c_1 k^a - c_2 l^a$ with 
$c_1, c_2 > 0$ (see Fig.~\ref{fig:def}).  This implies that the normalized 
area of the extremal surface $E(\Gamma)$
\begin{equation}
  S(\Gamma) = \frac{1}{4} \norm{E(\Gamma)},
\label{eq:S_Sigma}
\end{equation}
satisfies expected properties of entanglement entropy, such as strong 
subadditivity, so that it can be identified with the entanglement entropy 
of the region $\Gamma$ in the holographic theory.  Here, $\norm{x}$ 
represents the area of $x$.  If there are multiple extremal surfaces 
in $D_\sigma$ for a given $\Gamma$, then we must take the one with the 
minimal area.

In the rest of the paper, we study the holographic theory of quantum 
gravity for general spacetimes, adopting the framework described in 
this section.  We first analyze FRW spacetimes and then discuss lessons 
learned from that analysis later.

\section{Holographic Description of FRW Universes}
\label{sec:FRW}

In this section, we study the putative holographic description of 
$(3+1)$-dimensional FRW cosmological spacetimes:
\begin{equation}
  ds^2 = -dt^2 + a^2(t) \left[ \frac{dr^2}{1-\kappa r^2} 
    + r^2 (d\psi^2 + \sin^2\!\psi\, d\phi^2) \right],
\label{eq:FRW-metric}
\end{equation}
where $a(t)$ is the scale factor, and $\kappa < 0$, $= 0$ and $> 0$ for 
open, flat and closed universes, respectively.  The Friedmann equation 
is given by
\begin{equation}
  \left(\frac{\dot{a}}{a}\right)^2 + \frac{\kappa}{a^2} 
  = \frac{8\pi}{3}\rho,
\label{eq:Friedmann-eq}
\end{equation}
where the dot represents $t$ derivative.  Here, we include the energy 
density from the cosmological constant as a component in $\rho$ having 
the equation of state parameter $w = -1$.

As discussed in the previous section, we describe the system as viewed 
from a reference frame whose origin follows a timelike curve $p(\tau)$, 
which we choose to be at $r = 0$.  The holographic theory then lives 
on the holographic screen, an equal-time slice of which is an apparent 
horizon:\ a codimension-2 surface on which the expansion of the 
light rays emanating from $p(\tau)$ for a fixed $\tau$ vanishes. 
Under generic conditions, this horizon is always at a finite distance
\begin{equation}
  r = \frac{1}{\sqrt{\dot{a}^2(t_*) + \kappa}} 
    \equiv r_{\rm AH}(t_*) < \infty,
\label{eq:r_AH}
\end{equation}
where $t_*$ is the FRW time on the horizon.  Note that the symmetry of 
the setup makes the FRW time the same everywhere on the apparent horizon, 
and for an open universe, $\dot{a}(t_*) > \sqrt{-\kappa}$ is satisfied 
for values of $\tau$ before $p(\tau)$ hits the big crunch.  For flat 
and open universes, we find that this surface is always marginally 
anti-trapped, i.e.\ a leaf of a past holographic screen, as long as 
the universe is initially expanding.  On the other hand, for a closed 
universe the surface can change from marginally anti-trapped to marginally 
trapped as $\tau$ increases, implying that the holographic screen may be 
a past holographic screen only until certain time $\tau$.  In this section, 
we focus our attention on initially expanding flat and open universes. 
Closed universes will be discussed in Section~\ref{sec:discuss}.

Below, we study entanglement entropies for subregions in the holographic 
theory---screen entanglement entropies---adopting the conjecture of 
Ref.~\cite{Sanches:2016sxy}.  Here we focus on studying the properties 
of these entropies, leaving their detailed interpretation for later. 
We first discuss ``stationary'' aspects of screen entanglement entropies, 
concentrating on states representing spacetime in which the expansion 
of the universe is dominated by a single component in the Friedmann 
equation.  We study how screen entanglement entropies encode the 
information about the spacetime the state represents.  We then analyze 
dynamics of screen entanglement entropies during a transition period 
in which the dominant component changes.  Implications of these results 
in the broader context of the holographic description of quantum gravity 
will be discussed in the next section.

\subsection{Holographic dictionary for FRW universes}
\label{subsec:single}

Consider a Hilbert space ${\cal H}_B$ spanned by a set of quantum states 
living in the same codimension-2 boundary surface $B$.  As mentioned 
in footnote~\ref{ft:def-B}, the definition of the boundary surface 
being the same has an ambiguity.  For our analysis of states representing 
FRW spacetimes, we take the boundary $B$ to be specified by its area 
${\cal A}_B$ (within some precision $\delta {\cal A}_B$ that is not 
exponentially small in ${\cal A}_B$).  In this subsection, we focus 
on a single Hilbert space ${\cal H}_* \in \{ {\cal H}_B \}$ specified 
by a fixed (though arbitrary) boundary area ${\cal A}_*$.

Consider FRW universes with $\kappa \leq 0$ having vacuum energy 
$\rho_\Lambda$ and filled with varying ideal fluid components.%
\footnote{The $\rho_\Lambda$ here represents the energy density of 
 a (local) minimum of the potential near which fields in the FRW 
 universe in question take values.  In fact, string theory suggests 
 that there is no absolutely stable de~Sitter vacuum in full quantum 
 gravity; it must decay, at least, before the Poincar\'{e} recurrence 
 time~\cite{Kachru:2003aw}.}
For every universe with
\begin{equation}
  \rho_\Lambda < \frac{3}{2 {\cal A}_*},
\label{eq:A*-cond}
\end{equation}
there is an FRW time $t_*$ at which the area of the leaf of the past 
holographic screen is ${\cal A}_*$; see Fig.~\ref{fig:A-t}. 
\begin{figure}[t]
\begin{center}
  \includegraphics[height=6.5cm]{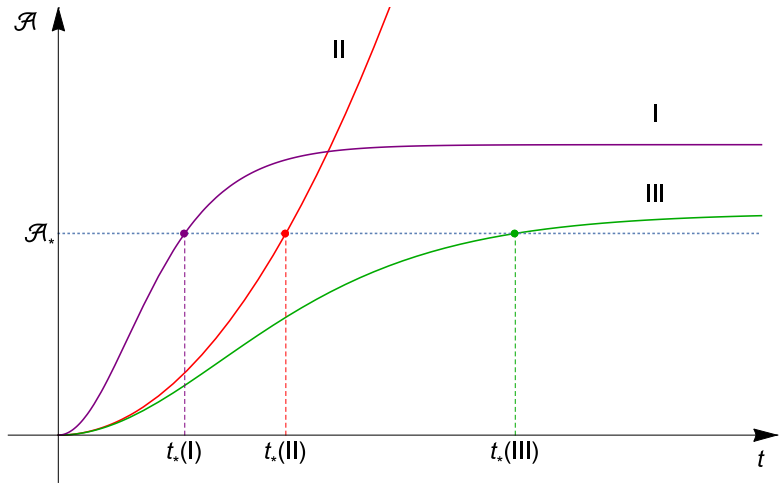}
\end{center}
\caption{Various FRW universes, ${\rm I}, {\rm II}, {\rm III}, \cdots$, 
 have the same boundary area ${\cal A}_*$ at different times, $t_*({\rm I}), 
 t_*({\rm II}), t_*({\rm III}), \cdots$.  Quantum states representing 
 universes at these moments belong to Hilbert space ${\cal H}_*$ specified 
 by the value of the boundary area.}
\label{fig:A-t}
\end{figure}
This is because the area of the leaf of the past holographic screen 
is monotonically increasing~\cite{Bousso:2015mqa}, and the final 
(asymptotic) value of the area is given by
\begin{equation}
  {\cal A}_\infty = \left\{ 
  \begin{array}{ll}
    \frac{3}{2 \rho_\Lambda}, & \mbox{ for } \rho_\Lambda > 0,\\
    +\infty, & \mbox{ for } \rho_\Lambda \leq 0.
  \end{array} \right.
\label{eq:A_infty}
\end{equation}
Any quantum state representing the system at any such moment is an element 
of ${\cal H}_*$.  A question is what features of the holographic state 
encode information about the universe it represents.

To study this problem, we perform the following analysis.  First, given 
an FRW universe specified by the history of the energy density of the 
universe, $\rho(t)$, we determine the FRW time $t_*$ at which the apparent 
horizon $\sigma_*$, identified as a leaf of the past holographic screen, 
has the area ${\cal A}_*$:
\begin{equation}
  \left\{ \begin{array}{l}
    \rho(t) \\ {\cal A}_*
  \end{array} \right.
\rightarrow\,\,
  t_*,
\label{eq:inp-outp}
\end{equation}
where we assume Eq.~(\ref{eq:A*-cond}).  We then consider a spherical 
cap region of the leaf $\sigma_*$ specified by an angle $\gamma$ 
($0 \leq \gamma \leq \pi$):
\begin{equation}
  L(\gamma):\ \,\, 
  t = t_*, \quad r = r_{\rm AH}(t_*), \quad 0 \leq \psi \leq \gamma,
\label{eq:L_gamma}
\end{equation}
where $r_{\rm AH}(t_*)$ is given by Eq.~(\ref{eq:r_AH}) (see 
Fig.~\ref{fig:cap}), and determine the extremal surface $E(\gamma)$ 
which is codimension-2 in spacetime, anchored on the boundary 
of $L(\gamma)$, and fully contained inside the causal region 
$D_{\sigma_*}$ associated with $\sigma_*$.
\begin{figure}[t]
\begin{center}
  \includegraphics[height=6cm]{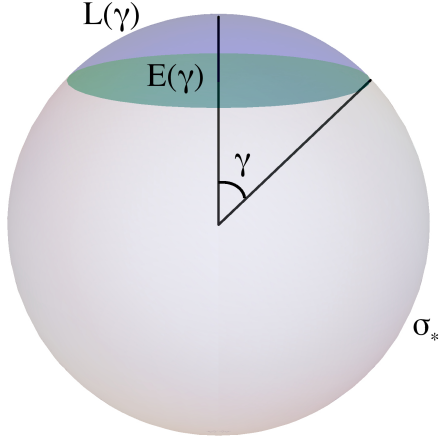}
\end{center}
\caption{A region $L(\gamma)$ of the leaf $\sigma_*$ is parameterized 
 by an angle $\gamma: [0, \pi]$.  The extremal surface $E(\gamma)$ 
 anchored to its boundary, $\partial L(\gamma)$, is also depicted 
 schematically.  (In fact, $E(\gamma)$ bulges into the time direction.)}
\label{fig:cap}
\end{figure}
According to Ref.~\cite{Sanches:2016sxy}, we interpret the quantity
\begin{equation}
  S(\gamma) = \frac{1}{4} \norm{E(\gamma)},
\label{eq:S_gamma}
\end{equation}
to represent von~Neumann entropy of the holographic state representing 
the region $L(\gamma)$, obtained after tracing out the complementary 
region on $\sigma_*$.

To determine the extremal surface $E(\gamma)$, it is useful to introduce 
cylindrical coordinates
\begin{equation}
  \xi = r \sin\psi,
\qquad
  z = r \cos\psi.
\label{eq:cylind}
\end{equation}
We find that the isometry of the FRW metric, Eq.~(\ref{eq:FRW-metric}), 
allows us to move the boundary on which the extremal surface is anchored, 
$\partial L(\gamma)$, on the $z=0$ plane:
\begin{equation}
  \partial L(\gamma):\ \,\, 
  t = t_*, \quad 
  \xi = r_{\rm AH}(t_*) \sin\gamma \equiv \xi_{\rm AH}, 
  \quad z = 0.
\label{eq:partial-L}
\end{equation}
The surface to be extremized is then parameterized by functions $t(\xi)$ 
and $z(\xi)$ with the boundary conditions
\begin{equation}
  t(\xi_{\rm AH}) = t_*,
\qquad
  z(\xi_{\rm AH}) = 0,
\label{eq:bc}
\end{equation}
and the area functional to be extremized is given by
\begin{equation}
  2\pi \int_0^{\xi_{\rm AH}} a(t)\, \xi\, 
    \sqrt{-\biggl( \frac{dt}{d\xi} \biggr)^2
    + \frac{a^2(t)}{1-\kappa(\xi^2 + z^2)} 
    \biggl\{ (1-\kappa z^2)
      + (1-\kappa \xi^2) \biggl( \frac{dz}{d\xi} \biggr)^2
      + 2\kappa \xi z \frac{dz}{d\xi} \biggr\}}\, d\xi.
\label{eq:area-gen}
\end{equation}
In all the examples we study (in this and next subsections), we find 
that the extremal surface does not bulge into the $z$ direction. 
In this case, we can set $z = 0$ in Eq.~(\ref{eq:area-gen}) and find
\begin{equation}
  \norm{E(\gamma)} = \underset{t(\xi)}{\rm ext} \left[ 
    2\pi \int_0^{r_{\rm AH}(t_*) \sin\gamma}\!\! a(t)\, \xi\, 
    \sqrt{-\biggl( \frac{dt}{d\xi} \biggr)^2
      + \frac{a^2(t)}{1-\kappa \xi^2}}\, d\xi \right].
\label{eq:norm-E}
\end{equation}

The analysis described above is greatly simplified if the expansion 
of the universe is determined by a single component in the Friedmann 
equation, i.e.\ a single fluid component with the equation of state 
parameter $w$ or negative spacetime curvature.  We thus focus on 
the case in which the expansion is dominated by a single component 
in (most of) the region probed by the extremal surfaces.  In realistic 
FRW universes this holds for almost all $t$, except for a few Hubble 
times around when the dominant component changes from one to another. 
Discussion about a transition period in which the dominant component 
changes will be given in the next subsection.

\subsubsection*{A flat FRW universe filled with a single fluid component}

Suppose the expansion of the universe is determined dominantly by 
a single ideal fluid component with $w$.  The scale factor is then 
given by
\begin{equation}
  a(t) = c\, t^{\frac{2}{3(1+w)}},
\label{eq:at-single}
\end{equation}
where $c$ is a constant, and the metric in the region $r \leq r_{\rm AH}$ 
takes the form
\begin{equation}
  ds^2 = -dt^2 + c^2\, t^{\frac{4}{3(1+w)}} \left[ dr^2 
    + r^2 (d\psi^2 + \sin^2\!\psi\, d\phi^2) \right],
\label{eq:FRW-single}
\end{equation}
where we have used the fact that $|\kappa\, r_{\rm AH}^2| \ll 1$. 
In this case, we find that the ${\cal A}_*$ dependence of screen 
entanglement entropy $S_\Gamma$ for an arbitrarily shaped region 
$\Gamma$ on $\sigma_*$---specified as a region on the $\psi$-$\phi$ 
plane---is given by
\begin{equation}
  S_\Gamma = \tilde{S}_\Gamma {\cal A}_*,
\label{eq:S-gamma}
\end{equation}
where $\tilde{S}_\Gamma$ does not depend on ${\cal A}_*$.  This can be 
seen in the following way.

Consider the causal region $D_{\sigma_*}$ associated with $\sigma_*$. 
For certain values of $w$ (i.e.\ $w \geq 1/3$), $D_{\sigma_*}$ hits the 
big bang singularity.  It is thus more convenient to discuss the ``upper 
half'' of the region:
\begin{equation}
  D^+_{\sigma_*} = \{p \in D_{\sigma_*} \:|\: t(p) \geq t_* \}.
\label{eq:D+sigma}
\end{equation}
In an expanding universe, the extremal surface anchored on the boundary 
of a region $\Gamma$ on $\sigma_*$ is fully contained in this region. 
Now, by performing $t_*$-dependent coordinate transformation
\begin{align}
  \rho &= \frac{2}{3(1+w)} c\, t_*^{-\frac{1+3w}{3(1+w)}} r,
\label{eq:rho}\\
  \eta &= \frac{2}{3(1+w)} \left[ 
    \left( \frac{t}{t_*} \right)^{\frac{1+3w}{3(1+w)}} - 1 \right],
\label{eq:eta}
\end{align}
the region $D^+_{\sigma_*}$ is mapped into
\begin{equation}
  0 \leq \eta \leq 1,
\qquad
  0 \leq \rho \leq 1 - \eta,
\label{eq:D+sigma-new}
\end{equation}
and the metric in $D^+_{\sigma_*}$ is given by
\begin{equation}
  ds^2\big|_{D^+_{\sigma_*}} = \frac{{\cal A}_*}{4\pi} 
    \left( \frac{1+3w}{2}\eta + 1 \right)^{\frac{4}{1+3w}} 
    \left[ -d\eta^2 + d\rho^2 
      + \rho^2 (d\psi^2 + \sin^2\!\psi\, d\phi^2) \right],
\label{eq:met_scal}
\end{equation}
where
\begin{equation}
  {\cal A}_* = 9\pi (1+w)^2 t_*^2.
\label{eq:A_*}
\end{equation}
Since ${\cal A}_*$ appears only as an overall factor of the metric in 
Eqs.~(\ref{eq:D+sigma-new},~\ref{eq:met_scal}), we conclude that the 
${\cal A}_*$ dependence of $S_\Gamma \propto \norm{E_\Gamma}$ is only 
through an overall proportionality factor, as in Eq.~(\ref{eq:S-gamma}).

Due to the scaling in Eq.~(\ref{eq:S-gamma}), it is useful to consider 
an object obtained by dividing $S_\Gamma$ by a quantity that is also 
proportional to ${\cal A}_*$.  We find it convenient to define the quantity
\begin{equation}
  Q_\Gamma \equiv \frac{S_\Gamma}{V_\Gamma/4},
\label{eq:Q_Gamma}
\end{equation}
where $V_\Gamma$ is the (2-dimensional) ``volume'' of the region 
$\Gamma$ or its complement $\bar{\Gamma}$ on the boundary surface 
$\sigma_*$, whichever is smaller.  This quantity is independent 
of ${\cal A}_*$, and hence $t_*$.  For the spherical region of 
Eq.~(\ref{eq:L_gamma}), we find
\begin{equation}
  Q(\gamma) = \frac{S(\gamma)}{V(\gamma)/4} 
    = \frac{\norm{E(\gamma)}}{V(\gamma)},
\label{eq:Q-gamma}
\end{equation}
where
\begin{equation}
  V(\gamma) = \frac{1}{2} \Bigl\{ 1 - {\rm sgn}\Bigl(\frac{\pi}{2} 
    - \gamma\Bigr) \cos\gamma \Bigr\} {\cal A}_*.
\label{eq:V-gamma}
\end{equation}
An explicit expression for $Q(\gamma)$ is given by
\begin{equation}
  Q(\gamma) = \frac{1}{1 - {\rm sgn}(\frac{\pi}{2}-\gamma) \cos\gamma}\,\, 
    \underset{f(x)}{\rm ext} \left[ \int_0^{\sin\gamma}\!\! x\, 
    f^{\frac{4}{1+3w}} \sqrt{1 - \Bigl(\frac{2}{1+3w}\Bigr)^2 
    \Bigl(\frac{df}{dx}\Bigr)^2}\, dx \right],
\label{eq:Q-scaling}
\end{equation}
where the extremization with respect to function $f(x)$ is performed with 
the boundary condition
\begin{equation}
  f(\sin\gamma) = 1,
\label{eq:f-bc}
\end{equation}
and we have used the fact that the extremal surface does not bulge into 
the $z$ direction in the cylindrical coordinates of Eq.~(\ref{eq:cylind}). 
From the point of view of the holographic theory, $Q_\Gamma$ represents 
the amount of entanglement entropy per degree of freedom as viewed 
from the smaller of $\Gamma$ and $\bar{\Gamma}$.  As we will discuss 
in Section~\ref{subsec:nonlocal}, the fact that this is a physically 
significant quantity has important implications for the structure of 
the holographic theory.

\begin{figure}[t]
\begin{center}
  \includegraphics[height=6.5cm]{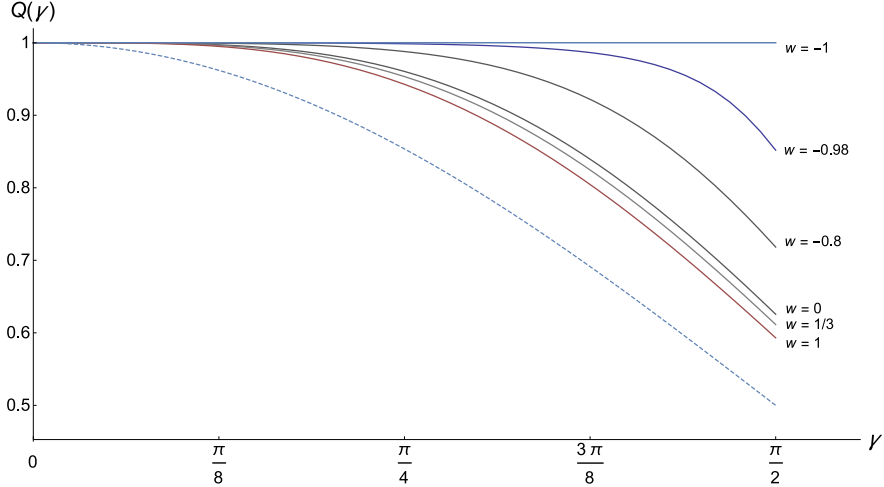}
\end{center}
\caption{The value of $Q(\gamma)$ as a function of $\gamma$ ($0 \leq 
 \gamma \leq \pi/2$) for $w = -1$ (vacuum energy), $-0.98$, $-0.8$, 
 $0$ (matter), $1/3$ (radiation), and $1$.  The dotted line indicates 
 the lower bound given by the flat space geometry, which can be realized 
 in a curvature dominated open FRW universe.}
\label{fig:Q-gamma}
\end{figure}
In Fig.~\ref{fig:Q-gamma}, we plot $Q(\gamma)$ as a function of $\gamma$ 
($0 \leq \gamma \leq \pi/2$) for various values of $w$:\ $-1$ (vacuum 
energy), $-0.98$, $-0.8$, $0$ (matter), $1/3$ (radiation), and $1$. 
The value of $Q(\gamma)$ for $\pi/2 \leq \gamma \leq \pi$ is given by 
$Q(\gamma) = Q(\pi - \gamma)$.  We find the following features:
\begin{itemize}
\item
In the limit of a small boundary region, $\gamma \ll 1$, the value of 
$Q(\gamma)$ approaches unity regardless of the value of $w$:
\begin{equation}
  Q_w(\gamma) \xrightarrow{\gamma \ll 1} 1.
\label{eq:Q-gamma-small}
\end{equation}
This implies that for a small boundary region, the entanglement entropy 
of the region is given by its volume in the holographic theory in 
Planck units:
\begin{equation}
  S_w(\gamma) \xrightarrow{\gamma \ll 1} \frac{1}{4} V(\gamma).
\label{eq:S-volume}
\end{equation}
For larger $\gamma$ ($\leq \pi/2$), $Q(\gamma)$ becomes monotonically 
small as $\gamma$ increases:
\begin{equation}
  \frac{d}{d\gamma} Q_w(\gamma) < 0.
\label{eq:Q-gamma-der}
\end{equation}
The deviation of $Q(\gamma)$ from $1$ near $\gamma = 0$ is given by
\begin{equation}
  Q_w(\gamma) \stackrel{\gamma \ll 1}{=} 1 - c\, (1+w) \gamma^4 + \cdots,
\label{eq:Q-small-gamma}
\end{equation}
where $c > 0$ is a constant that does not depend on $w$.
\item
For any fixed boundary region, $\gamma$, the value of $Q(\gamma)$ decreases 
monotonically in $w$:
\begin{equation}
  \frac{d}{dw} Q_w(\gamma) < 0.
\label{eq:Q-w-der}
\end{equation}
In particular, when $w$ approaches $-1$ (from above), $Q(\gamma)$ 
becomes unity:
\begin{equation}
  \lim_{w \rightarrow -1} Q_w(\gamma) = 1.
\label{eq:Q-w_-1}
\end{equation}
This implies that in the limit of de~Sitter FRW ($w \rightarrow -1$), 
the state in the holographic theory becomes ``randomly entangled'' 
(i.e.\ saturates the Page curve~\cite{Page:1993df}):%
\footnote{In the case of an exactly single component with $w = -1$, 
 the expansion of light rays emanating from $p_0$, i.e.\ $\theta_k$, 
 becomes $0$ only at infinite affine parameter $\lambda$.  We view this 
 as a result of mathematical idealization.  A realistic de~Sitter FRW 
 universe is obtained by introducing an infinitesimally small amount 
 of matter in addition to the $w=-1$ component, which avoids the above 
 issue.  The results obtained in this way agree with those by first 
 taking $w > -1$ and then the limit $w \rightarrow -1$.}
\begin{equation}
  \lim_{w \rightarrow -1} S_w(\gamma) = \frac{1}{4} V(\gamma).
\label{eq:S-w_-1}
\end{equation}
Note that $V(\gamma)$ is the smaller of the volume of $L(\gamma)$ and 
that of its complement on the leaf.  The value of $Q(\pi/2)$ (the case 
in which $L(\gamma)$ is a half of the leaf) is plotted as a function 
of $w$ in Fig.~\ref{fig:Q-w}.
\end{itemize}
We will discuss further implications of these findings in 
Section~\ref{sec:beyond}.
\begin{figure}[t]
\begin{center}
  \includegraphics[height=6.5cm]{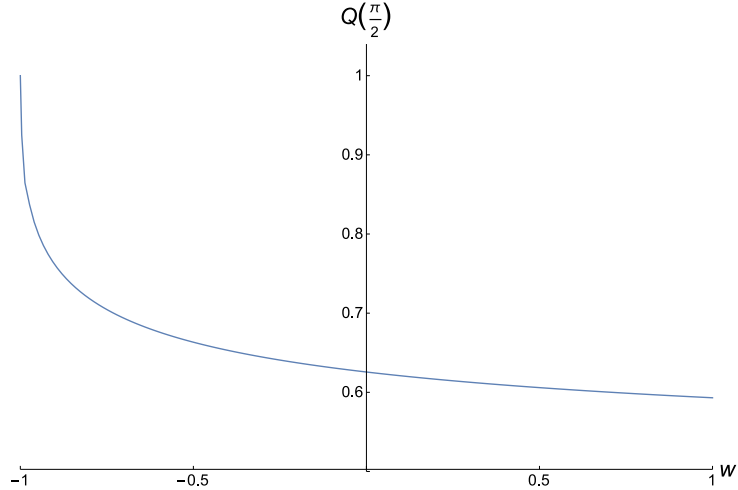}
\end{center}
\caption{The value of $Q(\pi/2)$ as a function of $w$.}
\label{fig:Q-w}
\end{figure}

We note that there are simple geometric bounds on the values 
of $Q_w(\gamma)$.  This can be seen by adopting the maximin 
construction~\cite{Sanches:2016sxy,Wall:2012uf}:\ the extremal surface 
is the one having the maximal area among all possible codimension-2 
surfaces each of which is anchored on $\partial L(\gamma)$ and has 
minimal area on some interior achronal hypersurface bounded by $\sigma$. 
This implies that the area of the extremal surface, $\norm{E(\gamma)}$, 
cannot be larger than the boundary volume $V(\gamma)$, giving 
$Q(\gamma) \leq 1$.  Also, the extremal surface cannot have a 
smaller area than the codimension-2 surface that has the minimal 
area on a constant time hypersurface $t=t_*$:\ $\norm{E(\gamma)} 
\geq \pi \{ a(t_*) r_{\rm AH}(t_*) \sin\gamma \}^2$.  Together, 
we obtain
\begin{equation}
  \frac{\sin^2\!\gamma}{2\bigl\{ 1 - {\rm sgn}(\frac{\pi}{2}-\gamma) 
    \cos\gamma \bigr\}} \leq Q_w(\gamma) \leq 1.
\label{eq:Q-bounds}
\end{equation}
The lower edge of this range is depicted by the dashed line 
in Fig.~\ref{fig:Q-gamma}.  We find that the upper bound of 
Eq.~(\ref{eq:Q-bounds}) can be saturated with $w \rightarrow -1$, 
while the lower bound cannot with $|w| \leq 1$.  If we formally 
take $w \rightarrow +\infty$, the lower bound can be reached.  A 
fluid with $w > 1$, however, does not satisfy the causal energy 
condition (although it satisfies the null energy condition), so 
we do not consider such a component.

\begin{figure}[t]
\begin{center}
  \includegraphics[height=6.5cm]{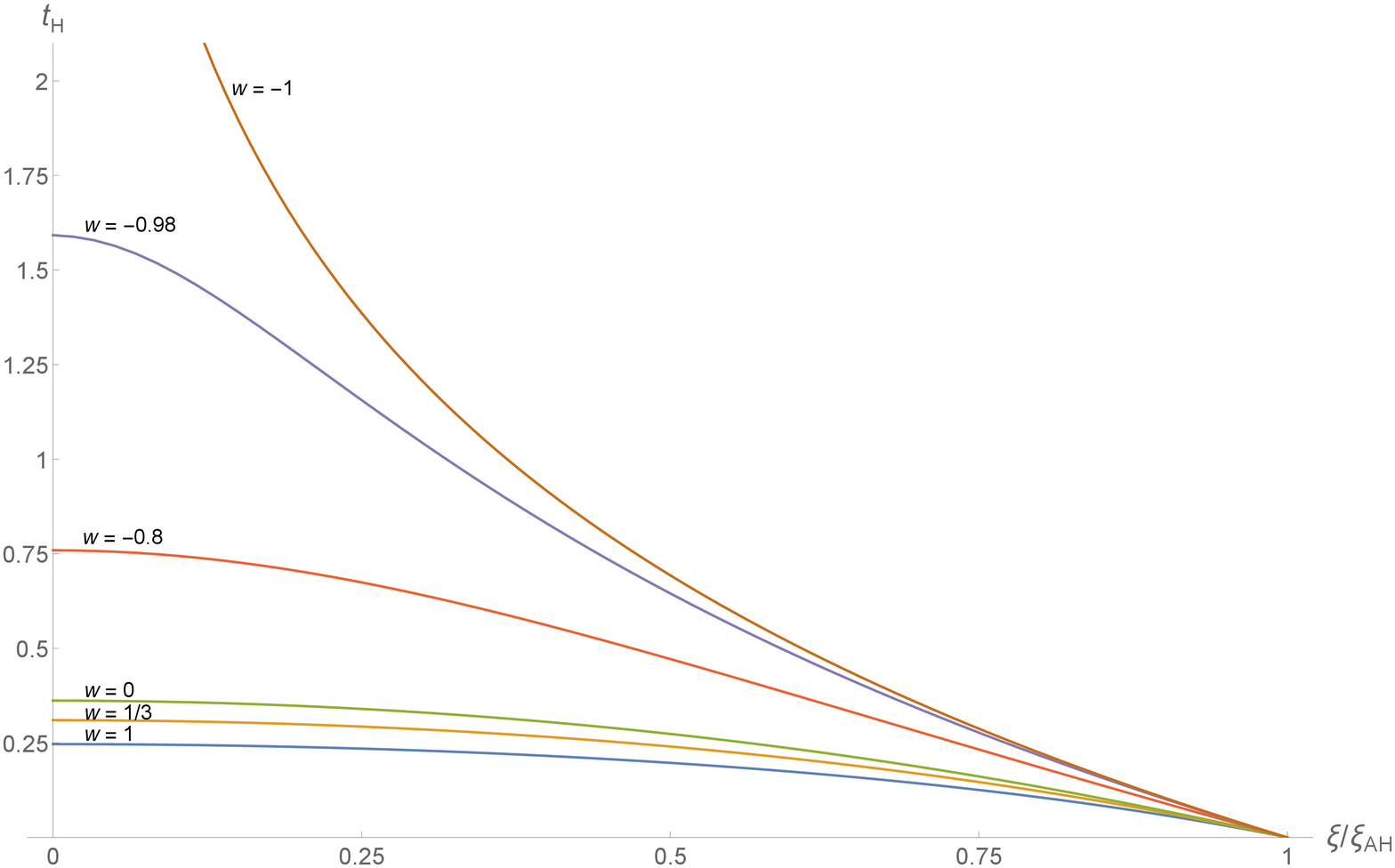}
\end{center}
\caption{The shape of the extremal surfaces $E(\pi/2)$ for $w = -1$, 
 $-0.98$, $-0.8$, $0$, $1/3$, and $1$.  The horizontal axis is the 
 cylindrical radial coordinate normalized by the apparent horizon 
 radius, $\xi/\xi_{\rm AH}$, and the vertical axis is the Hubble 
 time, $t_{\rm H}$.}
\label{fig:ext-surf}
\end{figure}
As a final remark, we show in Fig.~\ref{fig:ext-surf} the shape of 
the extremal surface for $\gamma = \pi/2$ for the same values of $w$ 
as in Fig.~\ref{fig:Q-gamma}:\ $-1$, $-0.98$, $-0.8$, $0$, $1/3$, 
and $1$.  The horizontal axis is the cylindrical radial coordinate 
normalized by the apparent horizon radius, $\xi/\xi_{\rm AH}$, and 
the vertical axis is taken to be the Hubble time defined by
\begin{equation}
  t_{\rm H} = \int_{t_*}^t \frac{\dot{a}(t)}{a(t)}\, dt 
  = \frac{2}{3(1+w)} \ln \frac{t}{t_*},
\label{eq:t_H}
\end{equation}
which reduces in the $w \rightarrow -1$ limit to the usual Hubble time 
$t_{\rm H} = H (t - t_*)$, where $H = \dot{a}/a$.  We find that the 
extremal surface bulges into the future direction for any $w$.  In fact, 
this occurs generally in an expanding universe and can be understood from 
the maximin construction:\ the scale factor increases toward the future, 
so that the area of the minimal area surface on an achronal hypersurface 
increases when the hypersurface bulges into the future direction in 
time.  The amount of the bulge is $t_{\rm H} \approx O(1)$, except when 
$w \approx -1$.  For $w \rightarrow -1$, the extremal surface probes 
$t_H \rightarrow +\infty$ as $\xi/\xi_{\rm AH} \rightarrow +0$, but 
its area is still finite, $\norm{E(\pi/2)} \rightarrow {\cal A}_*/2$, 
as the surface becomes almost null in this limit.

\subsubsection*{An open FRW universe dominated by curvature}

We now consider an open FRW universe dominated by curvature, i.e.\ the 
case in which the expansion of the universe is determined by the second 
term in the left-hand side of Eq.~(\ref{eq:Friedmann-eq}).  This implies 
that the distance to the apparent horizon is much larger than the 
curvature length scale
\begin{equation}
  \frac{-\kappa}{a^2(t)} \gg \frac{8\pi}{3} \rho(t)
\quad\Longleftrightarrow\quad
  r_{\rm AH}(t) \gg \frac{1}{\sqrt{-\kappa}} \equiv r_{\rm curv}.
\label{eq:curv-dom}
\end{equation}
(Note that $\kappa < 0$ for an open universe.)  As seen in 
Eqs.~(\ref{eq:Friedmann-eq},~\ref{eq:r_AH}), the value of $r_{\rm AH}(t)$ 
is determined by $\rho(t)$, which gives only a minor contribution to the 
expansion of the universe.  The scale factor is given by
\begin{equation}
  a(t) = \sqrt{-\kappa}\, t.
\label{eq:at-curv}
\end{equation}

The extremal surface can be found easily by noticing that the universe 
in this limit is a hyperbolic foliation of a portion of the Minkowski 
space:\ the coordinate transformation
\begin{align}
  \tilde{t} &= t \sqrt{1+ \bigl(\sqrt{-\kappa}\,r\bigr)^2},
\\
  \tilde{r} &= \sqrt{-\kappa}\, t\, r,
\end{align}
leads to the Minkowski metric $ds^2 = -d\tilde{t}^2 + d\tilde{r}^2 
+ \tilde{r}^2 (d\psi^2 + \sin^2\!\psi\, d\phi^2)$.  The extremal surface 
is thus a plane on a constant $\tilde{t}$ hypersurface, which in the 
FRW (cylindrical) coordinates is given by
\begin{equation}
  t_{\rm H} \approx \ln\frac{1}{\xi/\xi_{\rm AH}} 
\qquad
  (0 \leq \xi/\xi_{\rm AH} \leq 1),
\label{eq:ext-AdS}
\end{equation}
where $\xi_{\rm AH} = r_{\rm AH}(t_*) \sin\gamma$, and $t_{\rm H}$ is 
the Hubble time
\begin{equation}
  t_{\rm H} = \int_{t_*}^t \frac{\dot{a}(t)}{a(t)}\, dt 
  = \ln \frac{t}{t_*}.
\label{eq:t_H-AdS}
\end{equation}
The resulting $Q(\gamma)$ is
\begin{equation}
  Q(\gamma) \approx \frac{\sin^2\!\gamma}{2\bigl\{ 1 
    - {\rm sgn}(\frac{\pi}{2}-\gamma) \cos\gamma \bigr\}}.
\label{eq:Q-AdS-sol}
\end{equation}
This, in fact, saturates the lower bound in Eq.~(\ref{eq:Q-bounds}), 
plotted as the dashed line in Fig.~\ref{fig:Q-gamma}.

\subsection{Dynamics of screen entanglement entropies in a transition}
\label{subsec:transit}

Let us consider the evolution of an FRW universe.  From the holographic 
theory point of view, it is described by a time-dependent state 
$\ket{\Psi(\tau)}$ living on $\sigma(\tau)$.  Because of the area 
theorem of Refs.~\cite{Bousso:2015mqa,Bousso:2015qqa}, we can take 
$\tau$ to be a monotonic function of the leaf area, leading to
\begin{equation}
  \frac{d}{d\tau} {\cal A}(\tau) > 0,
\label{eq:A-dot}
\end{equation}
where ${\cal A}(\tau) \equiv \norm{\sigma(\tau)}$.  This evolution 
involves a change in the number of (effective) degrees of freedom, 
${\cal A}(\tau)/4$, as well as that of the structure of entanglement 
on the boundary, $Q_\Gamma(\tau)$.  For the latter, we mostly consider 
$Q(\gamma, \tau)$ associated with a spherical cap region $\Gamma 
= L(\gamma)$.  A natural question is if a statement similar to 
Eq.~(\ref{eq:A-dot}) applies for screen entanglement entropies:
\begin{equation}
  \frac{d}{d\tau} S(\gamma, \tau) \stackrel{?}{>} 0.
\label{eq:S-dot}
\end{equation}
Here,
\begin{equation}
  S(\gamma,\tau) = Q(\gamma,\tau) \frac{V(\gamma,\tau)}{4},
\label{eq:S_tau}
\end{equation}
with
\begin{equation}
  V(\gamma,\tau) = \frac{1}{2} \Bigl\{ 1 - {\rm sgn}\Bigl(\frac{\pi}{2} 
    - \gamma\Bigr) \cos\gamma \Bigr\} {\cal A}(\tau),
\label{eq:V_tau}
\end{equation}
being the smaller of the boundary volumes of $L(\gamma)$ and its 
complement.

There are some cases in which we can show that the relation in 
Eq.~(\ref{eq:S-dot}) is indeed satisfied.  Consider, for example, 
a flat FRW universe filled with various fluid components having 
differing equations of states:\ $w_i$ ($i = 1,2,\cdots$).  As time 
passes, the dominant component of the universe changes from one 
having larger $w$ to one having smaller $w$ successively.  This 
implies that $Q(\gamma, \tau)$ monotonically increases in time, so 
that Eq.~(\ref{eq:A-dot}) indeed implies Eq.~(\ref{eq:S-dot}) in this 
case.  Another interesting case is when the holographic screen is 
spacelike.  In this case, we can prove that the time dependence of 
$S(\gamma, \tau)$ is monotonic; see Appendix~\ref{app:spacelike}. 
In particular, if we have a spacelike past holographic screen (which 
occurs for $w > 1/3$ in a single-component dominated flat FRW universe), 
then the screen entanglement entropy for an arbitrary region increases 
in time:\ $dS_\Gamma(\tau)/d\tau > 0$.

What happens if the holographic screen is timelike?  One might 
think that there is an obvious argument against the inequality in 
Eq.~(\ref{eq:S-dot}).  Suppose the expansion of the early universe 
is dominated by a fluid component with $w$.  Suppose at some FRW 
time $t_0$ this component is converted into another fluid component 
having a different equation of state parameter $w'$, e.g.\ by 
reheating associated with the decay of a scalar condensate.  If 
$w' > w$, then the $Q$ value after the transition is smaller than 
that before
\begin{equation}
  Q_{w'}(\gamma) - Q_w(\gamma) < 0.
\label{eq:Delta-Q}
\end{equation}
One may think that this can easily overpower the increase of $S(\gamma, 
\tau)$ from the increase of the area:\ $d{\cal A}(\gamma, \tau)/d\tau 
> 0$~\cite{Sanches:2016pga}.  In particular, if $w$ is close to $-1$, 
then the increase of the area before the transition is very slow, so 
that the effect of Eq.~(\ref{eq:Delta-Q}) would win over that of the 
area increase.  However, as depicted in Fig.~\ref{fig:ext-surf}, when 
$w \approx -1$ the extremal surface bulges into larger $t$ by many Hubble 
times.  Hence the time between the moments in which Eq.~(\ref{eq:S_tau}) 
can be used before and after the transition becomes long, opening the 
possibility that the relevant area increase is non-negligible.

\begin{figure}[t]
\begin{center}
  \includegraphics[height=6cm]{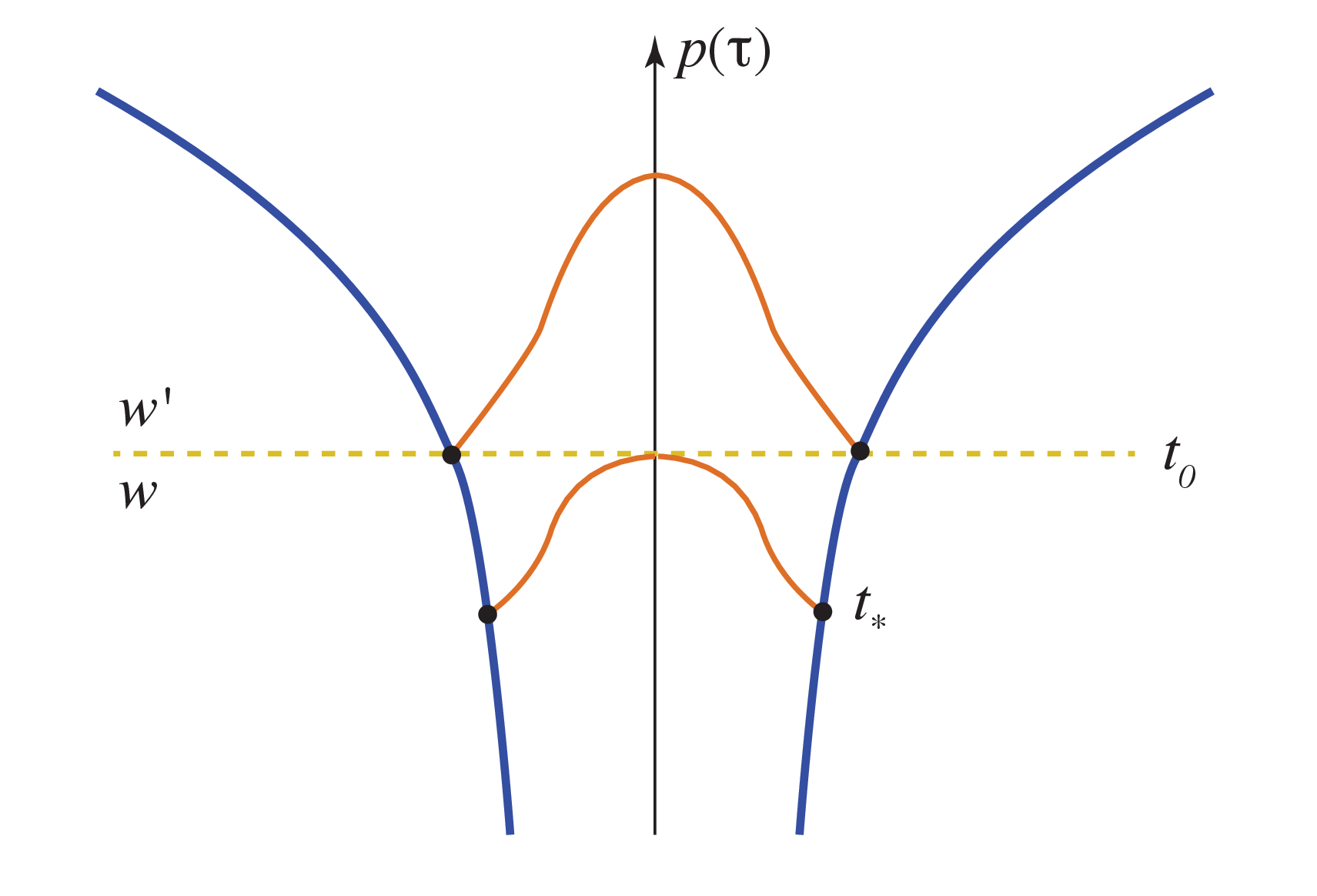}
\end{center}
\caption{An FRW universe whose dominant component changes from $w$ to $w'$ 
 at time $t_0$.  Two surfaces depicted by orange lines are the latest 
 extremal surface fully contained in the $w$ region (bottom) and the 
 earliest extremal surface fully contained in the $w'$ region (top), 
 each anchored to the leaves at $t_*$ and $t_0$.}
\label{fig:trans}
\end{figure}
To make the above discussion more explicit, let us compare the values 
of the screen entanglement entropy $S(\gamma)$ corresponding to two 
extremal surfaces depicted in Fig.~\ref{fig:trans}:\ the ``latest'' 
extremal surface that is fully contained in the $w$ region and the 
``earliest'' extremal surface fully contained in the $w'$ region, each 
anchored to the leaves at FRW times $t_*$ and $t_0$.  This provides the 
most stringent test of the inequality in Eq.~(\ref{eq:S-dot}) that can 
be performed using the expression of Eq.~(\ref{eq:S_tau}) for fixed 
$w$'s.  The ratio of the entanglement entropies is given by
\begin{equation}
  R_{w' w}(\gamma) 
  \equiv \frac{S_{\rm after}(\gamma)}{S_{\rm before}(\gamma)} 
  = \frac{Q_{w'}(\gamma)}{Q_w(\gamma)} \frac{t_0^2}{t_*^2} 
  = \frac{Q_{w'}(\gamma)}{Q_w(\gamma)}\, e^{3(1+w) t_{{\rm H},w}},
\label{eq:S-ratio}
\end{equation}
where $t_{{\rm H},w}$ is the Hubble time between $t_*$ and $t_0$, given 
by Eq.~(\ref{eq:t_H}) with $t \rightarrow t_0$.  In Fig.~\ref{fig:Delta-S}, 
we plot $R_w \equiv R_{1 w}(\pi/2)$; setting $w' = 1$ minimizes the ratio.
\begin{figure}[t]
\begin{center}
  \includegraphics[height=6.5cm]{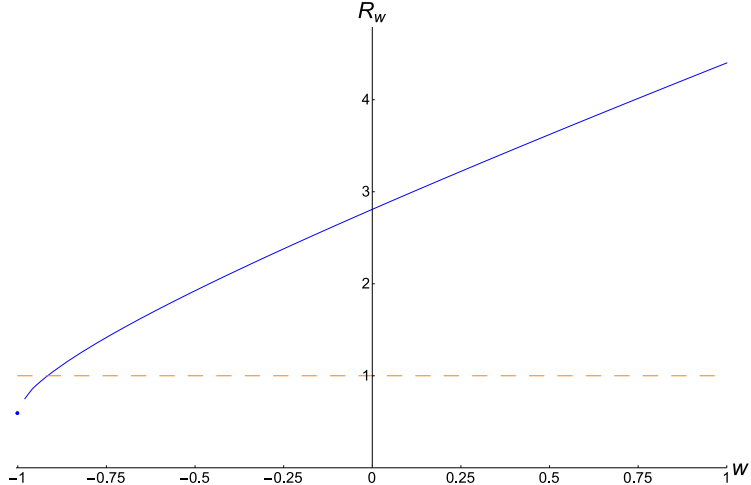}
\end{center}
\caption{The ratio of the screen entanglement entropies, $R_w = 
 R_{1 w}(\pi/2)$, before and after the transition from a universe 
 with the equation of state parameter $w$ to that with $w' = 1$, 
 obtained from Figs.~\ref{fig:Q-w} and \ref{fig:ext-surf} using 
 Eq.~(\ref{eq:S-ratio}).  The dot at $w = -1$ represents $R_{-1} 
 = R_{1 -1}(\pi/2)$ obtained in Eq.~(\ref{eq:S-ratio_-1}).}
\label{fig:Delta-S}
\end{figure}
We find that this ratio can be smaller than $1$ for $w \approx -1$.  In 
fact, for $w \rightarrow -1$ we find the value obtained naively by assuming 
that the area does not change before the transition:
\begin{equation}
  R_{-1} 
  = \frac{Q_{1}\bigl(\frac{\pi}{2}\bigr)}{Q_{-1}\bigl(\frac{\pi}{2}\bigr)} 
  = Q_{1}\Bigl(\frac{\pi}{2}\Bigr),
\label{eq:S-ratio_-1}
\end{equation}
although for $w = -1$ there is no such thing as the latest extremal 
surface that is fully contained in the region before the transition 
(since $t_{{\rm H},-1} = +\infty$).

This analysis suggests that screen entanglement entropies can in fact 
drop if the system experiences a rapid transition induced by some 
dynamics,%
\footnote{This does not mean that the second law of thermodynamics 
 is violated. The entropy discussed here is the von~Neumann entropy 
 of a significant portion (half) of the whole system, which can 
 deviate from the thermodynamic entropy of the region when the 
 system experiences a rapid change.}
although the instantaneous transition approximation adopted above 
is not fully realistic.  Of course, such a drop is expected to be 
only a temporary phenomenon---because of the area increase after 
the transition, the entropy generally returns back to the value 
before the transition in a characteristic dynamical timescale and 
then continues to increase afterward.  We expect that the relation 
in Eq.~(\ref{eq:S-dot}) is valid in a coarse-grained sense
\begin{equation}
  \frac{d}{d\tau} \bar{S}(\gamma, \tau) > 0;
\qquad
  \bar{S}(\gamma, \tau) = \frac{1}{\tau_c} 
    \int_\tau^{\tau+\tau_c}\! S(\gamma, \tau')\, d\tau',
\label{eq:S-dot-coarse}
\end{equation}
but not ``microscopically'' in general.  Here, $\tau_c$ must be taken 
sufficiently larger than the characteristic dynamical timescale, 
the Hubble time for an FRW universe.

For further illustration, we perform numerical calculations for how the 
area of a leaf hemisphere, $\norm{L(\pi/2,t)}$, and the associated screen 
entanglement entropy, calculated using $S(\pi/2,t) = \norm{E(\pi/2,t)}/4$, 
evolve in time during transitions from a $w = -1$ to a $w' = 0$ flat FRW 
universe.  Here, we take the FRW time $t$ as the time parameter.  For 
this purpose, we consider a scalar field $\phi$ having a potential 
$V(\phi)$ that has a flat portion and a well, with the initial value 
of $\phi$ being in the flat portion.  We first note that a transformation 
of the potential of the form
\begin{equation} 
  V(\phi) \rightarrow V'(\phi) = \epsilon^2 V(\phi),
\label{eq:rescale_V}
\end{equation}
leads to rescalings of the scalar field, $\phi(t)$, and the scale factor, 
$a(t)$, obtained as the solutions to the equations of motion:
\begin{equation}
  \phi'(t) = \phi(\epsilon t),
\qquad
  a'(t) = a(\epsilon t).
\label{eq:rescale_phi-a}
\end{equation}
Plugging these in Eq.~(\ref{eq:norm-E}), we find that the area 
functionals before and after the transformation Eq.~(\ref{eq:rescale_V}) 
are related by simple rescaling $t \rightarrow t/\epsilon$ and 
$\xi \rightarrow \xi/\epsilon$, so that
\begin{equation}
  \Bigl\lVert E'\Bigl(\frac{\pi}{2},t\Bigr) \Bigr\lVert
  = \frac{1}{\epsilon^2} \Bigl\lVert 
    E\Bigl(\frac{\pi}{2},\frac{t}{\epsilon}\Bigr) \Bigr\lVert.
\label{eq:rescale_E}
\end{equation}
These scaling properties imply that the leaf hemisphere area and the 
screen entanglement entropy for the transformed potential are read 
off from those for the untransformed one by
\begin{equation}
  \Bigl\lVert L'\Bigl(\frac{\pi}{2},t\Bigr) \Bigr\lVert
  = \frac{1}{\epsilon^2} \Bigl\lVert 
    L\Bigl(\frac{\pi}{2},\frac{t}{\epsilon}\Bigr) \Bigr\lVert,
\qquad
  \Bigl\lVert S'\Bigl(\frac{\pi}{2},t\Bigr) \Bigr\lVert
  = \frac{1}{\epsilon^2} \Bigl\lVert 
    S\Bigl(\frac{\pi}{2},\frac{t}{\epsilon}\Bigr) \Bigr\lVert.
\label{eq:rescale_L-S}
\end{equation}
We therefore need to be concerned only with the shape of the potential, 
not its overall scale.  In particular, we can always be in the semiclassical 
regime by performing a transformation with $\epsilon \ll 1$.

\begin{figure}[]
  \setcounter{subfigure}{0}
  \subfigure[Steep potential.]
     {\includegraphics[height=4.4cm]{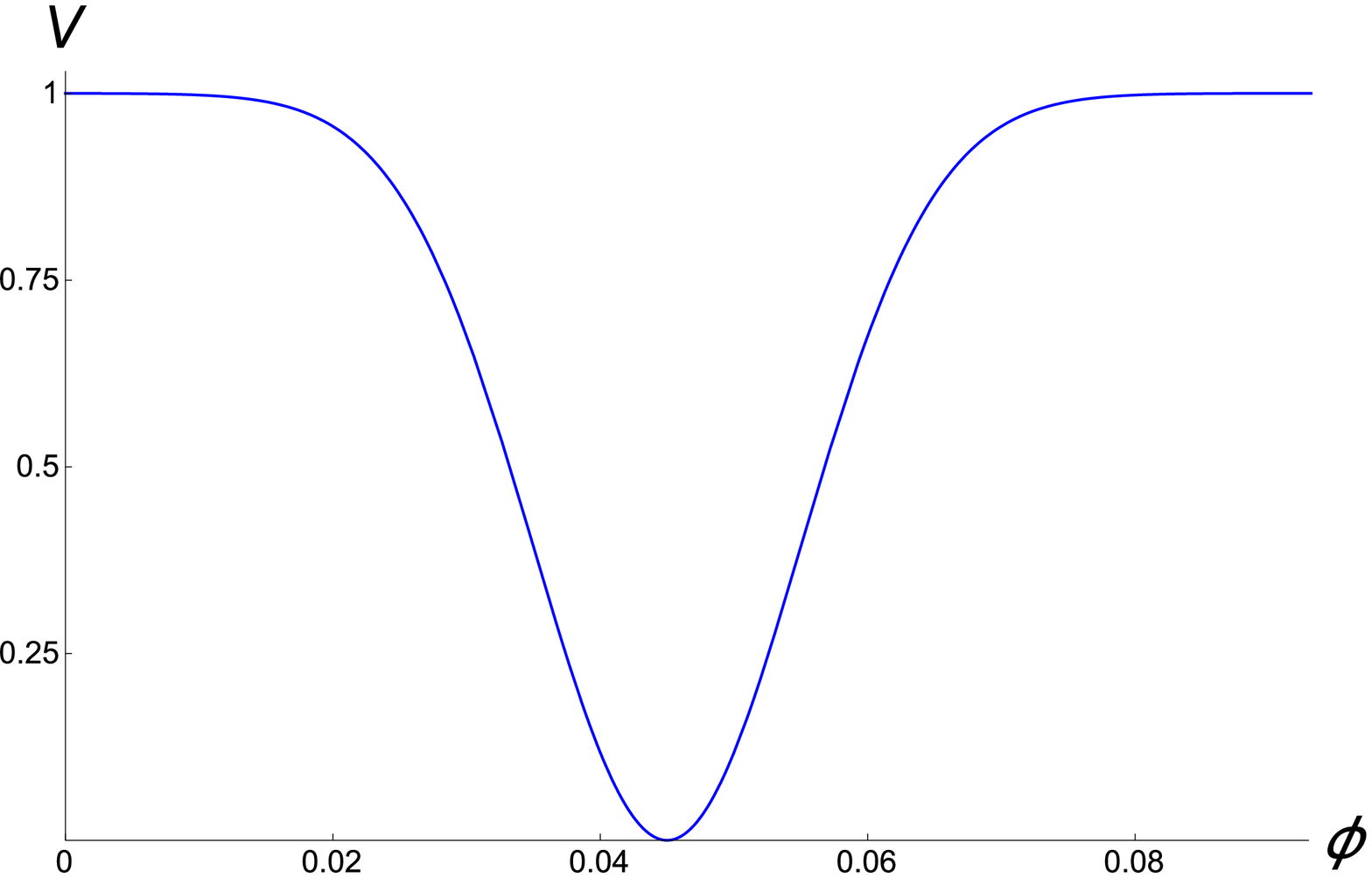}}
  \setcounter{subfigure}{4}
  \subfigure[Broad potential.]
     {\includegraphics[height=4.4cm]{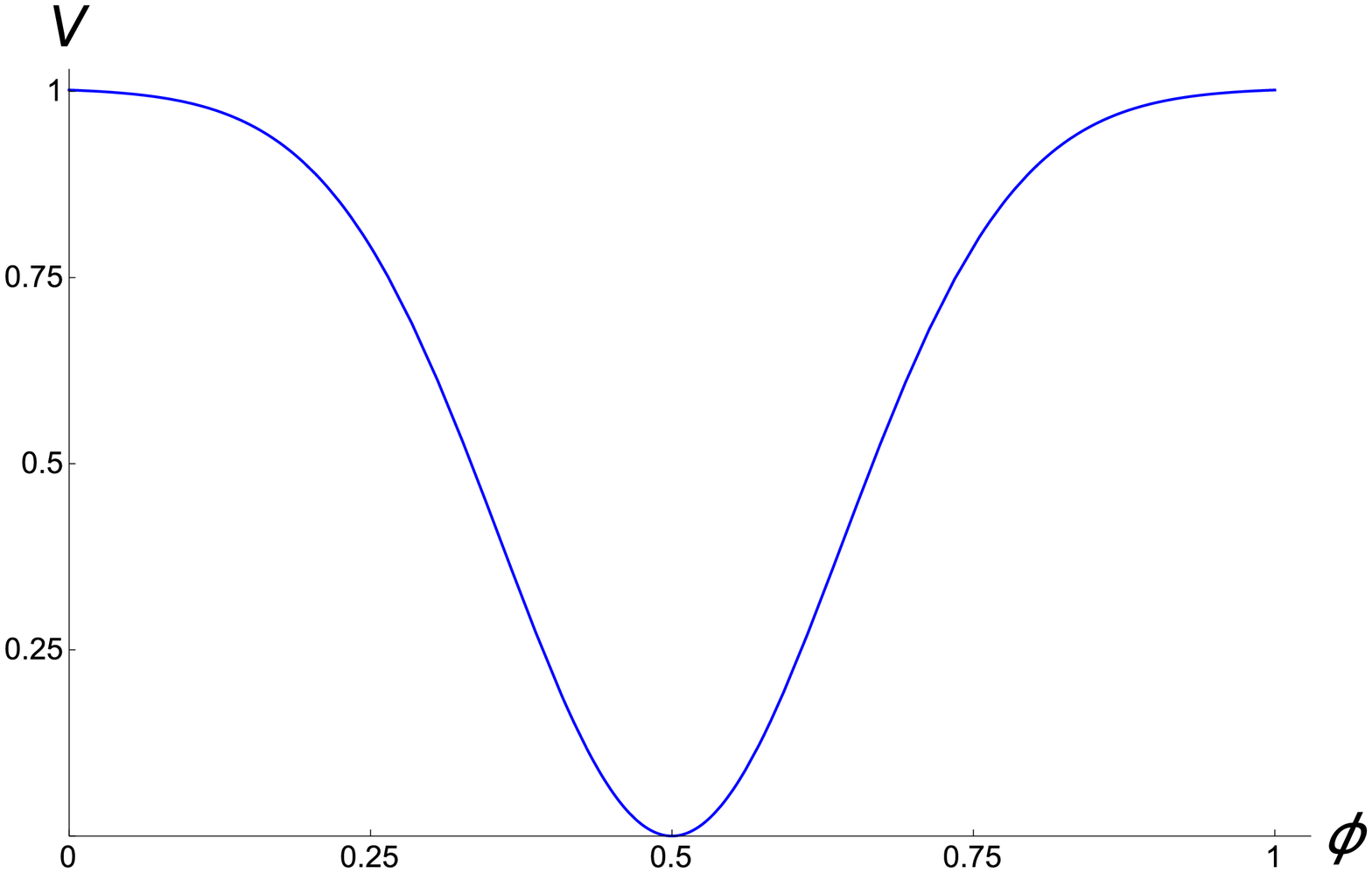}}
  \setcounter{subfigure}{1}
  \subfigure[$\phi(t)$ for the steep potential.]
     {\includegraphics[height=4.4cm]{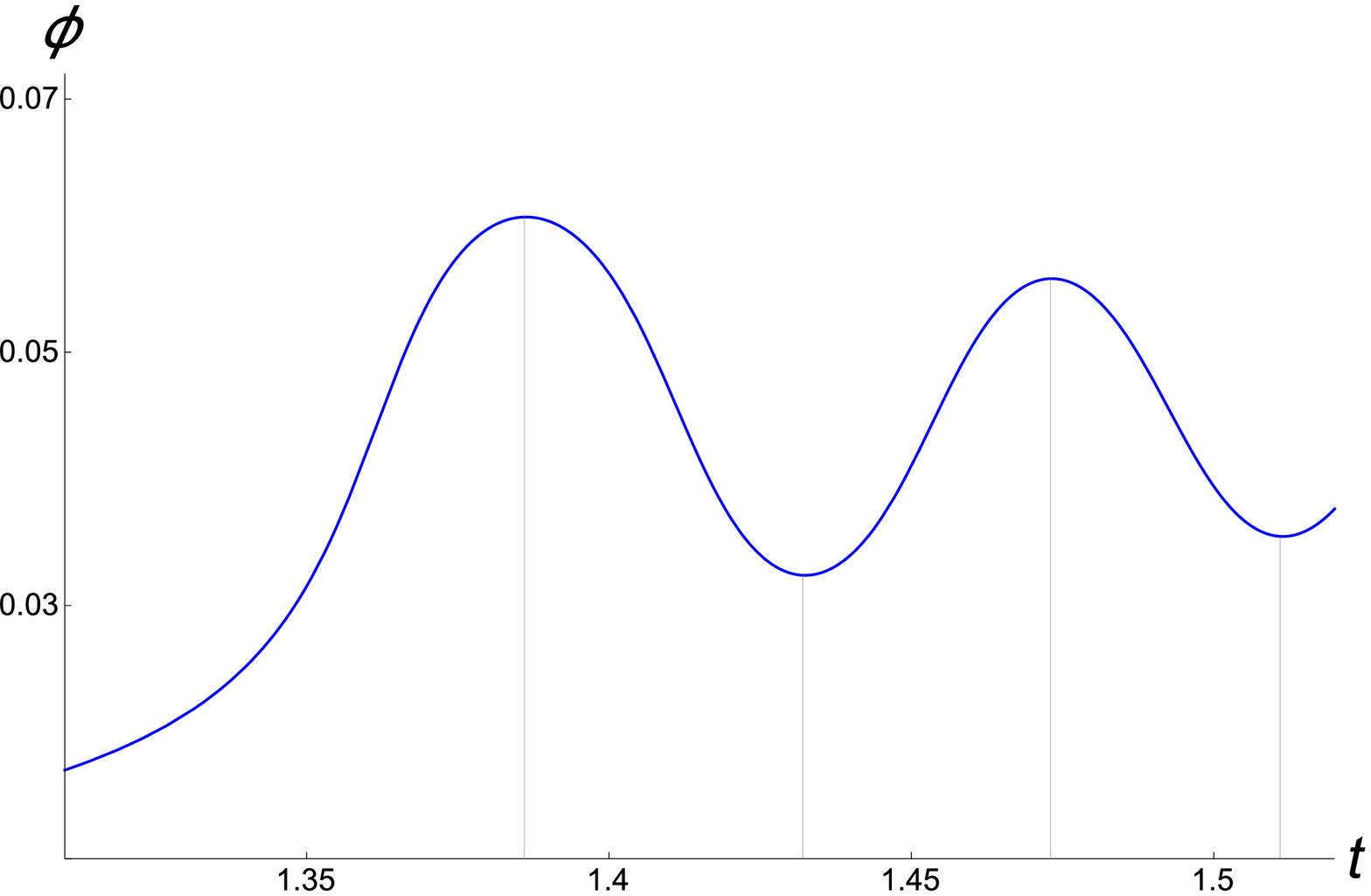}}
  \setcounter{subfigure}{5}
  \subfigure[$\phi(t)$ for the broad potential.]
     {\includegraphics[height=4.4cm]{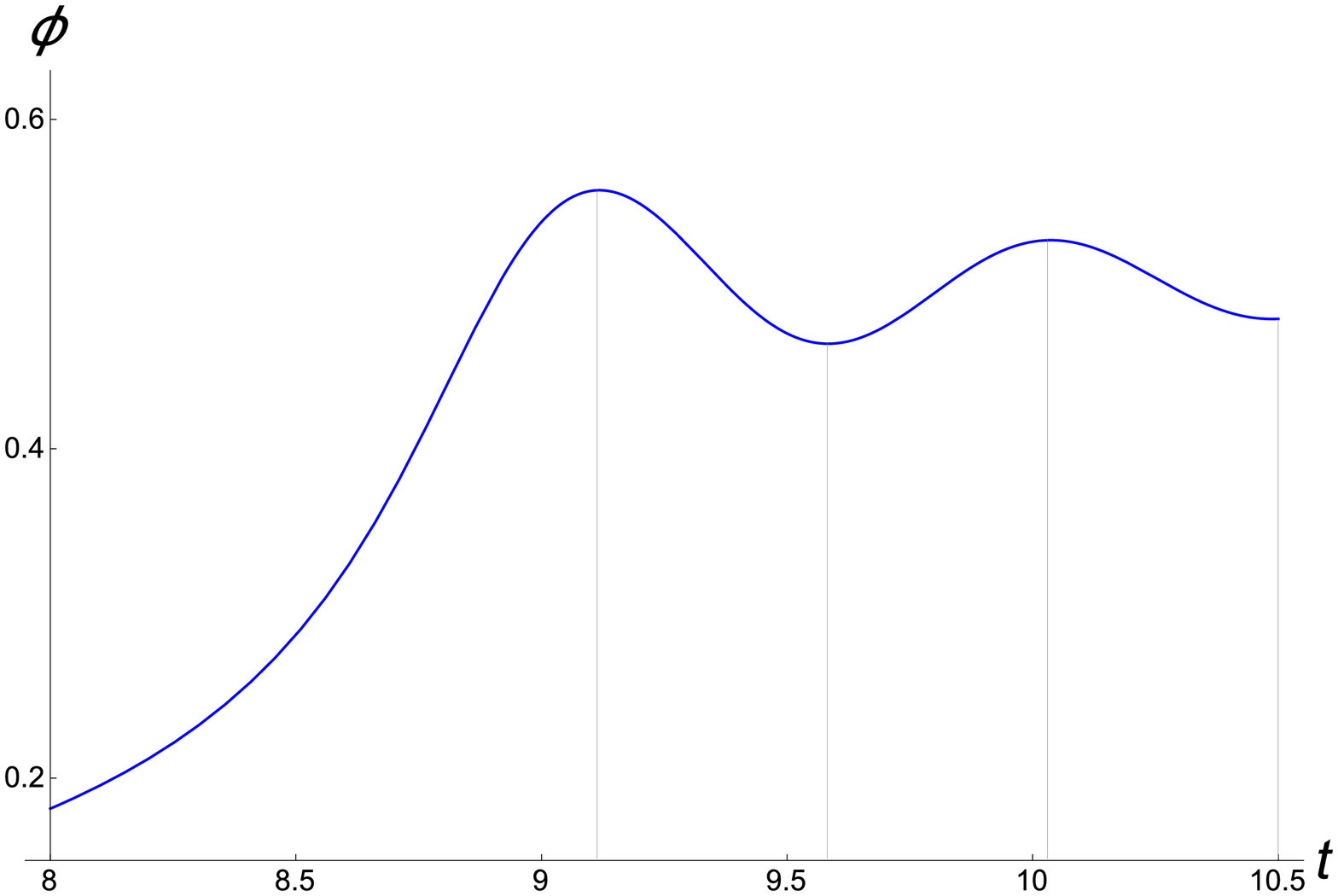}}
  \setcounter{subfigure}{2}
  \subfigure[$\norm{L(\pi/2,t)}$ for the steep potential.]
     {\includegraphics[height=4.4cm]{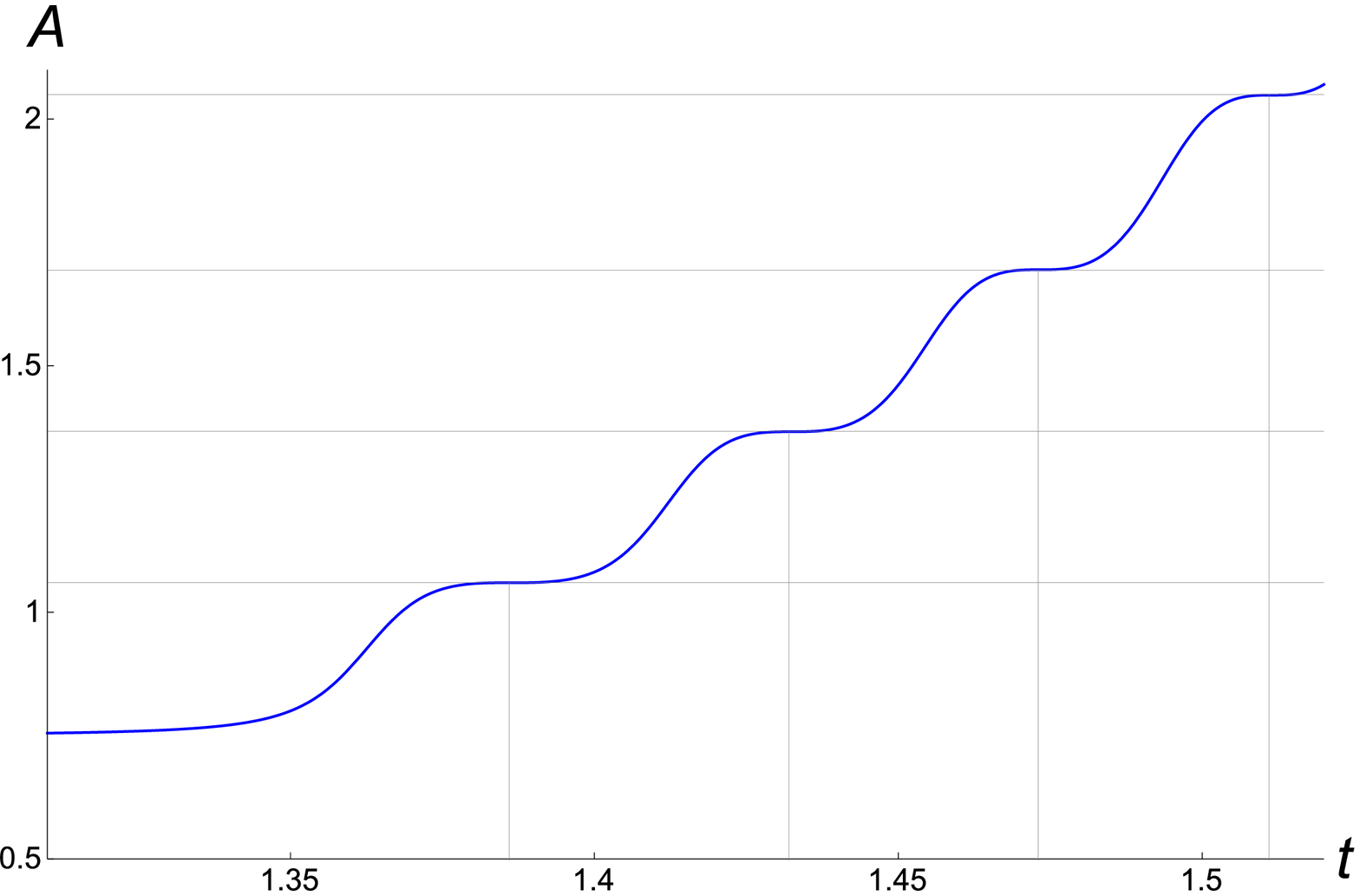}}
  \setcounter{subfigure}{6}
  \subfigure[$\norm{L(\pi/2,t)}$ for the broad potential.]
     {\includegraphics[height=4.4cm]{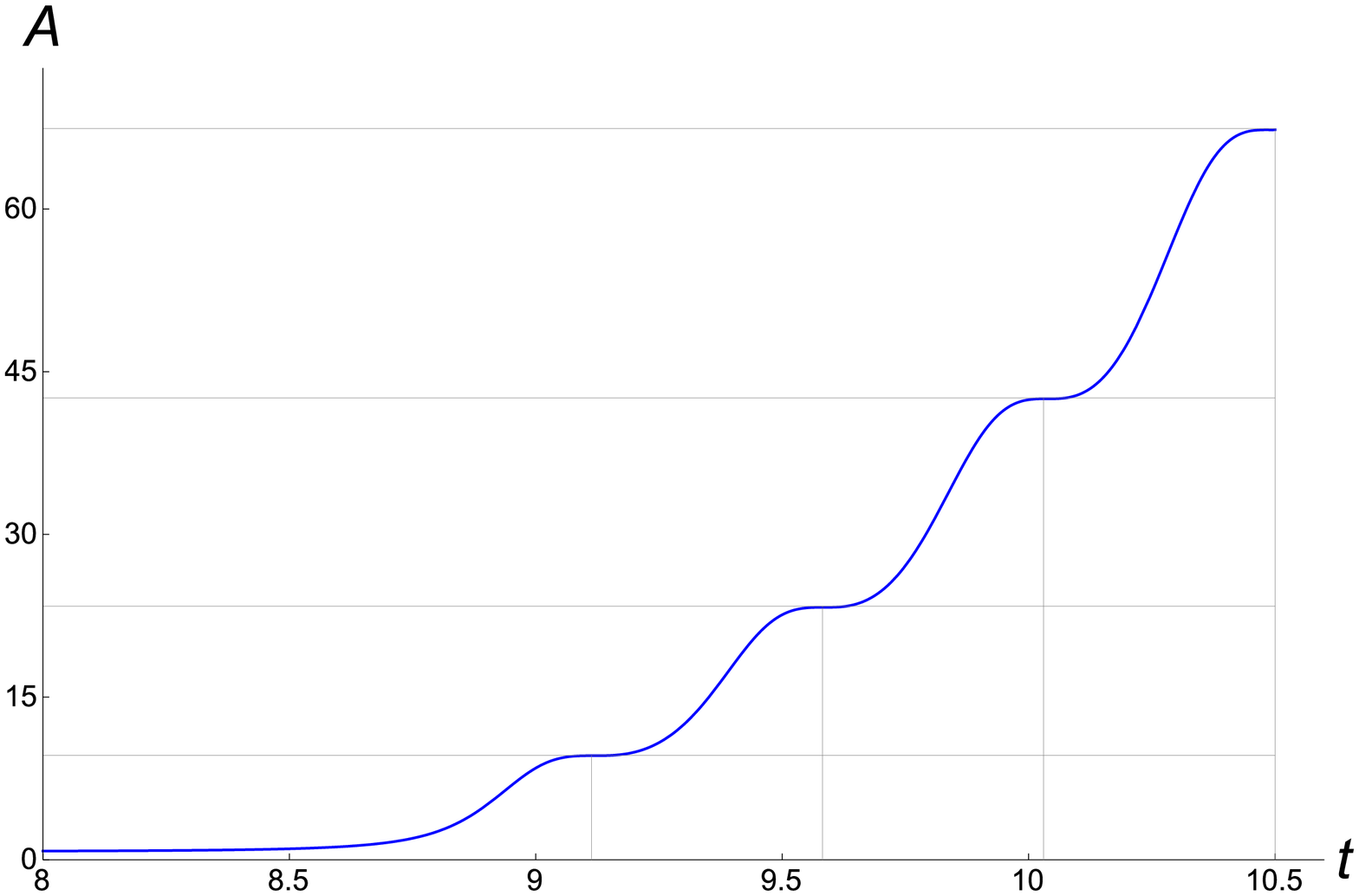}}
  \setcounter{subfigure}{3}
  \subfigure[$S(\pi/2,t)$ for the steep potential.]
     {\includegraphics[height=4.4cm]{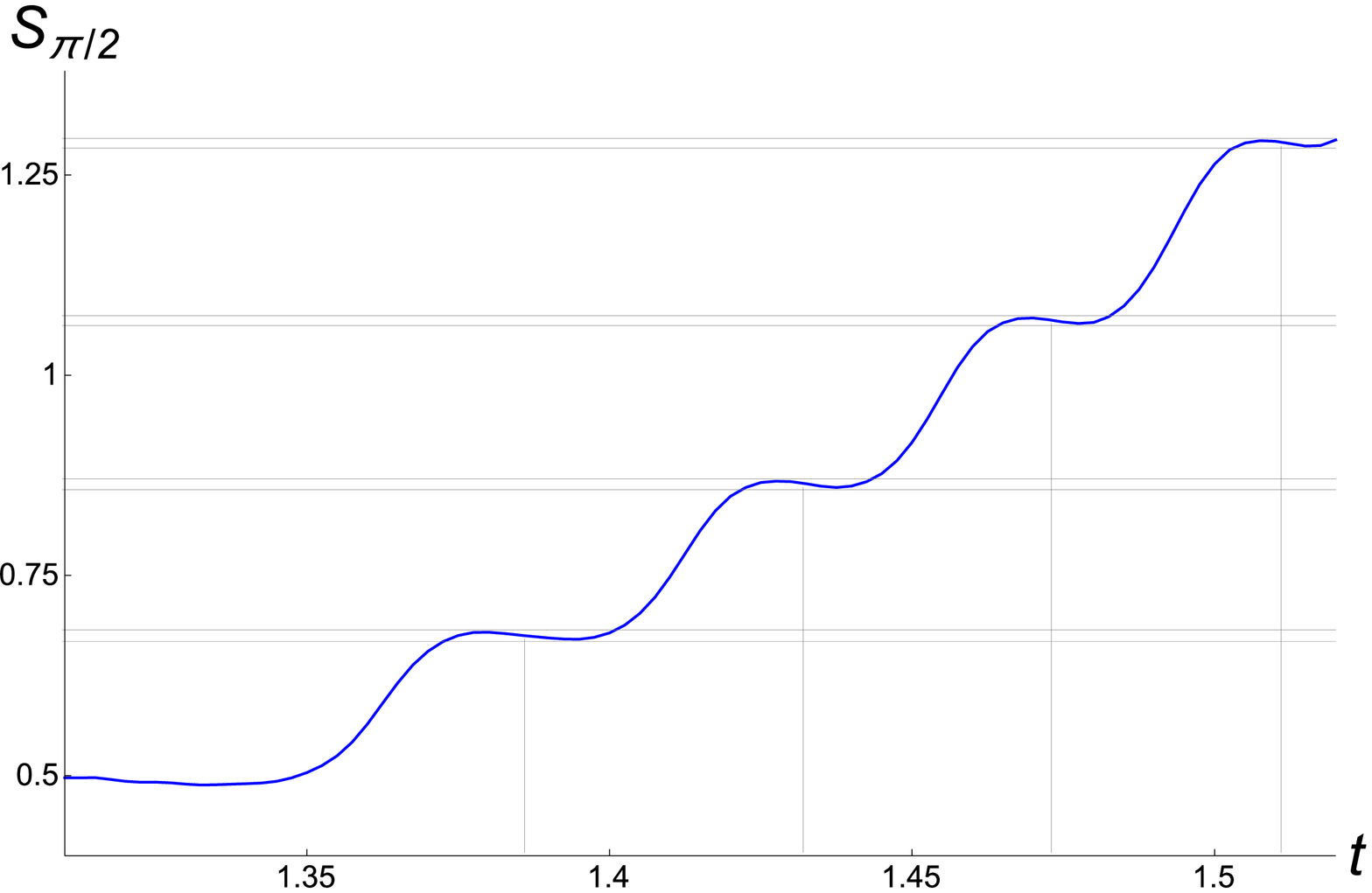}}
  \hspace{29.5mm}
  \setcounter{subfigure}{7}
  \subfigure[$S(\pi/2,t)$ for the broad potential.]
     {\includegraphics[height=4.4cm]{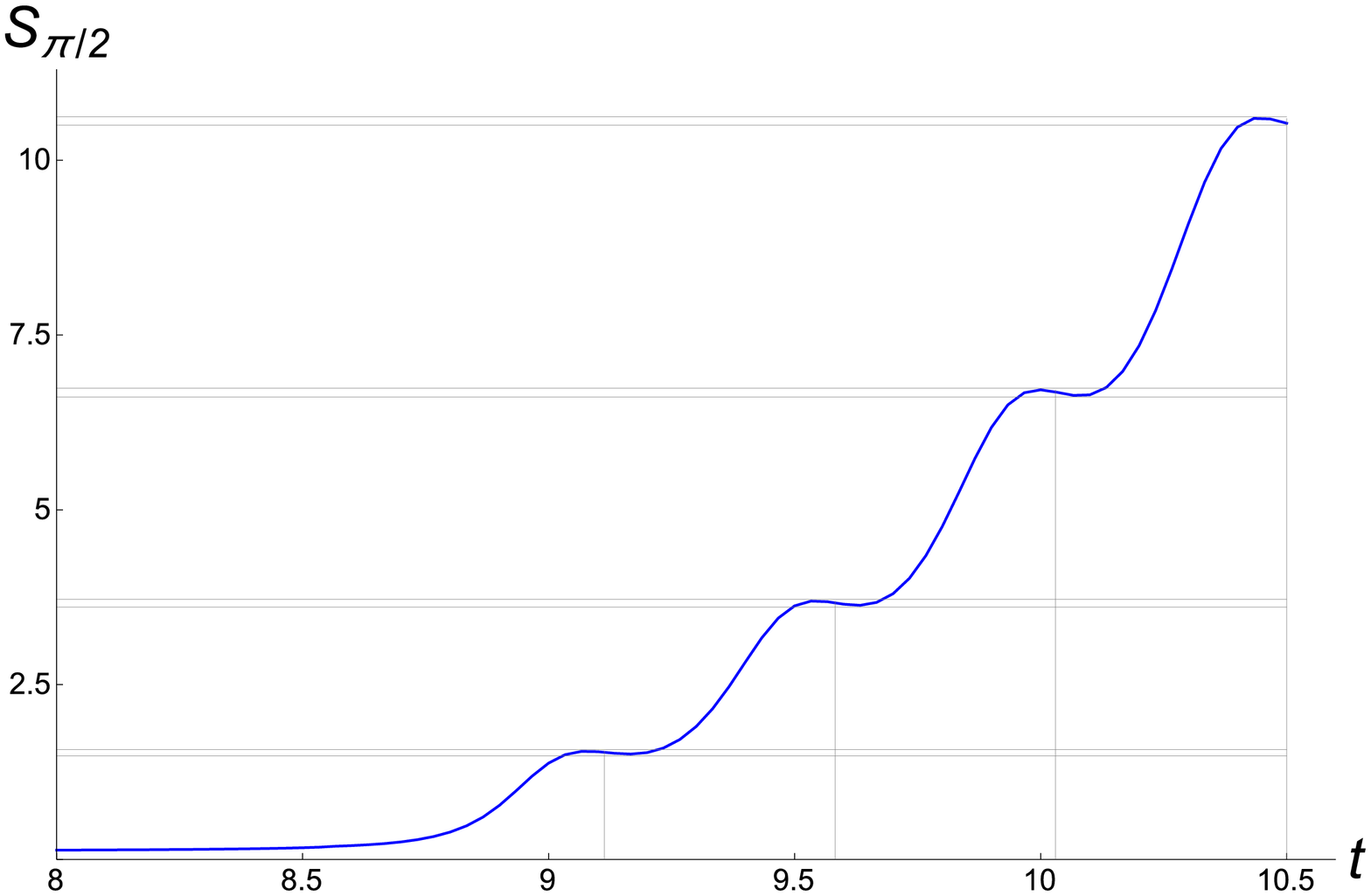}}
\caption{A steep potential~(a) leading to the time evolution of the 
 scalar field~(b), the area of a leaf hemisphere~(c), and the screen 
 entanglement entropy~(d).  The same for a broad potential~(e)--(h).}
\label{fig:oscil}
\end{figure}
In Fig.~\ref{fig:oscil}, we show the results of our calculations for 
``steep'' and ``broad'' potentials.  The explicit forms of the potentials 
are given by
\begin{equation}
  V(\phi) = 1 - e^{-k (\phi - \phi_0)^2} 
    + s (\phi - \phi_0) \tanh(p (\phi - \phi_0)),
\label{eq:potentials}
\end{equation}
with
\begin{align}
  \mbox{Steep} &: \quad 
  k = 5000,\quad s = 0.01,\quad p = 20,\quad \phi_0 = 0.045,
\label{eq:V-steep}\\
  \mbox{Broad} &: \quad 
  k = 25,\quad s = 0.01,\quad p = 2,\quad \phi_0 = 0.5,
\label{eq:V-broad}
\end{align}
although their detailed forms are unimportant.  For the steep 
potential, plotted in Fig.~\ref{fig:oscil}(a), we show the time 
evolutions of $\phi(t)$, $\norm{L(\pi/2,t)}$, and $S(\pi/2,t)$ 
in Figs.~\ref{fig:oscil}(b)--(d) for the initial conditions 
of $\phi(0) = \dot{\phi}(0) = 0$ and $a(0) = 0.01$.  The same 
are shown for the broad potential, Fig.~\ref{fig:oscil}(e), in 
Figs.~\ref{fig:oscil}(f)--(h) for the initial conditions of 
$\phi(0) = \dot{\phi}(0) = 0$ and $a(0) = 10^{-11}$.  In either 
cases, the leaf hemisphere area increases monotonically while the 
screen entanglement entropy experiences drops as the field oscillates 
around the minimum.  The fractional drops from the first, second, 
and third peaks are $\simeq 1.3\%$, $0.9\%$, and $0.6\%$, respectively, 
for the steep potential and $\simeq 2.5\%$, $1.6\%$, and $1.2\%$, 
respectively, for the broad potential.

We thus find that screen entanglement entropies may decrease in a 
transition period.  The interpretation of this result, however, needs 
care.  Since the system is far from being in a ``vacuum'' during 
a transition, true entanglement entropies for subregions in the 
holographic theory may have contributions beyond that captured 
by the simple formula of Eq.~(\ref{eq:S_gamma}).  This would 
require corrections of the formula, possibly along the lines of 
Refs.~\cite{Faulkner:2013ana,Engelhardt:2014gca,Bousso:2015eda}, 
and with such corrections the drop of the entanglement entropy we 
have found here might disappear.  We leave a detailed study of 
this issue to future work.

\section{Interpretation: Beyond AdS/CFT}
\label{sec:beyond}

The entanglement entropies in the holographic theory of FRW universes 
seen so far show features different from those in CFTs of the AdS/CFT 
correspondence.  Here we highlight these differences and see how 
properties characteristic to local CFTs are reproduced when bulk 
spacetime becomes asymptotically AdS.  We also discuss implications 
of our findings for the structure of the holographic theory.  In 
particular, we discuss the structure of the Hilbert space for quantum 
gravity applicable to general spacetimes.  While we cannot determine 
the structure uniquely, we can classify possibilities under certain 
general assumptions.  The issues discussed include bulk reconstruction, 
the interior and exterior regions of the leaf, and time evolution in 
the holographic theory.

\subsection{Volume/area law for screen entanglement entropies}
\label{subsec:nonlocal}

One can immediately see that holographic entanglement entropies for
FRW universes have two features that are distinct from those in 
AdS/CFT.  First, unlike entanglement entropies in CFTs, the holographic 
entanglement entropies for FRW universes are finite for a finite 
value of ${\cal A}_*$.  Second, as seen in Section~\ref{subsec:single}, 
e.g.\ Eq.~(\ref{eq:S-gamma}), these entropies obey a volume law, 
rather than an area law.%
\footnote{A similar property was argued for holographic entropies
 for Euclidean flat spacetime in Ref.~\cite{Li:2010dr}.}
(Note that ${\cal A}_*$ is a volume from the viewpoint of the holographic 
theory.)  In particular, in the limit that the region $\Gamma$ in 
the holographic theory becomes small, the entanglement entropy $S_\Gamma$ 
becomes proportional to the volume $V_\Gamma$ {\it with a universal 
coefficient}, which we identified as $1/4$ to match the conventional 
results in Refs.~\cite{Bekenstein:1972tm,Bekenstein:1973ur,%
Bardeen:1973gs,Hawking:1974rv,Hawking:1974sw,Gibbons:1977mu}. 
(For a small enough subsystem, we expect that the entanglement 
entropy agrees with the thermal entropy.)  From the bulk point 
of view, this is because the extremal surface $E_\Gamma$ approaches 
$\Gamma$ itself, so that $\norm{E_\Gamma} \rightarrow V_\Gamma$.

What do these features mean for the holographic theory?  The finiteness 
of the entanglement entropies implies that the cutoff length of 
the holographic theory is finite, i.e.\ the number of degrees of 
freedom in the holographic theory is finite, at least effectively. 
In particular, our identification implies that the holographic 
theory effectively has a qubit degree of freedom per volume of 
$4 \ln 2$ (in Planck units), although it does not mean that the 
cutoff length of the theory is necessarily $\simeq \sqrt{4 \ln 2}$. 
It is possible that the cutoff length is $l_{\rm c} > \sqrt{4 \ln 2}$ 
and that each cutoff size cell has $N = l_{\rm c}^2/4 \ln 2$ ($> 1$) 
degrees of freedom.  In fact, since the string length $l_{\rm s}$ 
and the Planck length are related as $l_{\rm s}^2 \sim n$, where $n$ 
is the number of species in the low energy theory (including the moduli 
fields parameterizing the extra dimensions)~\cite{Dvali:2007hz}, it 
seems natural to identify $l_{\rm c}$ and $N$ as $l_{\rm s}$ and $n$, 
respectively.

The volume law of the entangled entropies implies that a holographic 
state corresponding to an FRW universe is not a ground state 
of local field theory, which is known to satisfy an area 
law~\cite{Bombelli:1986rw,Srednicki:1993im}.  This does not 
necessarily mean that the holographic theory for FRW universes 
must be nonlocal at lengthscales larger than the cutoff $l_{\rm c}$; 
it might simply be that the relevant states are highly excited 
ones.  In fact, the dynamics of the holographic theory is 
expected to respect some aspects of locality as suggested 
by the fact that the area theorem applies locally on a holographic 
screen~\cite{Sanches:2016pga}.  Of course, it is also possible 
that the holographic states for FRW universes are states of some 
special class of nonlocal theories.

The features of screen entangled entropies described here are not 
specific to FRW universes but appear in more general ``cosmological''
spacetimes, spacetimes in which the holographic screen is at finite
distances and the gravitational dynamics is not frozen there.  If 
the interior region of the holographic screen is (asymptotically) 
AdS, these features change.  In this case, the same procedure as in 
Section~\ref{sec:framework} puts the holographic screen at spatial 
infinity (the AdS boundary), and the AdS geometry makes the area 
of the extremal surface anchored to the boundary $\partial \Gamma$ 
of a small region $\Gamma$ on a leaf proportional to the area of 
$\partial \Gamma$ with a diverging coefficient:\ $\norm{E_\Gamma} 
\sim \norm{\partial \Gamma}/\epsilon$ ($\epsilon \rightarrow 0$). 
This makes the screen entanglement entropies obey an area law, so 
that the holographic theory can now be a ground state of a local 
field theory.  In fact, the theory is a CFT~\cite{Maldacena:1997re,%
Gubser:1998bc,Witten:1998qj}, consistent with the fact that we
could take the cutoff length to zero, $l_{\rm c} \sim \epsilon 
\rightarrow 0$.

\subsection{The structure of holographic Hilbert space}
\label{subsec:structure}

We now discuss implications of our analysis for the structure of the 
Hilbert space of quantum gravity for general spacetimes.  We work in 
the framework of Section~\ref{sec:framework}; in particular, we assume 
that when a holographic state represents a semiclassical spacetime, 
the area of the extremal surface contained in $D_\sigma$ and anchored 
to the boundary of a region $\Gamma$ on a leaf represents the 
entanglement entropy of the region $\Gamma$ in the holographic 
theory, Eq.~(\ref{eq:S_Sigma}).  Note that this does not necessarily
mean that the converse is true; there may be a holographic state 
in which entanglement entropies for subregions do not correspond
to the areas of extremal surfaces in a semiclassical spacetime.

Consider a holographic state representing an FRW spacetime.  The fact 
that for a small enough region $\Gamma$ the area of the extremal surface 
anchored to its boundary approaches the volume of the region on the 
leaf, $\norm{E_\Gamma} \rightarrow V_\Gamma$, implies that the degrees 
of freedom in the holographic theory are localized and that their 
density is, at least effectively, one qubit per $4 \ln 2$ (although 
the cutoff length of the theory may be larger than $\sqrt{4 \ln 2}$). 
We take these for granted as anticipated in the original holographic 
picture~\cite{'tHooft:1993gx,Susskind:1994vu}.  This suggests that 
the number of holographic degrees of freedom which comprise FRW states 
on the leaf $\sigma_*$ with area ${\cal A}_*$ is ${\cal A}_*/4$ 
{\it for any value of $w$}.

Given these assumptions, there are still a few possibilities for the 
structure of the Hilbert space of the holographic theory.  Below we 
enumerate these possibilities and discuss their salient features.

\subsubsection{Direct sum structure}

Let us first assume that state vectors representing FRW universes 
with different $w$'s are independent of each other, as indicated 
in the left portion of Fig.~\ref{fig:possib}.  This implies that 
the Hilbert space ${\cal H}_* \in \{ {\cal H}_B \}$, which contains 
holographic states for FRW universes at times when the leaf area 
is ${\cal A}_*$, has a direct sum structure
\begin{equation}
  {\cal H}_* = \bigoplus_w {\cal H}_{*,w}.
\label{eq:direct-sum}
\end{equation}
Here, we regard universes with the equation of state parameters falling 
in a range $\delta w \ll 1$ to be macroscopically identical, where 
$\delta w$ is a small number that does not scale with ${\cal A}_*$.%
\footnote{If we consider FRW universes with multiple fluid components, 
 the corresponding spaces must be added in the right-hand side of 
 Eq.~(\ref{eq:direct-sum}).}
This is the structure envisioned originally in Ref.~\cite{Nomura:2011rb}. 
\begin{figure}[t]
\begin{center}
  \includegraphics[height=6cm]{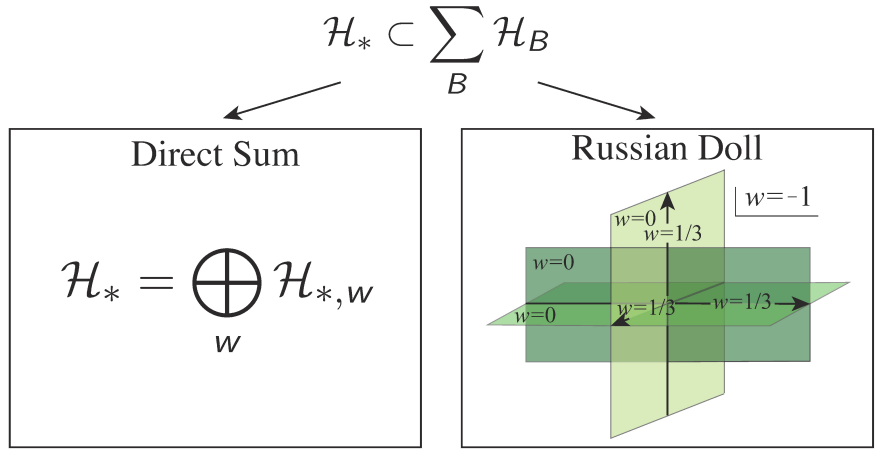}
\end{center}
\caption{Possible structures of the Hilbert space ${\cal H}_*$ 
 for a fixed boundary space $B$.  In the direct sum structure 
 (left), each semiclassical spacetime in $D_{\sigma_*}$ has its 
 own Hilbert space ${\cal H}_{*,w}$.  The Russian doll structure 
 (right) corresponds to the scenario of ``spacetime equals 
 entanglement,'' i.e.\ the entanglement entropies of the holographic 
 degrees of freedom determine spacetime in $D_{\sigma_*}$.  This 
 implies that a superposition of exponentially many semiclassical 
 spacetimes can lead to a different semiclassical spacetime.}
\label{fig:possib}
\end{figure}

What is the structure of ${\cal H}_{*,w}$?  A natural possibility 
is that each of these subspaces has dimension
\begin{equation}
  \ln {\rm dim}\, {\cal H}_{*,w} = \frac{{\cal A}_*}{4}.
\label{eq:H_w-dim}
\end{equation}
This is motivated by the fact that arbitrary unitary transformations 
acting in each cutoff size cell do not change the structure of screen 
entanglement entropies, and they can lead to $e^{{\cal A}_*/4}$ 
independent holographic states that have the screen entanglement 
entropies corresponding to the FRW universe with the equation of 
state parameter $w$.  If we regard all of these states as microstates 
for the FRW universe with $w$, then we obtain Eq.~(\ref{eq:H_w-dim}). 
This, however, does not mean that the holographic states representing 
the FRW universe with $w$ comprise the Hilbert space ${\cal H}_{*,w}$. 
Since these states form a basis of ${\cal H}_{*,w}$, their superposition 
can lead to a state which has entanglement entropies far from those 
corresponding to the FRW universe with $w$.  In fact, we can even form 
a state in which degrees of freedom in different cells are not entangled 
at all.  This is a manifestation of the fact that entanglement cannot 
be represented by a linear operator.

This implies that states representing the semiclassical FRW universe 
are ``preferred basis states'' in ${\cal H}_{*,w}$, and their 
arbitrary linear combinations may lead to states that do not admit 
a semiclassical interpretation.  We expect that these preferred axes 
are ``fat'':\ we have to superpose a large number of basis states, 
in fact exponentially many in ${\cal A}_*$, to obtain a state that 
is not semiclassical (because we need that many states to appreciably 
change the entanglement structure, as illustrated in a toy qubit 
model in Appendix~\ref{app:qubit}).  It is, however, true that 
most of the states in ${\cal H}_{*,w}$, including those having 
the entanglement entropy structure corresponding to a universe 
with another $w$, are states that do not admit a semiclassical 
spacetime interpretation.  Drawing an analogy with the work in 
Refs.~\cite{Almheiri:2012rt,Almheiri:2013hfa,Marolf:2013dba}, we 
may call them ``firewall'' states.  In Section~\ref{subsec:time-evo}, 
we argue that these states are unlikely to be produced by standard 
semiclassical time evolution.

The dimension of ${\cal H}_*$ is given by
\begin{equation}
  \ln {\rm dim}\, {\cal H}_* 
  = \ln \sum_w e^{\frac{{\cal A}_*}{4}} 
  \approx \frac{{\cal A}_*}{4} - \ln \delta w 
  \simeq \frac{{\cal A}_*}{4},
\label{eq:H_*-dim}
\end{equation}
as expected from the covariant entropy bound (unless $\delta w$ is 
exponentially small in ${\cal A}_*$, which we assume not to be the 
case).  Small excitations over the FRW universes may be represented 
in suitably extended spaces ${\cal H}_{*,w}$.  Since entropies 
associated with the excitations are typically subdominant in 
${\cal A}_*$~\cite{'tHooft:1993gx,Nomura:2013lia}, they have only 
minor effects on the overall picture, e.g.\ Eq.~(\ref{eq:H_*-dim}). 
(Note that the excitations here do not contain the degrees of freedom 
attributed to gravitational, e.g.\ Gibbons-Hawking, radiation.  These 
degrees of freedom are identified as the microscopic degrees of freedom 
of spacetimes, i.e.\ the vacuum degrees of freedom~\cite{Nomura:2014woa,%
Nomura:2014voa,Nomura:2016qum}, which are already included in 
Eq.~(\ref{eq:H_w-dim}).)  The operators representing the excitations 
can be standard linear operators acting on the Hilbert space 
${\cal H}_*$, at least in principle.

We also mention the possibility that the logarithm of the number 
of independent states $N_w$ representing the FRW universe with $w$ 
is smaller than ${\cal A}_*/4$.  For example, it might be given 
approximately by twice the entanglement entropy for a leaf hemisphere 
$S_w(\pi/2) = Q_w(\pi/2) {\cal A}_*/8$:
\begin{equation}
  \ln N_w \approx Q_w\Bigl(\frac{\pi}{2}\Bigr)\, \frac{{\cal A}_*}{4}.
\label{eq:H_w-dim_Q}
\end{equation}
The basic picture in this case is not much different from that 
discussed above; for example, the difference of the values of 
$\ln {\rm dim}\, {\cal H}_*$ is higher order in $1/{\cal A}_*$ 
(although this possibility makes the issue of the equivalence 
condition for the boundary space label $B$ nontrivial).  We will 
not consider this case separately below.

\subsubsection{Russian doll structure: spacetime equals entanglement}

In the picture described above, the structures of ${\cal H}_{*,w}$'s 
are all very similar.  Each of these spaces has the dimension of 
${\cal A}_*/4$ and has $e^{{\cal A}_*/4}$ independent states that 
represent the FRW universe with a fixed value of $w$.  An arbitrary 
linear combination of these states, however, is not necessarily a 
state representing the FRW universe with $w$.  In the previous picture, 
we identified all such states as the firewall (or unphysical) states, 
but is it possible that some of these states, in fact, represent 
other FRW universes?  In particular, is it possible that all the 
${\cal H}_{*,w}$ spaces are actually the {\it same} space, i.e.\ 
${\cal H}_{*,w_1} = {\cal H}_{*,w_2}$ for all $w_1 \neq w_2$?

A motivation to consider this possibility comes from the fact that 
if $w$ does not by itself provide an independent label for states, 
then the $e^{{\cal A}_*/4}$ independent microstates for the FRW 
universe with a fixed $w$ can form a basis for the configuration 
space of the ${\cal A}_*/4$ holographic degrees of freedom.  This 
implies that we can superpose these states to obtain many---in fact 
$e^{{\cal A}_*/4}$---independent states that have the entanglement 
entropies corresponding to the FRW universe with any $w' \neq w$, which 
we can identify as the states representing the FRW universe with $w'$.%
\footnote{The same argument applies to the FRW universes with multiple 
 fluid components, so that the states representing these universes also 
 live in the same Hilbert space as the single component universes.}
In essence, this amounts to saying that the converse of the statement 
made at the beginning of this subsection is true:\ when a holographic 
state has the form of entanglement entropies corresponding to a 
certain spacetime, then the state indeed represents that spacetime. 
This scenario was proposed in Ref.~\cite{Nomura:2016aww} and called 
``spacetime equals entanglement.''  It is depicted in the right 
portion of Fig.~\ref{fig:possib}.

One might think that the scenario does not make sense, since 
it implies that a superposition of classical universes can 
lead to a different classical universe.  Wouldn't it make any 
reasonable many worlds interpretation of spacetime impossible? 
In Ref.~\cite{Nomura:2016aww}, it was argued that this is not 
the case.  First, for a given FRW universe, we expect that the 
space of its microstates is ``fat''; namely, a superposition of 
less than $e^{O(\delta w {\cal A}_*)}$ microstates representing 
a classical universe leads only to another microstate representing 
the same universe.  This implies that the $e^{{\cal A}_*/4}$ 
microstates of a classical universe generate an ``effective vector 
space,'' unless we consider a superposition of an exponentially 
large, $\gtrsim e^{O(\delta w {\cal A}_*)}$, number of states.

What about a superposition of different classical universes?  In 
particular, if states representing universes with $w_1$ and $w_2$ 
($\neq w_1$) are superposed, then how does the theory know that 
the resulting state represents a superposition of two classical 
universes, and not another---perhaps even non-classical---universe? 
A key point is that the Hilbert space we consider has a special 
basis, determined by the ${\cal A}_*/4$ local degrees of freedom 
in the holographic space:%
\footnote{For simplicity, here we have assumed that the degrees 
 of freedom are qubits, but the subsequent argument persists as 
 long as the number of independent states for each degree of 
 freedom does not scale with ${\cal A}_*$.  In particular, it 
 persists if the correct structure of ${\cal H}_*$ appears as 
 $({\mathbf C}^N)^{\otimes {\cal A}_*/l_{\rm c}^2}$ as discussed 
 in Section~\ref{subsec:nonlocal}.}
\begin{equation}
  {\cal H}_* = ({\mathbf C}^2)^{\otimes \frac{{\cal A}_*}{4}}.
\label{eq:H*-struc}
\end{equation}
From the result in Section~\ref{subsec:single}, we know that a state 
representing the FRW universe with $w_1$ is more entangled than that 
representing the FRW universe with $w_2$ ($> w_1$).  This implies 
that when expanded in the natural basis $\{ \ket{\Psi_i} \}$ for 
the structure of Eq.~(\ref{eq:H*-struc}), i.e.\ the product state 
basis for the ${\cal A}_*/4$ local holographic degrees of freedom, 
then a state $\ket{\Psi_{w_1}}$ representing the universe with 
$w_1$ effectively has exponentially more terms than a state 
$\ket{\Psi_{w_2}}$ representing the universe with $w_2$.  Namely, 
we expect that
\begin{equation}
  \ket{\Psi_w} \approx 
    \sum_{i=1}^{e^{f(w) \frac{{\cal A}_*}{4}}} a_i\, \ket{\Psi_i},
\label{eq:psi_w}
\end{equation}
where $f(w)$ is a monotonically decreasing function of $w$ taking 
values of $O(1)$, and $a_i$ are coefficients taking generic random 
values.  The normalization condition for $\ket{\Psi_w}$ then implies
\begin{equation}
  |a_i| \approx O\bigl( e^{-f(w) \frac{{\cal A}_*}{8}} \bigr),
\label{eq:a_i}
\end{equation}
i.e.\ the size of the coefficients in product basis expansion is 
exponentially different for states with different $w$'s.  This, 
in particular, leads to
\begin{equation}
  \inner{\Psi_{w_1}}{\Psi_{w_2}} 
  \lesssim O\bigl( e^{-\{ f(w_1)-f(w_2) \} \frac{{\cal A}_*}{8}} \bigr),
\label{eq:ortho}
\end{equation}
i.e.\ microstates for different universes are orthogonal up to 
exponentially suppressed corrections.

Now consider a superposition state
\begin{equation}
  \ket{\Psi} = c_1 \ket{\Psi_{w_1}} + c_2 \ket{\Psi_{w_2}},
\label{eq:superpose}
\end{equation}
where $|c_1|^2 + |c_2|^2 = 1$ up to the correction from exponentially 
small overlap $\inner{\Psi_{w_1}}{\Psi_{w_2}}$.  We are interested 
in the reduced density matrix for a subregion $\Gamma$ in the 
holographic theory
\begin{equation}
  \rho_\Gamma = {\rm Tr}_{\bar{\Gamma}}\, \ket{\Psi} \bra{\Psi},
\label{eq:rho_Gamma}
\end{equation}
where $\Gamma$ occupies less than a half of the leaf volume.  The 
property of Eq.~(\ref{eq:ortho}) then ensures that
\begin{equation}
  \rho_\Gamma = |c_1|^2 \rho^{(1)}_\Gamma + |c_2|^2 \rho^{(2)}_\Gamma,
\label{eq:rho-factor}
\end{equation}
up to corrections exponentially suppressed in ${\cal A}_*$.  Here, 
$\rho^{(1)}_\Gamma$ ($\rho^{(2)}_\Gamma$) are the reduced density 
matrices we would obtain if the state were genuinely $\ket{\Psi_{w_1}}$ 
($\ket{\Psi_{w_2}}$).  The matrix $\rho_\Gamma$ thus takes the form 
of an incoherent classical mixture for the two universes.  Similarly, 
the entanglement entropy for the region $\Gamma$ is also incoherently 
added
\begin{equation}
  S_\Gamma = |c_1|^2 S^{(1)}_\Gamma + |c_2|^2 S^{(2)}_\Gamma 
    + S_{\Gamma,{\rm mix}},
\label{eq:EE-factor}
\end{equation}
where $S^{(1,2)}_\Gamma$ are the entanglement entropies obtained 
if the state were $\ket{\Psi_{w_{1,2}}}$, and
\begin{equation}
  S_{\Gamma,{\rm mix}} = - |c_1|^2 \ln |c_1|^2 - |c_2|^2 \ln |c_2|^2,
\label{eq:S_mix}
\end{equation}
is the entropy of mixing (classical Shannon entropy), suppressed by 
factors of $O({\cal A}_*)$ compared with $S^{(1,2)}_\Gamma$.  The 
features in Eqs.~(\ref{eq:rho-factor},~\ref{eq:EE-factor}) indicate 
that unless $|c_1|$ or $|c_2|$ is suppressed exponentially in 
${\cal A}_*$, the state $\ket{\Psi}$ admits the usual interpretation 
of a superposition of macroscopically different universes with 
$w_{1,2}$.

In fact, unless a superposition involves exponentially many microstates, 
we find
\begin{equation}
  \ket{\Psi} = \sum_i c_i \ket{\Psi_{w_i}} 
\quad\Rightarrow\quad
  \begin{array}{l}
    \rho_\Gamma  = \sum_i |c_i|^2 \rho^{(i)}_\Gamma,\\
    S_\Gamma = \sum_i |c_i|^2 S^{(i)}_\Gamma + S_{\Gamma,{\rm mix}},
  \end{array}
\label{eq:rho_n-gen}
\end{equation}
with exponential accuracy.  Here, $S_{\Gamma,{\rm mix}} = - \sum_i 
|c_i|^2 \ln |c_i|^2$ and is suppressed by a factor of $O({\cal A}_*)$ 
compared with the first term in $S_\Gamma$.  This indicates that the 
standard many worlds interpretation applies to classical spacetimes 
under any reasonable measurements (only) in the limit that 
$e^{-{\cal A}_*}$ is regarded as zero, i.e.\ unless a superposition 
involves exponentially many terms or an exponentially small coefficient. 
This is consonant with the observation that classical spacetime 
has an intrinsically thermodynamic nature~\cite{Jacobson:1995ab}, 
supporting the idea that it consists of a large number of degrees 
of freedom.  In Ref.~\cite{Nomura:2016aww}, the features described 
above were discussed using a qubit model in which the states 
representing the FRW universes exhibit a ``Russian doll'' structure 
as illustrated in Fig~\ref{fig:possib}.  We summarize this model 
in Appendix~\ref{app:qubit} for completeness.

We conclude that the states representing FRW universes with a 
leaf area ${\cal A}_*$ can all be elements of a single Hilbert 
space ${\cal H}_*$ with dimension
\begin{equation}
  \ln {\rm dim}\, {\cal H}_* = \frac{{\cal A}_*}{4}.
\label{eq:H_*-SEE}
\end{equation}
Any such universe has $e^{{\cal A}_*/4}$ independent microstates, 
which form a basis of ${\cal H}_*$.  This implies that matter and 
spacetime must have a sort of unified origin in this picture, since 
a superposition that changes the spacetime geometry must also change 
the matter content filling the universe.  How could this be the case?

Consider, as discussed in Section~\ref{subsec:nonlocal}, that the 
cutoff length of the holographic theory is of order $l_{\rm s} \sim 
\sqrt{n}$, where $n$ ($> 1$) is the number of species at energies 
below $1/l_{\rm s}$.  This implies that the ${\cal A}_*/4$ degrees 
of freedom can be decomposed as
\begin{equation}
  \frac{{\cal A}_*}{4} \sim n \frac{{\cal A}_*}{l_{\rm s}^2},
\label{eq:decomp}
\end{equation}
representing $n$ fields living in the holographic space of cutoff 
length $l_{\rm s}$.  Now, to obtain the $e^{{\cal A}_*/4}$ microstates 
for an FRW universe we need to consider rotations for all the $n$ 
degrees of freedom in each cutoff size cell.  This may suggest that 
the identity of a matter species at the fundamental level may not be 
as adamant as in low energy semiclassical field theories.  The reason 
why all the $n$ degrees of freedom can be involved could be because 
the ``local effective temperature,'' defined analogously to de~Sitter 
space, diverges at the holographic screen.

Finally, we expect that small excitations over FRW universes in 
this picture are represented by non-linear/state-dependent operators 
in the (suitably extended) ${\cal H}_*$ space, along the lines 
of Ref.~\cite{Papadodimas:2015jra} (see Refs.~\cite{Papadodimas:2012aq,%
Verlinde:2012cy,Nomura:2012ex} for earlier work).  This is because 
a superposition of background spacetimes may lead to another background 
spacetime, so that operators representing excitations should know 
the entire quantum state they are acting on.

\subsection{Bulk reconstruction from holographic states}
\label{subsec:reconst}

We have seen that the entanglement entropies of the ${\cal A}_*/4$ 
local holographic degrees of freedom in the holographic space $\sigma_*$ 
encode information about spacetime in the causal region $D_{\sigma_*}$. 
Here we discuss in more detail how this encoding may work in general.

While we have focused on the case in which the future-directed ingoing 
light rays emanating orthogonally from $\sigma_*$ (i.e.\ in the 
$k^a$ directions in Fig.~\ref{fig:def}) meet at a point $p_0$, our 
discussion can be naturally extended to the case in which the light 
rays encounter a spacetime singularity before reaching a caustic. 
\begin{figure}[]
  \includegraphics[height=6.5cm]{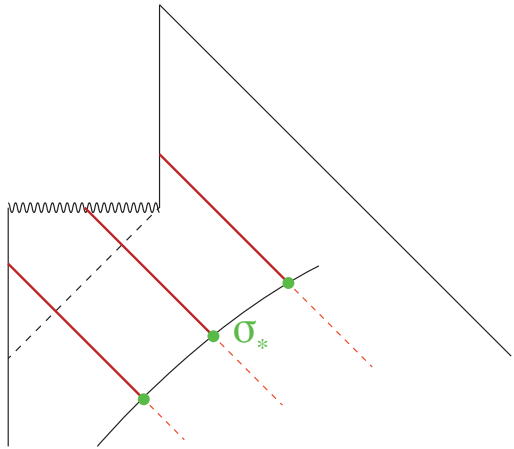}
\caption{If a black hole forms inside the holographic screen, 
 future-directed ingoing light rays emanating orthogonally from 
 the leaf $\sigma_*$ at an intermediate time may hit the singularity 
 before reaching a caustic.  While the diagram here assumes spherical 
 symmetry for simplicity, the phenomenon can occur more generally.}
\label{fig:BH}
\end{figure}
This may occur, for example, if a black hole forms in a universe 
as depicted in Fig.~\ref{fig:BH}, where we have assumed spherical 
symmetry for collapsing matter and taken $p(\tau)$ to follow its 
center.  We see that at intermediate times, the future-directed 
ingoing light rays emanating from leaves encounter the black hole 
singularity before reaching a caustic.%
\footnote{At these times, the specific construction of the 
 holographic screen in Section~\ref{sec:framework} cannot be 
 applied exactly.  This is not a problem as the fundamental 
 object is the state in the holographic space, and not $p(\tau)$. 
 The purpose of the discussion in Section~\ref{sec:framework} 
 is to illustrate our observer centric choice of fixing the 
 holographic redundancy in formulating the holographic theory.}
Our interpretation in this case is similar to the case without 
a singularity.  The entanglement entropies of the holographic 
degrees of freedom encode information about $D_{\sigma_*}$.

In what sense does a holographic state on $\sigma_*$ contain 
information about $D_{\sigma_*}$?  We assume that the theory 
allows for the reconstruction of $D_{\sigma_*}$ from the data 
in the state on $\sigma_*$.  On the other hand, it is not the 
case that the collection of extremal surfaces for all possible 
subregions on $\sigma_*$ probes the entire $D_{\sigma_*}$. 
This suggests that the full reconstruction of $D_{\sigma_*}$ 
may need bulk time evolution.

There is, however, no a priori guarantee that the operation 
corresponding to bulk time evolution is complete within ${\cal H}_*$. 
This means that there may be no arrangement of operators defined 
in ${\cal H}_*$ that represents certain operators in $D_{\sigma_*}$. 
For these subsets of $D_{\sigma_*}$, bulk reconstruction would involve 
operators defined on other boundary spaces.  In other words, the 
operators supported purely in ${\cal H}_*$ may allow for a direct 
spacetime interpretation only for a portion of $D_{\sigma_*}$, 
e.g.\ the outside of the black hole horizon in the example of 
Fig.~\ref{fig:BH} (in which case some of the operators would 
represent the stretched horizon degrees of freedom).  Our assumption 
merely says that the operators in ${\cal H}_*$ acting on the state 
contain data equivalent to specifying the system on a Cauchy surface 
for $D_{\sigma_*}$.

The consideration above implies that the information in a holographic 
state on $\sigma_*$, when interpreted through operators in 
${\cal H}_*$, may only be partly semiclassical.  We expect 
that this becomes important when the spacetime has a horizon. 
In particular, for the $w=-1$ FRW universe, the leaf $\sigma_*$ 
is formally beyond the stretched de~Sitter horizon as viewed 
from $p(\tau)$.  This may mean that some of the degrees of freedom 
represented by operators defined in ${\cal H}_*$ can only be 
viewed as non-semiclassical stretched horizon degrees of freedom.

\subsection{Information about the ``exterior'' region}
\label{subsec:exterior}

The information about $D_{\sigma_*}$, contained in the screen 
entanglement entropies, is not sufficient to determine future 
states obtained by time evolution.  This information corresponds 
to that on the ``interior'' light sheet, i.e.\ the light sheet 
generated by light rays emanating in the $+k^a$ directions 
from $\sigma_*$.%
\footnote{If the light sheet encounters a singularity before 
 reaching a caustic, then the information about the singularity 
 may also be contained.}
However, even barring the possibility of information sent into 
the system from a past singularity or past null infinity (which 
we will discuss in Section~\ref{sec:discuss}), determining a 
future state requires information about the ``exterior'' light 
sheet, i.e.\ the one generated by light rays emanating in the 
$-k^a$ directions; see Fig.~\ref{fig:exterior}.%
\footnote{This light sheet is terminated at a singularity or 
 a caustic.  Note that the information beyond a caustic is not 
 needed to specify the state~\cite{Nomura:2013nya}, since it 
 is timelike related with the information on the interior light 
 sheet~\cite{Wald:1984rg} so that the two do not provide independent 
 information.}
\begin{figure}[]
  \includegraphics[height=6.5cm]{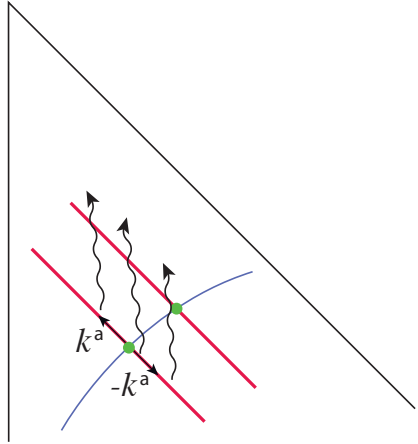}
\caption{To determine a state in the future, we need information 
 on the ``exterior'' light sheet, the light sheet generated by 
 light rays emanating from $\sigma_*$ in the $-k^a$ directions, 
 in addition to that on the ``interior'' light sheet, i.e.\ the 
 one generated by light rays emanating in the $+k^a$ directions.}
\label{fig:exterior}
\end{figure}
How is this information encoded in the holographic state?  Does 
it require additional holographic degrees of freedom beyond the 
${\cal A}_*/4$ degrees of freedom considered so far?

The simplest possibility is that the $e^{{\cal A}_*/4}$ microstates 
for each interior geometry (i.e.\ a fixed screen entanglement 
entropy structure) contain all the information associated with 
both the interior and exterior light sheets.  If this is indeed 
the case, then we do not need any other degrees of freedom in 
the holographic space $\sigma_*$ beyond the ${\cal A}_*/4$ ones 
discussed earlier.  It also implies the following properties 
for the holographic theory:
\begin{itemize}
\item {\bf Autonomous time evolution} 
--- Assuming the absence of a signal sent in from a past singularity 
or past null infinity (see Section~\ref{sec:discuss}), the evolution 
of the state is autonomous.  In particular, an initial pure state 
evolves into a pure state.
\item {\bf {\boldmath $S$}-matrix description for a dynamical black hole} 
--- As a special case, a pure state representing initial collapsing 
matter to form a black hole will evolve into a pure state representing 
final Hawking radiation, even if $p(\tau)$ hits the singularity at 
an intermediate stage (at least if the leaf stays outside the black 
hole); see Fig.~\ref{fig:BH}.
\item {\bf Strengthened covariant entropy bound} 
--- According to the original proposal of the covariant entropy 
bound~\cite{Bousso:1999xy,Bousso:2002ju}, the entropy on {\it each} 
of the interior and exterior light sheets is bounded by ${\cal A}_*/4$, 
implying that
\begin{equation}
  \ln {\rm dim}\, {\cal H}_* = 2 \times \frac{{\cal A}_*}{4} 
  = \frac{{\cal A}_*}{2},
\label{eq:ceb-1}
\end{equation}
where ${\cal H}_*$ is the Hilbert space associated with $\sigma_*$. 
The present picture instead says
\begin{equation}
  \ln {\rm dim}\, {\cal H}_* = \frac{{\cal A}_*}{4},
\label{eq:ceb-2}
\end{equation}
implying that the entropy on the {\it union} of the interior 
and exterior light sheets is bounded by ${\cal A}_*/4$:%
\footnote{This bound was anticipated earlier~\cite{Nomura:2013lia} 
 based on more phenomenological considerations.}
Note that the bound does not say that the entropy on each of the interior 
and exterior light sheets is separately bounded by ${\cal A}_*/8$, and 
so is profoundly holographic.  This bound is consistent with the fact 
that in any known realistic examples the covariant entropy bound is 
saturated only in one side of a leaf~\cite{Bousso:2010pm}.
\end{itemize}
The picture described here is, of course, a conjecture, which needs 
to be tested.  For example, if a realistic case is found in which 
the ${\cal A}_*/4$ bound is violated by the contributions from both 
the interior and exterior light sheets, then we would have to modify 
the framework, e.g., by adding an extra ${\cal A}_*/4$ degrees of 
freedom on the holographic space.  It is interesting that there is 
no known system that requires such a modification.

We finally discuss the connection with AdS/CFT.  In the limit that 
the spacetime becomes asymptotically AdS, the location of the holographic 
screen is sent to spatial infinity, so that ${\cal A}_* \rightarrow 
\infty$.  This implies that there are $N_* = e^{{\cal A}_*/4} \rightarrow 
\infty$ microstates for any spacetime configuration in $D_{\sigma_*}$ 
for a leaf $\sigma_*$, including the case that it is a portion of the 
empty AdS space.  Wouldn't this contradict the statement that the vacuum 
of a CFT is unique?

As we will discuss in Section~\ref{sec:discuss}, the degrees of freedom 
associated with $N_*$ correspond to a freedom of sending information 
into the system at a later stage of time evolution, i.e.\ that of inserting 
operators at locations other than the point $x_{-\infty}$ corresponding 
to $\tau = -\infty$ on the conformally compactified AdS boundary.  It 
is with this freedom that the CFT corresponds to the AdS limit of our 
theory including the $N_*$ degrees of freedom:
\begin{equation}
  {\rm CFT}\,\, \Longleftrightarrow\,\, 
    \lim_{{\cal M} \rightarrow {\rm asymptotic\, AdS}} {\cal T},
\label{eq:AdS-CFT-2}
\end{equation}
where ${\cal M}$ is the spacetime inside the holographic screen, and 
${\cal T}$ represents the theory under consideration.  Here, we have 
taken the holographic screen to stay outside the cutoff surface 
(corresponding to the ultraviolet cutoff of the CFT) which is also 
sent to infinity.

This implies that if we want to consider a setup in which the evolution 
of the state is ``autonomous'' within the bulk, then we need to fix a 
configuration of operators at $x \neq x_{-\infty}$, i.e.\ we need to 
fully fix a boundary condition at the AdS boundary.  The correspondence 
to our theory in this case is written as
\begin{equation}
  {\rm autonomous \,\, CFT}\,\, \Longleftrightarrow\,\, 
    \lim_{{\cal M} \rightarrow {\rm asymptotic\, AdS}} {\cal T}
    \,\scalebox{1.5}{/} N_*.
\label{eq:AdS-CFT}
\end{equation}
The conventional vacuum state of the CFT corresponds to a special 
configuration of the $N_*$ degrees of freedom that does not send in 
any signal to the system at later times (the simple reflective boundary 
conditions at the AdS boundary).  Given the correspondence between 
the $N_*$ degrees of freedom and boundary operators, we expect that 
this configuration is unique.  The state corresponding to the CFT 
vacuum in our theory is then unique:\ the vacuum state of the theory 
${\cal T}/N_*$ with the configuration of the $N_*$ degrees of freedom 
chosen uniquely as discussed above.

\subsection{Time evolution}
\label{subsec:time-evo}

Another feature of the holographic theory of general spacetimes 
beyond AdS/CFT is that the boundary space changes in time.  This 
implies that we need to consider the theory in a large Hilbert space 
containing states living in different boundary spaces, Eq.~(\ref{eq:H}). 
For states representing FRW universes, the relevant space can be 
written as
\begin{equation}
  {\cal H} = \sum_{\cal A} {\cal H}_{\cal A},
\label{eq:H-FRW}
\end{equation}
where ${\cal A}$ is the area of the leaf, and the sum of the Hilbert 
spaces is defined by Eq.~(\ref{eq:H_sum}).%
\footnote{More precisely, ${\cal H}_{\cal A}$ contains states whose 
 leaf areas fall in the range between ${\cal A}$ and ${\cal A} 
 + \delta{\cal A}$.  The precise choice of $\delta{\cal A}$ is 
 unimportant unless it is exponentially small in ${\cal A}$. 
 For example, the dimension of ${\cal H}_A$ is $e^{{\cal A}/4} 
 \delta{\cal A}$, so that the entropy associated with it is 
 ${\cal A}/4 + \ln \delta{\cal A}$, which is ${\cal A}/4$ at 
 the leading order in $1/{\cal A}$ expansion.}
While the microscopic theory involving time evolution is not yet 
available, we can derive its salient features by assuming that it 
reproduces the semiclassical time evolution in appropriate regimes. 
Here we discuss this issue for both direct sum and Russian doll 
structures.  In particular, we consider a semiclassical time evolution 
in which a state having the leaf area ${\cal A}_1$ evolves into 
that having the leaf area ${\cal A}_2$ ($> {\cal A}_1$).

\subsubsection*{Direct sum structure}

In this case there is a priori no need to introduce non-linearity 
in the algebra of observables, so we may assume that time evolution 
is described by a standard unitary operator acting on ${\cal H}$. 
In particular, time evolution of a state in ${\cal H}_{{\cal A}_1}$ 
into that in ${\cal H}_{{\cal A}_2}$ is given by a linear map from 
elements of ${\cal H}_{{\cal A}_1}$ to those in ${\cal H}_{{\cal A}_2}$. 

Consider microstates $\ket{\Psi^w_i}$ ($i = 1,\cdots,e^{{\cal A}_1/4}$) 
representing the FRW universe with $w$ when the leaf area is 
${\cal A}_1$, $\ket{\Psi^w_i} \in {\cal H}_{{\cal A}_1,w} \subset 
{\cal H}_{{\cal A}_1}$; see Eq.~(\ref{eq:direct-sum}).  Assuming that 
all these states follow the standard semiclassical time evolution,%
\footnote{Here we ignore the possibility that the equation of state 
 changes between the two times, e.g., by a conversion of the matter 
 content or vacuum decay.  This does not affect our discussion below.}
their evolution is given by
\begin{equation}
  \ket{\Psi^w_i} \rightarrow \ket{\Phi^w_i},
\label{eq:evol-DS}
\end{equation}
where $\{ \ket{\Phi^w_i} \}$ is a subset of the microstates 
$\ket{\Phi^w_j}$ ($j = 1,\cdots,e^{{\cal A}_2/4}$) representing the 
FRW universe with $w$ when the leaf area is ${\cal A}_2$, $\ket{\Phi^w_j} 
\in {\cal H}_{{\cal A}_2,w} \subset {\cal H}_{{\cal A}_2}$.  This has 
an important implication.  Suppose that the initial state of the universe 
is given by
\begin{equation}
  \ket{\Psi} = \sum_i a_i \ket{\Psi^w_i}.
\label{eq:Psi_init-DS}
\end{equation}
As we discussed before, if the effective number of terms in the sum 
is of order $e^{{\cal A}_1/4}$, namely if there are $e^{{\cal A}_1/4}$ 
nonzero $a_i$'s with size $|a_i| \sim e^{-{\cal A}_1/8}$, then the 
state $\ket{\Psi}$ is not semiclassical, i.e.\ a firewall state 
(because a superposition of that many microstates changes the structure 
of the entanglement entropies).  After the time evolution, however, 
this state becomes
\begin{equation}
  \ket{\Psi} \rightarrow \ket{\Phi} = \sum_i a_i \ket{\Phi^w_i},
\label{eq:Psi_fin-DS}
\end{equation}
where the number of terms in the sum is $e^{{\cal A}_1/4}$ 
because of the linearity of the map.  This implies that the state 
$\ket{\Phi}$ is {\it not} a firewall state, since the number of 
terms is much (exponentially) smaller than the dimensionality of 
${\cal H}_{{\cal A}_2,w}$:\ $e^{{\cal A}_1/4} \ll e^{{\cal A}_2/4}$. 
In particular, the state $\ket{\Phi}$ represents the standard 
semiclassical FRW universe with the equation of state 
parameter $w$.

This shows that this picture has a ``built-in'' mechanism of 
eliminating firewalls through time evolution, at least when the 
leaf area increases in time as we focus on here.  This process 
happens very quickly---any macroscopic increase of the leaf area 
makes the state semiclassical regardless of the initial state.

\subsubsection*{Spacetime equals entanglement}

In this case, time evolution from states in ${\cal H}_{{\cal A}_1}$ 
to those in ${\cal H}_{{\cal A}_2}$ is expected to be non-linear. 
Consider microstates $\ket{\Psi^w_i}$ ($i = 1,\cdots,e^{{\cal A}_1/4}$) 
representing the FRW universe with $w$ when the leaf area is 
${\cal A}_1$, $\ket{\Psi^w_i} \in {\cal H}_{{\cal A}_1}$.  As 
before, requiring the standard semiclassical evolution for all 
the microstates, we obtain
\begin{equation}
  \ket{\Psi^w_i} \rightarrow \ket{\Phi^w_i},
\label{eq:evol-SEE}
\end{equation}
where $\{ \ket{\Phi^w_i} \}$ is a subset of the microstates 
$\ket{\Phi^w_j}$ ($j = 1,\cdots,e^{{\cal A}_2/4}$) representing 
the FRW universe with $w$ when the leaf area is ${\cal A}_2$, 
$\ket{\Phi^w_j} \in {\cal H}_{{\cal A}_2}$.  Suppose the 
initial state
\begin{equation}
  \ket{\Psi} = \sum_i a_i \ket{\Psi^w_i} \equiv \ket{\Psi^{w'}},
\label{eq:Psi_init-SEE}
\end{equation}
represents the FRW universe with $w' < w$.  This is possible if the 
effective number of terms in the sum is of order $e^{{\cal A}_1/4}$, 
i.e.\ if there are $e^{{\cal A}_1/4}$ nonzero $a_i$'s with size 
$|a_i| \sim e^{-{\cal A}_1/8}$.  Now, if the time evolution map were 
linear, then this state would evolve into
\begin{equation}
  \ket{\Psi^{w'}} \rightarrow \ket{\Phi} = \sum_i a_i \ket{\Phi^w_i}.
\label{eq:Psi_fin-SEE}
\end{equation}
This state, however, is not a state representing the FRW universe with 
$w'$, since the effective number of terms in the sum, $e^{{\cal A}_1/4}$, 
is exponentially smaller than $e^{{\cal A}_2/4}$, the required number 
to obtain a state with $w'$ from the microstates $\ket{\Phi^w_i}$. 
To avoid this problem, the map from ${\cal H}_{{\cal A}_1}$ into 
${\cal H}_{{\cal A}_2}$ must be non-linear so that $\ket{\Psi^{w'}}$ 
evolves into $\ket{\Phi^{w'}}$ containing $e^{{\cal A}_2/4}$ terms 
when expanded in terms of $\ket{\Phi^w_i}$.

Here we make two comments.  First, the non-linearity of the map 
described above does not necessarily mean that the time evolution 
of semiclassical degrees of freedom (given as excitations on the 
background states considered here) is non-linear, since the definition 
of these degrees of freedom would also be non-linear at the fundamental 
level.  In fact, from observation this evolution must be linear, 
at least with high accuracy.  This requirement gives a strong 
constraint on the structure of the theory.  Second, the non-linearity 
seen above arises when the area of the boundary space changes, 
${\cal A}_1 \rightarrow {\cal A}_2 \neq {\cal A}_1$.  Since the 
area of the boundary is fixed in the AdS/CFT limit (with the standard 
regularization and renormalization procedure), this non-linearity 
does not show up in the CFT time evolution, generated by the 
dilatation operator with respect to the $t = -\infty$ point in 
the compactified Euclidean space.%
\footnote{This does not mean that the interior of a black hole is 
 described by state-independent operators in the CFT.  It is possible 
 that the CFT does not provide a description of the black hole 
 interior; see discussion in Section~\ref{subsec:reconst}.}

We finally discuss relations between different ${\cal H}_B$'s. 
While we do not know how they are related, for example they could 
simply exist as a direct sum in the full Hilbert space ${\cal H} 
= \bigoplus_B {\cal H}_B$, an interesting possibility is that 
their structure is analogous to the Russian doll structure within 
a single ${\cal H}_B$.  Specifically, let us introduce the notation 
to represent the Russian doll structure as
\begin{equation}
  \{ \ket{\Psi^w} \} \,\prec\, \{ \ket{\Psi^{w'}} \}
\quad\mbox{for}\quad
  w' < w,
\label{eq:new-notation}
\end{equation}
meaning that the left-hand side is a measure zero subset of the closure 
of the right-hand side.  
We may imagine that states $\ket{\Psi_B}$ representing spacetimes 
with boundary $B$ and states $\ket{\Psi_{B'}}$ representing those 
with boundary $B'$ are related similarly as
\begin{equation}
  \{ \ket{\Psi_B} \} \,\prec\, \{ \ket{\Psi_{B'}} \}
\quad\mbox{for}\quad
  \norm{B} < \norm{B'}.
\label{eq:rel-H_B}
\end{equation}
(The relation may be more complicated; for example, some of the 
$\ket{\Psi_B}$'s are related with $\ket{\Psi_{B'}}$'s and some 
with $\ket{\Psi_{B''}}$'s with $B'' \neq B'$.)  Ultimately, 
all states in realistic (cosmological) spacetimes may be related 
with those in asymptotically Minkowski space as
\begin{equation}
  \{ \ket{\Psi_B} \} \,\prec\, \{ \ket{\Psi_{B'}} \} 
  \cdots \,\prec\, \{ \ket{\Psi_{\rm Minkowski}} \},
\label{eq:H_B-H_Minkowski}
\end{equation}
since the boundary area for asymptotically Minkowski space 
is infinity, ${\cal A}_{\rm Minkowski} = \infty$.  Does string 
theory formulated in an asymptotically Minkowski background (using 
$S$-matrix elements) correspond to the present theory as
\begin{equation}
  {\rm String\,\, theory}\,\, \Longleftrightarrow\,\, 
    \lim_{{\cal M} \rightarrow {\rm asymptotic\,\, Minkowski}}{\cal T} 
  \,?
\label{eq:string}
\end{equation}
Here, the ${\cal T}/N_{\rm Minkowski}$ portion is described by the 
scattering dynamics, and the $N_{\rm Minkowski}$ degrees of freedom 
are responsible for the initial conditions, where $N_{\rm Minkowski} 
= e^{{\cal A}_{\rm Minkowski}/4}$; see the next section.  If this 
is indeed the case, then it would be difficult to obtain a useful 
description of cosmological spacetimes directly in that formulation, 
since they would correspond to a special measure zero subset of 
the possible asymptotic states.

\section{Discussion}
\label{sec:discuss}

In this final section, we discuss some of the issues that have not been 
addressed in the construction so far.  This includes the possibility 
of sending signals from a past singularity or past null infinity (in 
the course of time evolution) and the interpretation of a closed universe 
in which the area of the leaf changes from increasing to decreasing 
once the scale factor at the leaf starts decreasing.  We argue that 
these issues are related to that of ``selecting a state''---even if 
the theory is specified we still need to provide selection conditions 
on a state, usually given in the form of boundary conditions (e.g.\ 
initial conditions).  Our discussion here is schematic, but it allows 
us to develop intuition about how quantum gravity in general spacetimes 
might work at the fundamental level.

\subsubsection*{Signals from a past singularity or past null infinity}

As mentioned in Section~\ref{subsec:exterior}, the evolution of a state 
in the present framework is not fully autonomous.  Consistent with the 
covariant entropy bound, we may view a holographic state to carry the 
information on the two (future-directed ingoing and past-directed outgoing) 
light sheets associated with the leaf it represents.  However, this 
is not enough to determine a future state because there may be signals 
sent into the system from a past singularity or past null infinity 
(signals originating from the lower right direction between the two 
$45^\circ$ lines in Fig.~\ref{fig:exterior}).

\begin{figure}[]
  \includegraphics[height=6.5cm]{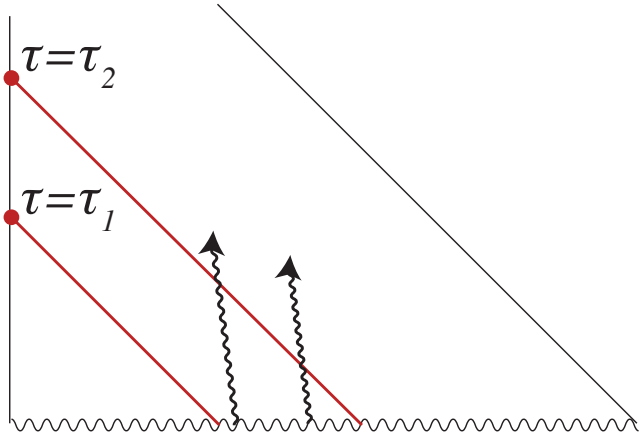}
\caption{In a universe beginning with a big bang, obtaining a future 
 state requires a specification of signals from the big bang singularity, 
 in addition to the information contained in the original state.  In 
 an FRW universe this is done by imposing spatial homogeneity and isotropy, 
 which corresponds to selecting a fine-tuned state from the viewpoint 
 of the big bang universe.}
\label{fig:sing}
\end{figure}

To be specific, let us consider a (not necessarily FRW) universe 
beginning with a big bang.  As shown in Fig.~\ref{fig:sing}, obtaining 
a future state (represented by the upper $45^\circ$ line) in general 
requires a specification of signals from the big bang singularity, 
in addition to the information contained in the original state (the 
lower $45^\circ$ line).  We usually avoid this issue by requiring the 
``cosmological principle,'' i.e.\ spatial homogeneity and isotropy, 
which determines what conditions one must put at the singularity---with 
this requirement, the state of the universe is determined by the energy 
density content in the universe at a time.  Imposing this principle, 
however, corresponds to choosing a very special state.  This is because 
there is no reason to expect that signals sent from the singularity 
at different times $\tau$ (defined holographically) are correlated 
in such a way that the system appears as homogeneous and isotropic 
in some time slicing.  In fact, this was one of the original 
motivations for inflationary cosmology~\cite{Guth:1980zm,Linde:1981mu,%
Albrecht:1982wi}.

In some cases, appropriate conditions can be obtained by assuming 
that the spacetime under consideration is a portion of larger spacetime. 
For example, if the universe is dominated by negative curvature at an 
early stage, it may arise from bubble nucleation~\cite{Coleman:1980aw}, 
in which case the homogeneity and isotropy would result from the 
dynamics of the bubble nucleation~\cite{Guth:2013sya}.  Even in this 
case, however, we would still need to specify similar conditions in 
the ambient space in which our bubble forms, and so on.  More generally, 
the analysis here says that to obtain a future state, we need to specify 
the information coming from the directions tangent to the past-directed 
light rays.  This, however, is morally the same as the usual situation 
in physics in which we need to specify boundary (e.g.\ initial) 
conditions beyond the dynamical laws the system obeys.

The situation is essentially the same in the limit of AdS/CFT; we 
only need to consider the AdS boundary instead of the big bang 
singularity.  To obtain future states, it is not enough to specify 
the initial state, given by a local operator inserted at the point 
$x_{-\infty}$ corresponding to $\tau = -\infty$ on the conformally 
compactified AdS boundary.  We also have to specify other (possible) 
boundary operators inserted at points other than $x_{-\infty}$. 

String theory formulated in terms of the $S$-matrix deals with 
this issue by adopting an asymptotically Minkowski time slice in 
which all the necessary information is viewed as being in the initial 
state.  This, however, does not change the amount of information 
needed to specify the state, which is infinite in asymptotically 
Minkowski space (because one can in principle send an infinite 
amount of information into the system from past null infinity).

\subsubsection*{Closed universes---time in quantum gravity}

Consider a closed universe in which the vacuum energy is negligible 
throughout its history.  In such a universe, the area of the leaf 
changes from increasing to decreasing in the middle of its evolution. 
On the other hand, we expect that the area of the leaf for a 
``generic'' state increases monotonically, since the number of 
independent states representing spacetime with the leaf area 
${\cal A}$ goes as $e^{{\cal A}/4}$.  What does this imply?

We interpret that states representing universes like these are 
``fine-tuned,'' so that they do not obey the usual second law of 
thermodynamics as applied to the Hilbert space of quantum gravity. 
This does not mean that they are meaningless states to consider. 
Rather, it means that we need to scrutinize carefully the concept 
of time in quantum gravity.

There are at least three different views of time in quantum gravity; 
see, e.g., Ref.~\cite{Nomura:2015zda}.  First, since time parameterization 
in quantum gravity is nothing other than a gauge choice, the 
state $\ket{\tilde{\Psi}}$ of the {\it full} system---whatever 
its interpretation---satisfies the constraint~\cite{DeWitt:1967yk,%
Wheeler:1967}
\begin{equation}
  H \ket{\tilde{\Psi}} = 0,
\label{eq:WD-eq}
\end{equation}
where $H$ is the time evolution operator, in our context the generator 
of a shift in $\tau$.  In this sense, the concept of time evolution 
does not apply to the full state $\ket{\tilde{\Psi}}$.%
\footnote{Reference~\cite{DeWitt:1967yk} states that Eq.~(\ref{eq:WD-eq}) 
 need not apply in an infinite world; for example, the state of the 
 system $\ket{\Psi_\infty}$ may depend on time in asymptotically 
 Minkowski space.  We view that Eq.~(\ref{eq:WD-eq}) still applies 
 in this case by interpreting $\ket{\tilde{\Psi}}$ to represent the 
 full system, including the ``exterior'' degrees of freedom discussed 
 in Section~\ref{subsec:exterior} (the degrees of freedom corresponding 
 to $N_{\rm Minkowski}$ below Eq.~(\ref{eq:string})) as well as the 
 ``interior'' degrees of freedom represented by $\ket{\Psi_\infty}$. 
 The time evolution of $\ket{\Psi_\infty}$ is then understood as 
 correlations between the interior and exterior degrees of freedom, 
 as described below.}
However, this of course does not mean that {\it physical time} we 
perceive is nonexistent.  Time we observe can be defined as correlations 
between subsystems (e.g.\ between an object playing the role of a 
clock and the rest)~\cite{DeWitt:1967yk,Page:1983uc}, at least in 
some branch of $\ket{\tilde{\Psi}}$.  Another way to define time 
is through probability flow in $\ket{\tilde{\Psi}}$.  Suppose 
$\ket{\tilde{\Psi}}$ is expanded in a set of states $\ket{\Psi_i}$ 
each of which represents a well-defined semiclassical spacetime when 
such an interpretation is applicable:
\begin{equation}
  \ket{\tilde{\Psi}} = \sum_i c_i \ket{\Psi_i}.
\label{eq:Psi-expand}
\end{equation}
According to the discussion in Section~\ref{sec:beyond}, $\ket{\Psi_i}$'s 
are approximately orthogonal in the appropriate limit, and the constraint 
in Eq.~(\ref{eq:WD-eq}) implies
\begin{equation}
  \sum_j c_j U_{ij} = c_i,
\qquad
  U_{ij} \equiv \bra{\Psi_i} e^{-i H \delta\tau} \ket{\Psi_j},
\label{eq:exp-const}
\end{equation}
where $U_{ij}$ is (effectively) unitary
\begin{equation}
  \sum_j U_{ij} U_{kj}^* = \sum_j U_{ji} U_{jk}^* = \delta_{ik}.
\label{eq:U_ij}
\end{equation}
Multiplying Eq.~(\ref{eq:exp-const}) with its conjugate and using 
Eq.~(\ref{eq:U_ij}), we obtain
\begin{align}
  0 = & -|c_i|^2 \sum_{j \neq i} |U_{ji}|^2 
    + \sum_{j \neq i} |c_j|^2 |U_{ij}|^2 
\nonumber\\
  & {} + \sum_{j \neq i} c_i c_j^* U_{ii} U_{ij}^* 
    + \sum_{j \neq i} c_j c_i^* U_{ij} U_{ii}^* 
    + \sum_{\substack{j,k \neq i \\ j \neq k}} c_j c_k^* U_{ij} U_{ik}^*.
\label{eq:interm}
\end{align}
In the regime where the WKB approximation is applicable, the terms in 
the second line are negligible compared with those in the first line 
because of a rapid oscillation of the phases of $c_{j,k}$'s, so that
\begin{equation}
  |c_i|^2 \sum_{j \neq i} |U_{ji}|^2 
  = \sum_{j \neq i} |c_j|^2 |U_{ij}|^2,
\label{eq:prob-flow}
\end{equation}
implying that the ``current of probability'' is conserved.  We may 
regard this current as time flow.  The time defined in this way---which 
we call {\it current time}---need not be the same as the physical time 
defined through correlations, although in many cases the former agrees 
approximately with the latter or the negative of it (up to a trivial 
shift and rescaling).

In a closed universe (with a negligible vacuum energy), it is customary 
to impose the boundary condition
\begin{equation}
  c_i = 0
\quad\mbox{for}\quad
  \{ \ket{\Psi_i}\, |\, a=0 \},
\label{eq:closed-bc}
\end{equation}
i.e.\ the wavefunction vanishes when the scale factor goes to 
zero~\cite{DeWitt:1967yk}.  With this boundary condition, current 
time $\tau$ flows in a closed circuit.  The direction of the flow 
agrees with that of physical time in the branches where $da/d\tau > 0$, 
while the two are exactly the opposite in the branches where 
$da/d\tau < 0$.  (The latter statement follows, e.g., from the 
analysis in Ref.~\cite{Aguirre:2011ac}, which shows that given 
a lower entropy final condition the most likely history of a system 
is the $CPT$ conjugate of the standard time evolution.)  Our time 
evolution in earlier sections concerns the flow of current time. 
The (apparent) violation of the second law of thermodynamics then 
arises because the condition of Eq.~(\ref{eq:closed-bc}) selects 
a special, ``standing wave'' solution from the viewpoint of the 
current time flow.  This is, however, not a fine-tuning from the 
point of view of the quantum theory in a similar way as the electron 
energy levels of the hydrogen atom are not regarded as fine-tuned 
states.

The fact that current time flows toward lower entropy states does 
not mean that a physical observer living in the $da/d\tau < 0$ 
phase sees a violation of the second law of thermodynamics.  Since 
the whole system evolves as time reversal of a standard entropy 
increasing process, {\it including memory states of the observer}, 
a physical observer always finds the evolution of the system to be 
the standard one~\cite{Nomura:2011rb,Aguirre:2011ac}; in particular, 
he/she always finds that the universe is expanding.

\subsubsection*{Static quantum multiverse---selecting the state 
in the landscape}

The analysis of string theory suggests that the theory has a multitude 
of metastable vacua each of which leads to a distinct low energy 
effective theory~\cite{Bousso:2000xa,Kachru:2003aw,Susskind:2003kw,%
Douglas:2003um}.  Combining this with the fact that many of these vacua 
lead to inflation (which is future eternal at the semiclassical level) 
leads to the picture of the inflationary multiverse~\cite{Guth:1982pn,%
Vilenkin:1983xq,Linde:1986fd,Linde:1986fc}.  The picture suggests that 
our universe is one of many bubble universes, and it cannot be a closed 
universe that will eventually collapse as the one discussed above. 
How is the state of the multiverse selected then?

A naive semiclassical picture implies that the state of the multiverse 
evolves asymptotically into a superposition of supersymmetric Minkowski 
worlds and (possibly) ``singularity worlds'' resulting from the big 
crunches of AdS bubble universes~\cite{Nomura:2011rb}.  This is because 
any other universe is expected to decay eventually.  There are basically 
two possibilities for the situation in a full quantum theory.

The first possibility is that the multiverse is in a ``scattering 
state.''  This essentially preserves the semiclassical intuition.  From 
the viewpoint of the current time flow, the multiverse begins as an 
asymptotic state, experiences nontrivial cosmology at an intermediate 
stage, and then dissipates again into the asymptotic Minkowski and 
singularity worlds.  In the earlier stage of the evolution in which 
the coarse-grained entropy decreases in $\tau$, the directions of 
current and physical time flows are opposite, while in the later 
stage of increasing entropy, the flows of the two times are in 
the same direction.  The resulting picture is similar to that of 
Ref.~\cite{Carroll:2004pn}:\ the multiverse evolves asymptotically 
into both forward and backward in (current) time.  This, however, 
does not mean that a physical observer, who is a part of the system, 
sees an entropy decreasing universe; the observer always finds that 
his/her world obeys the second law of thermodynamics.

A problem with this possibility is that specifying the theory of quantum 
gravity, e.g.\ the structure of the Hilbert space and Hamiltonian, 
is not enough to obtain the state of the multiverse and hence make 
predictions.  We would need a separate theory to specify initial 
conditions.  Furthermore, having a lower course-grained entropy at 
the turn-around point (the point at which the coarse-grained entropy 
changes from decreasing to increasing in the current time evolution) 
requires a more carefully chosen initial condition.  This leads to 
the issue of understanding why we are ``ordinary observers,'' carrying 
course-grained entropies (much) smaller than that needed to have 
any consciousness---a variant of the well-known Boltzmann brain 
problem~\cite{Dyson:2002pf,Albrecht:2002uz,Page:2006dt} (the argument 
applied to space of initial conditions, rather than to a thermal system).

The alternative, and perhaps more attractive, possibility is 
that the multiverse is in a ``bound state''~\cite{Nomura:2012zb}. 
Specifically, the multiverse is in a {\it normalizable} state 
satisfying the constraint of Eq.~(\ref{eq:WD-eq}) (as well as 
any other constraints):
\begin{equation}
  \ket{\tilde{\Psi}} = \sum_i c_i \ket{\Psi_i};
\qquad
  \sum_i |c_i|^2 < \infty.
\label{eq:Psi-norm}
\end{equation}
This is a normalization condition in spacetime, rather than in space 
as in usual quantum mechanics, and it allows us to determine, in 
principle, the state of the multiverse once the theory is given.%
\footnote{If there are multiple solutions $\ket{\tilde{\Psi}_I}$, 
 it is natural to assume that the multiverse is in the maximally 
 mixed state $\rho = \frac{1}{N} \sum_{I=1}^N \ket{\tilde{\Psi}_I} 
 \bra{\tilde{\Psi}_I}$ (in the absence of more information).  Here, 
 we have taken $\ket{\tilde{\Psi}_I}$'s to be orthonormal.}
As in the case of a collapsing closed universe, current time flows 
in a closed circuit(s) to the extent that this concept is applicable. 
This suggests that the multiverse does not probe an asymptotic 
supersymmetric Minkowski region or the big crunch singularity of 
an AdS bubble.  The origin of this phenomenon must be intrinsically 
quantum mechanical as it contradicts the naive semiclassical picture. 
In fact, such a situation is not new in physics.  As is well known, 
the hydrogen atom cannot be correctly described using classical 
mechanics:\ any orbit of the electron is unstable with respect to 
the emission of synchrotron radiation.  The situation in the quantum 
multiverse may be similar---quantum mechanics is responsible for the 
very existence of the system.

Once the state of the multiverse is determined, we should be able 
to use it to give predictions or explanations.  This requires us to 
develop a prescription for extracting answers to physical questions 
about the state.  The prescription would certainly involve coarse-graining 
(as one cannot even store the information of all possible microstates 
of the multiverse within the multiverse), and it should reproduce the 
standard Born rule giving probabilistic predictions in the appropriate 
regime.  Perhaps, the normalization condition of Eq.~(\ref{eq:Psi-norm}) 
is required in order for this prescription to be well-defined.

\section*{Acknowledgments}

We thank Raphael Bousso, Zachary Fisher, Veronika Hubeny, and Mukund 
Rangamani for conversations at the 6th Berkeley Center for Theoretical 
Physics Tahoe Summit, and we thank Sumanta Chakraborty for a useful 
comment.  We are also grateful to Kavli Institute for the Physics 
and Mathematics of the Universe, University of Tokyo for hospitality 
during the visit in which a part of this work was carried out. 
This work was supported in part by the Department of Energy, Office 
of Science, Office of High Energy Physics, under contract No.\ 
DE-AC02-05CH11231 and DE-SC0011702, by the National Science Foundation 
under grant PHY-1521446, by MEXT KAKENHI Grant Number 15H05895, and 
by Foundational Questions Institute grant FQXi-RFP-1507.  N.S. was 
supported in part by the Simons Heising Physics Fellowship Fund. 
F.S. was supported in part by the DOE NNSA Stewardship Science Graduate 
Fellowship.

\appendix

\section{Spacelike Monotonicity Theorem}
\label{app:spacelike}

Let $H$ be a past holographic screen, foliated by compact marginally 
anti-trapped surfaces i.e.\ leaves, $\{ \sigma_r \}$.  Here, $r$ is 
a (non-unique) real parameter taken to be a monotonically increasing 
function of the leaf area.  For each leaf we can construct the two 
future-directed null vector fields (up to overall normalization) and 
denote them $k^a$ and $l^a$, which satisfy
\begin{equation}
  \theta_k = 0,
\qquad
  \theta_l > 0.
\label{eq:theta_cond}
\end{equation}
Now let $h^a$ a leaf-orthogonal vector field tangent to $H$ and normalized 
by the condition $h^a \partial_a r = 1$.   Note that $h^a$ must point in 
the direction of increasing area.  We can always put $h^a = \alpha l^a 
+ \beta k^a$ for some smooth real-valued functions $\alpha$ and $\beta$ 
on $H$.  The Bousso-Engelhardt area theorem implies that $\alpha > 0$ 
everywhere.  There is no restriction on the sign of $\beta$:\ it can 
even have indefinite sign on a single leaf.

Let $A_r$ be a $d-2$ dimensional region in a leaf $\sigma_r$ and let 
$\partial A_r$ denote its boundary, where $d$ is the spacetime dimension. 
This region can be transported to a region $A_{r'}$ in a nearby leaf 
$\sigma_{r'}$ by following the integral curves of the leaf-orthogonal 
vector field $h^a$.  While Ref.~\cite{Sanches:2016pga} pointed out 
that $\norm{A_r}$ is an increasing function of $r$, this by itself 
does not guarantee that $S(A_r)$ monotonically increases.  Nonetheless, 
we now show that $S(A_r)$ indeed monotonically increases if $h^a$ 
is spacelike.

\begin{theorem}
Suppose that $H$ is a past holographic screen foliated by leaves 
$\{ \sigma_r \}$ and assume that the parameter $r$ is oriented to 
increase as leaf area increases.  Assume that $H$ is spacelike on 
some particular leaf which we take to be $\sigma_0$ by shifting $r$ 
if necessary.  Let $A_0$ be a subregion of $\sigma_0$ and define 
$A_r \subset \sigma_r$ by transporting points in $A_0$ along the 
integral curves of the leaf-orthogonal vector field in $H$.  Then, 
$S(A_r)$ is a monotonically increasing function of $r$.
\end{theorem}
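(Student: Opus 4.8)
The plan is to reduce the theorem to the pointwise inequality $\frac{d}{dr}\,\norm{E(A_r)} \geq 0$, valid at every leaf on which $H$ is spacelike, and to extract it from the first variation of area. Write $E_r \equiv E(A_r)$ for the (minimal) codimension-$2$ extremal surface in $D_{\sigma_r}$ anchored on $\partial A_r$, so $S(A_r) = \tfrac14\norm{E_r}$. As $r$ varies, $\partial E_r = \partial A_r$ is dragged along the leaf-orthogonal field $h^a = \alpha l^a + \beta k^a$, normalized by $h^a\partial_a r = 1$. Since $E_r$ is extremal its mean-curvature vector vanishes, so the interior term in the first variation drops out and only the boundary term remains:
\[
  \frac{d}{dr}\,\norm{E_r} = \int_{\partial A_r} h^a s_a \,,
\]
where $s^a$ is the unit conormal of $\partial A_r$ within $E_r$, oriented out of $E_r$. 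The theorem therefore reduces to showing that $h^a s_a \geq 0$ pointwise on $\partial A_r$.

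To pin down the sign I would work, near a chosen point of $\partial A_r$, in a null frame adapted to the leaf: coordinates $(u,v,y^i)$ with $\partial_u = k^a$, $\partial_v = l^a$, the leaf locally $\{u=v=0\}$, and $ds^2 = -2f\,du\,dv + q_{ij}\,dy^i dy^j$, $f>0$ (so $k\cdot l = -f < 0$, $k^a$ and $l^a$ being distinct future-directed null normals). Two inputs fix the relevant signs. First, $\alpha>0$ by the Bousso--Engelhardt area theorem, while the hypothesis that $H$ is spacelike gives $0 < h^a h_a = 2\alpha\beta\,(k\cdot l)$, which forces $\alpha\beta<0$ and hence $\beta<0$; this is the one place the spacelike assumption is used, and it is exactly why no such theorem holds for a timelike screen ($\beta>0$). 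Second, by the containment result of Ref.~\cite{Sanches:2016sxy} quoted above, $E_r \subset D_{\sigma_r}$, and near the rim $\sigma_r$ the region $D_{\sigma_r}$ is the ``interior wedge'' $\{u\geq0,\ v\leq0\}$, bounded by the two light sheets of $\sigma_r$ (generated by $+k^a$ and by $-l^a$). Consequently any displacement from $\partial A_r$ into $E_r$ has non-negative $u$-component and non-positive $v$-component, so the outward conormal satisfies $s^u\leq0$ and $s^v\geq0$.

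The contraction is then immediate: $h^a$ is leaf-orthogonal and so has no $y^i$-components, and $k^a,l^a$ are null and annihilate leaf-tangent vectors, so the leaf-tangent part of $s^a$ drops out and $h^a s_a = (k\cdot l)\,(\alpha\,s^u + \beta\,s^v)$. Here $\alpha\,s^u\leq0$ ($\alpha>0$, $s^u\leq0$), $\beta\,s^v\leq0$ ($\beta<0$, $s^v\geq0$), and $k\cdot l<0$, whence $h^a s_a\geq0$ at every point of $\partial A_r$ and $\frac{d}{dr}\norm{E_r}\geq0$.

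I expect the substantive difficulty to lie not in this sign count but in justifying the first-variation step. One must choose the minimal extremal representative $E_r$ so that it varies continuously (ideally Lipschitz) in $r$, handle the values of $r$ where the minimizing branch jumps between competing extremal surfaces (each branch is individually monotone, so the $\min$ stays monotone but is only piecewise differentiable), and treat the degenerate regime in which $E_r$ becomes null or runs along $\partial D_{\sigma_r}$, as in the $w\to-1$ limit. For these I would invoke the maximin representation of $E_r$ from Refs.~\cite{Sanches:2016sxy,Wall:2012uf} to obtain a well-behaved representative, prove the inequality first in the strictly spacelike, strictly interior case, and then pass to the general case by taking limits. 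A smaller point is to verify the ``interior wedge'' local structure of $D_{\sigma_r}$ near $\sigma_r$, which follows from its definition as the domain of dependence of an interior achronal hypersurface whose only boundary is $\sigma_r$.
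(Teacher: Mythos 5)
Your proof is essentially correct, but it takes a genuinely different route from the paper's. You differentiate: you reduce the claim to the first-variation boundary term $\frac{d}{dr}\norm{E(A_r)} = \int_{\partial A_r} h^a s_a$ (the bulk term vanishing by extremality) and then establish the pointwise sign $h^a s_a \geq 0$ from three inputs --- $\alpha>0$ (Bousso--Engelhardt), $\beta<0$ (spacelikeness of $H$), and $E(A_r)\subset D_{\sigma_r}$ forcing $s^u\leq 0$, $s^v\geq 0$; your null-frame computation of $h^a s_a=(k\cdot l)(\alpha s^u+\beta s^v)$ and the resulting sign count are correct. The paper instead argues discretely via the maximin representation: it extends the achronal slice $\Sigma_0$ realizing $E(A_0)=\min(A_0,\Sigma_0)$ by gluing on the screen segment $H[0,\epsilon]$ (achronality and containment in $D_{\sigma_\epsilon}$ being exactly where $\alpha>0$, $\beta<0$ enter), and then projects the piece of $\min(A_\epsilon,\Sigma_\epsilon)$ lying on $H[0,\epsilon]$ back to $\Sigma_0$ along $h^a$, using positive-definiteness of the induced metric on the spacelike screen to get a strict area decrease; chaining the min and maximin inequalities gives $S(A_0)<S(A_\epsilon)$ with no differentiation at all. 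What your approach buys is transparency: it localizes the mechanism to a sign at the anchor and makes manifest why the theorem fails for timelike screens ($\beta>0$ flips one term). What the paper's approach buys is that it sidesteps the regularity issues you correctly flag (differentiability of the family $E_r$ in $r$, degeneration of $E_r$ toward the null boundary of $D_{\sigma_r}$), and it delivers \emph{strict} monotonicity directly; your pointwise bound is only $h^a s_a\geq 0$, so to match the theorem's strict increase you would additionally need that $E(A_r)$ is nowhere tangent to $\sigma_r$ along $\partial A_r$ (so that $s^a$ has a nonvanishing leaf-normal part somewhere), a fact the paper also invokes for its projection step. Your handling of phase transitions (each extremal branch monotone, hence the minimum monotone) is at the same level of rigor as the paper's own caveat.
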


\begin{proof}
Let $h^a$ be the leaf-orthogonal vector field tangent to $H$ with 
$h^a \partial_a r = 1$ and note that $h^a\big|_{\sigma_0}$ is spacelike. 
The compactness of $\sigma_0$ now allows us to find $r_0 > 0$ such 
that $h^a\big|_{H[-r_0,r_0]}$ is spacelike.  Here we have introduced 
the convenient notation
\begin{equation}
  H[r_1,r_2] = \bigcup_{r_1 \leq r \leq r_2} \sigma_r.
\end{equation}

In what follows, we will assume that the extremal surface $E(A_r)$ 
anchored to $\partial A_r$ deforms smoothly as a function of $r$ at 
$r=0$.  If this is not the case, a phase transition occurs at $r=0$ 
which will give rise to a discontinuity in the derivative of $S(A_r)$. 
However, we can then note that our theorem applies at $r$ slightly 
greater than zero (where $H$ is still spacelike and where no phase 
transition occurs) and also at $r$ slightly smaller than zero. 
This implies that $S(A_r)$ is monotonically increasing at $r=0$ 
even if $E(A_r)$ ``jumps'' at $r=0$ so that the derivative of 
$\norm{E(A_r)}$ has a discontinuity.

The maximin construction of $E(A_0)$ ensures that there exists 
$\Sigma_0 \in {\mathcal C}_{\sigma_0}$ such that $E(A_0) = 
\min(A_0,\Sigma_0)$.  Here, ${\mathcal C}_\sigma$ denotes the 
collection of all complete codimension-1 achronal surfaces lying 
in $D_\sigma$ that are anchored to $\sigma$, and $\min(A,\Sigma)$ 
denote the $d-2$ dimensional surface of minimal area lying in 
$\Sigma$ that is homologous to $A$.  If $0< \epsilon < r_0$, let
\begin{equation}
  \Sigma_\epsilon = \Sigma_0 \cup H[0,\epsilon].
\end{equation}
We claim that $\Sigma_\epsilon \in {\mathcal C}_{\sigma_\epsilon}$ for 
small $\epsilon$.  First we check that $\Sigma_\epsilon$ is achronal. 
Since $\Sigma_0$ and $H[0,\epsilon]$ are achronal independently, 
we focus on their intersection at $\sigma_0$.  The definition of 
${\mathcal C}_{\sigma_0}$ requires that $\Sigma_0$ lies in $D_{\sigma_0}$ 
so that a vector pointing from $\sigma_0$ to $\Sigma_0$ has the form 
$c_1 k^a - c_2 l^a$ with $c_1, c_2 > 0$.  Meanwhile, a vector pointing 
from $\sigma_0$ to $H[0,\epsilon]$ is proportional to $h^a\big|_{\sigma_0} 
= |\alpha| l^a - |\beta| k^a$.  Here we have made use of the fact that 
$\alpha >0$ and $\beta<0$ for a spacelike past holographic screen.  We 
see now that $\Sigma_0$ lies ``inside'' $\sigma_0$ while $h^a$ points 
toward the ``outside.''  This ensures that $\Sigma_\epsilon$ is achronal 
for sufficiently small $\epsilon$.  All that is left to check is that 
$\Sigma_\epsilon$ lies inside of $D_{\sigma_\epsilon}$.  But this is clear 
because a vector pointing from $\sigma_\epsilon$ toward $\Sigma_\epsilon$ 
is proportional to $-h^a\big|_{\sigma_\epsilon} = -|\alpha| l^a + 
|\beta| k^a$ which is indeed directed into $D_{\sigma_\epsilon}$. 
That $\Sigma_\epsilon \in {\mathcal C}_{\sigma_\epsilon}$ is now 
clear for small $\epsilon$.

We now construct an $\epsilon$-dependent family of $d-2$ dimensional 
surfaces lying on $\Sigma_0$ that are anchored to $\partial A_0$, 
which we will denote by $\Xi_\epsilon$.  Begin by fixing a small 
$\epsilon$ with $0< \epsilon < r_0$ and defining a projection function 
$\pi_\epsilon: H[0,\epsilon] \to \sigma_0$ in the natural way:\ if 
$p \in H[0,\epsilon]$, follow the integral curves of $h^a$, starting 
from $p$, until a point in $\sigma_0$ is reached.  The result is 
$\pi_\epsilon(p)$.  We can now define $\Xi_\epsilon$:
\begin{equation}
  \Xi_\epsilon 
  = \Big( \min(A_\epsilon, \Sigma_\epsilon) \cap \Sigma_0 \Big) 
    \bigcup \pi_\epsilon\Big( \min(A_\epsilon, \Sigma_\epsilon) 
      \cap H[0,\epsilon] \Big).
\end{equation}
If $\epsilon$ is sufficiently small, the fact that $H[0,\epsilon]$ has 
a positive definite metric, along with the fact that $E(A_0)$ is not 
tangent to $\sigma_0$ anywhere, ensures that $\norm{\pi_\epsilon\bigl( 
\min(A_\epsilon, \Sigma_\epsilon) \cap H[0,\epsilon] \bigr)} < \norm{ 
\min(A_\epsilon, \Sigma_\epsilon) \cap H[0,\epsilon]}$.  From this it 
follows that
\begin{equation}
  \norm{\Xi_\epsilon} < \norm{\min(A_\epsilon, \Sigma_\epsilon)}.
\end{equation}
On the other hand, because $\pi_\epsilon(\partial A_\epsilon) = 
\partial A_0$, we know that $\Xi_\epsilon$ is a codimension-2 surface 
anchored to $\partial A_0$ that lies only on $\Sigma_0$.  Thus,
\begin{equation}
  4 S(A_0) = \norm{\min(A_0,\Sigma_0)} 
  \leq \norm{\Xi_\epsilon}.
\end{equation}
Noting that the maximin construction of $E(A_\epsilon)$ requires
\begin{equation}
  \norm{\min(A_\epsilon, \Sigma_\epsilon)} \leq 4 S(A_\epsilon),
\end{equation}
we find $S(A_0) < S(A_\epsilon)$.
\end{proof}

\section{Qubit Model}
\label{app:qubit}

\subsection{Model and applications to quantum gravity}

Here we describe a toy model for holographic states representing FRW 
universes, presented originally in Ref.~\cite{Nomura:2016aww}.  We 
consider a Hilbert space for $N$ ($\gg 1$) qubits $\mathcal{H} = 
({\mathbf C}^{2})^{\otimes N}$.  Let $\Delta$ ($\leq N$) be a nonnegative 
integer and consider a typical superposition of $2^\Delta$ product states
\begin{equation}
  \ket{\Psi} = \sum_{i=1}^{2^\Delta} 
    a_i\, \ket{x^i_1 x^i_2 \cdots x^i_N},
\label{eq:psi}
\end{equation}
where $\{ a_i \}$ is a normalized complex vector, and $x^i_{1,\cdots,N} 
\in \{ 0,1 \}$.  Given an integer $n$ with $1 \leq n < N$, we can break 
the Hilbert space into a subsystem $\Gamma$ for the first $n$ qubits 
and its complement $\bar{\Gamma}$.  We are interested in computing 
the entanglement entropy $S_\Gamma$ of $\Gamma$.

Suppose $n \leq N/2$.  If $\Delta \geq n$, then $i$ in Eq.~(\ref{eq:psi}) 
runs over an index that takes many more values than the dimension of 
the Hilbert space for $\Gamma$, so that Page's argument~\cite{Page:1993df} 
tells us that $\Gamma$ has maximal entanglement entropy:\ $S_\Gamma 
= n \ln 2$.  On the other hand, if $\Delta < n$ then the number of terms 
in Eq.~(\ref{eq:psi}) is much less than both the dimension of the Hilbert 
space of $\Gamma$ and that of $\bar{\Gamma}$, which limits the entanglement 
entropy:\ $S_\Gamma = \Delta \ln 2$.  We therefore obtain
\begin{equation}
  S_\Gamma = 
    \begin{cases}
      n      & n \leq \Delta, \\
      \Delta & n > \Delta,
    \end{cases}
\label{eq:S_EE-1}
\end{equation}
for $\Delta < N/2$, while
\begin{equation}
  S_\Gamma = n,
\label{eq:S_EE-2}
\end{equation}
for $\Delta \geq N/2$.  Here and below, we drop the irrelevant factor 
of $\ln 2$.  The value of $S_\Gamma$ for $n > N/2$ is obtained from 
$S_\Gamma = S_{\bar{\Gamma}}$ since $\ket{\Psi}$ is pure.

The behavior of $S_\Gamma$ in Eqs.~(\ref{eq:S_EE-1},~\ref{eq:S_EE-2}) 
models that of $S(\gamma)$ in Section~\ref{subsec:single}.  The 
correspondence is given by
\begin{align}
  \frac{n}{N} &\,\leftrightarrow\, 
    \frac{\norm{\Gamma}}{{\cal A}_*},
\label{eq:corresp-1}\\
  \frac{\Delta}{N} &\,\leftrightarrow\, 
    \frac{1}{2} Q_w\Bigl(\frac{\pi}{2}\Bigr),
\label{eq:corresp-2}
\end{align}
for $\Delta \leq N/2$.%
\footnote{States with $\Delta > N/2$ cannot be discriminated from those 
 with $\Delta = N/2$ using $S_\Gamma$ alone.  Below, we only consider 
 the states with $N/4 \leq \Delta \leq N/2$.}
The identification of Eq.~(\ref{eq:corresp-1}) is natural if we regard 
the $N = {\cal A}_*/4$ qubits as distributing over a leaf $\sigma_*$ 
with each qubit occupying a volume of $4$ in Planck units.  The quantity 
$\Delta$ controls what universe a state represents.  For fixed $\Delta$, 
different choices of the product states $\ket{x^i_1 x^i_2 \cdots x^i_N}$ 
and the coefficients $a_i$ give $e^N$ independent microstates for the 
FRW universe with $w = f(\Delta/N)$.  The function $f$ is determined 
by Eq.~(\ref{eq:corresp-2}); in particular, $f = -1$ ($> -1$) for 
$\Delta/N = 1/2$ ($< 1/2$).

This model can be used to argue for features of the holographic theories 
discussed in Section~\ref{subsec:structure}.  We consider two cases:
\begin{itemize}
\item[] {\bf Direct sum structure} 
--- In this case, each of the subspaces ${\cal H}_{*,w}$ is modeled 
by the $N$ qubit system described here.  Consider ${\cal H}_{*,w}$ 
with a fixed $w$.  States representing the FRW universe with $w$ 
then encompass $e^N$ independent microstates in this space.  These 
microstates form ``effective vector space'' in that a superposition 
involving less than $e^{O(\delta w N)}$ of them leads only to another 
microstate representing the same FRW universe with $w$.  (We say 
that these states comprise ``fat'' preferred axes.)  Most of the 
states in ${\cal H}_{*,w}$, containing more than $e^{O(\delta w N)}$ 
of the $w$ microstates, are regarded as non-semiclassical, i.e.\ 
firewall or unphysical, states.
\item[] {\bf Russian doll structure} 
--- In this case, the entire ${\cal H}_*$ space is modeled by the 
$N$ qubits, and the states representing various FRW universes are all 
elements of this single Hilbert space of dimension $e^N$.  An important 
point is that the set of states with {\it any} fixed $\Delta_w$ 
provide a complete basis for the whole Hilbert space, where $\Delta_w 
\equiv N f^{-1}(w)$.  This implies that we can obtain a state with 
any $w' < w$ by superposing $e^{\Delta_{w'} - \Delta_w}$ states 
with $\Delta_w$, and we can also obtain a state with $w' > w$ as 
a superposition of carefully chosen $e^{\Delta_w}$ states with 
$\Delta_w$.  We call this the ``Russian doll'' structure, which 
is depicted schematically in Fig.~\ref{fig:possib}.
\end{itemize}

\subsection{Effective incoherence of superpositions}

We now focus on the latter case and consider a normalized superposition
\begin{equation}
  \ket{\Psi} = c_1 \ket{\Psi_1} + c_2 \ket{\Psi_2},
\label{eq:app-superp}
\end{equation}
of two states
\begin{alignat}{2}
  \ket{\Psi_1} &= \sum_{i = 1}^{2^{\Delta_1}} 
    a_i\, \ket{x^i_1 x^i_2 \cdots x^i_N}
\qquad
  && \Biggl( \sum_{i = 1}^{2^{\Delta_1}} |a_i|^2 = 1 \Biggr),
\label{eq:app-Psi_1}\\
  \ket{\Psi_2} &= \sum_{i = 1}^{2^{\Delta_2}} 
    b_i\, \ket{y^i_1 y^i_2 \cdots y^i_N}
\qquad
  && \Biggl( \sum_{i = 1}^{2^{\Delta_2}} |b_i|^2 = 1 \Biggr),
\label{eq:app-Psi_2}
\end{alignat}
with $\Delta_1 \neq \Delta_2$ and
\begin{equation}
  \Delta_1, \Delta_2 \leq \frac{N}{2}.
\end{equation}
Here, the coefficients $a_i$ and $b_i$ are random, as are the binary 
values $x^i_{1,\cdots,N}$ and $y^i_{1,\cdots,N}$, and $|c_1|^2 + 
|c_2|^2 = 1$ up to an exponentially suppressed correction arising from 
$\inner{\Psi_1}{\Psi_2} \neq 0 \lesssim O(2^{-|\Delta_1-\Delta_2|/2})$. 
We are interested in the reduced density matrix
\begin{equation}
  \rho_{1 \cdots n} = {\rm Tr}_{n+1 \cdots N}\, \rho,
\label{eq:app-reduced}
\end{equation}
obtained by performing a partial trace on
\begin{equation}
  \rho = \ket{\Psi}\bra{\Psi} 
  = |c_1|^2 \ket{\Psi_1} \bra{\Psi_1} + |c_2|^2 \ket{\Psi_2} \bra{\Psi_2} 
    + c_1 c_2^* \ket{\Psi_1} \bra{\Psi_2} 
    + c_2 c_1^* \ket{\Psi_2} \bra{\Psi_1},
\label{eq:app-rho}
\end{equation}
over the subsystem consisting of the first $n$ qubits.  We will only 
consider the case where $n < N/2$.

We begin our analysis by considering ${\rm Tr}_{n+1 \cdots N} 
\ket{\Psi_1} \bra{\Psi_1}$.  It is convenient to write
\begin{equation}
  \ket{\Psi_1} \bra{\Psi_1} 
  = \sum_{i=1}^{2^{\Delta_1}} |a_i|^2\, 
      \ket{x^i_1 \cdots x^i_N} \bra{x^i_1 \cdots x^i_N} 
    + \sum_{\substack{i,j=1 \\ i\neq j}}^{2^{\Delta_1}} 
      a_i a_j^*\, \ket{x^i_1 \cdots x^i_N} \bra{x^j_1 \cdots x^j_N}.
\label{eq:Psi1_density}
\end{equation}
Upon performing the partial trace over $\ket{\Psi_1} \bra{\Psi_1}$, the 
first sum gives a diagonal contribution to the reduced density matrix
\begin{equation}
  D_{11} = \sum_{i=1}^{2^{\Delta_1}} |a_i|^2\, 
    \ket{x^i_1 \cdots x^i_n} \bra{x^i_1 \cdots x^i_n}.
\end{equation}
The second sum gives a correction
\begin{equation}
  \tilde{D}_{11} = \sum_{\substack{i,j=1 \\ i\neq j}}^{2^{\Delta_1}} 
    a_i a_j^*\, \ket{x^i_1 \cdots x^i_n} \bra{x^j_1 \cdots x^j_n}\, 
    \delta_{x^i_{n+1}, x^j_{n+1}} \cdots \delta_{x^i_N, x^j_N}.
\label{eq:corr11}
\end{equation}
We now consider two cases:
\begin{itemize}
\item[(i)] $\Delta_1 > n$.
\\
Because $2^{\Delta_1} \gg 2^n$, it is clear that $D_{11}$ is 
a $2^n \times 2^n$ diagonal matrix with every diagonal entry 
approximately given by
\begin{equation}
  \frac{2^{\Delta_1}}{2^n} \left< |a_i |^2 \right> = 2^{-n}.
\end{equation}
(Note that $\left< |a_i |^2 \right> = 2^{-\Delta_1}$ because 
$\ket{\Psi_1}$ is normalized and random.)  Thus, $D_{11}$ is 
a fully mixed state.  Now observe that $\tilde{D}_{11}$  consists 
of almost all zeros.  In fact, looking at Eq.~(\ref{eq:corr11}) 
we see that there are $2^{2 \Delta_1 - N + n}$ nonzero entries 
of average absolute value $2^{-\Delta_1}$.  Given that $\Delta_1 
\leq N/2$, we conclude that $\tilde{D}_{11}$ has exponentially 
fewer nonzero entries than $D_{11}$, and that each nonzero entry 
has exponentially smaller size than the entries of $D_{11}$.
\item[(ii)] $\Delta_1 \leq n$.
\\
In this case, $D_{11}$ is a diagonal matrix having $2^{\Delta_1}$ 
nonzero entries of order $2^{-\Delta_1}$.  The number of nonzero 
entries in $\tilde{D}_{11}$ is, again, $2^{2 \Delta_1 - N + n}$, 
each having the average absolute value $2^{-\Delta_1}$.  The effect 
of $\tilde{D}_{11}$ is highly suppressed because its number of 
nonzero entries is exponentially smaller than that of $D_{11}$. 
In fact, for the number of nonzero entries in $\tilde{D}_{11}$ 
to compete with that in $D_{11}$, we would need $2\Delta_1 - 
N + n \geq \Delta_1$, which, however, mean
\begin{equation}
  \Delta_1 \geq N-n > \frac{N}{2},
\end{equation}
a contradiction.
\end{itemize}
Summarizing, ${\rm Tr}_{n+1 \cdots N} \ket{\Psi_1} \bra{\Psi_1} 
= D_{11} + \tilde{D}_{11}$ is a diagonal matrix having 
$2^{{\rm min}\{ \Delta_1, n \}}$ nonzero entries of order 
$2^{-{\rm min}\{ \Delta_1, n \}}$, up to exponentially suppressed 
effects.  The same analysis obviously applies to ${\rm Tr}_{n+1 \cdots N} 
\ket{\Psi_2} \bra{\Psi_2} = D_{22} + \tilde{D}_{22}$ with 
$\Delta_1 \rightarrow \Delta_2$.

We now turn our attention to the matrix ${\rm Tr}_{n+1 \cdots N} 
\ket{\Psi_1} \bra{\Psi_2}$, which we denote as $\tilde{D}_{12}$:
\begin{equation}
  \tilde{D}_{12} = \sum_{i=1}^{2^{\Delta_1}} \sum_{j=1}^{2^{\Delta_2}} 
    a_i b_j^*\, \ket{x^i_1 \cdots x^i_n} \bra{y^j_1 \cdots y^j_n} 
    \delta_{x^i_{n+1}, y^j_{n+1}} \cdots \delta_{x^i_N, y^j_N}.
\label{eq:corr12}
\end{equation}
We argue, along similar lines to the above, that $\tilde{D}_{12}$ is 
exponentially smaller than $|c_1|^2 D_{11} + |c_2|^2 D_{22}$, unless 
$|c_1|$ or $|c_2|$ is exponentially suppressed.  Once again, we have 
several cases:
\begin{itemize}
\item[(i)] $\Delta_1, \Delta_2 \leq n$.
\\
In this case, $|c_1|^2 D_{11} + |c_2|^2 D_{22}$ is a diagonal matrix having 
$2^{\Delta_1}$ nonzero entries of order $2^{-\Delta_1}$ and $2^{\Delta_2}$ 
nonzero entries of order $2^{-\Delta_2}$.  Considering Eq.~(\ref{eq:corr12}), 
$\tilde{D}_{12}$ consists of zeros except for $2^{\Delta_1+\Delta_2-N+n}$ 
nonzero entries with the average absolute value $\left< |a_i b_j^*| \right> 
= 2^{-(\Delta_1+\Delta_2)/2}$.  The number of these entries, however, is 
exponentially smaller than $2^{\Delta_1}$, since having $\Delta_1 + \Delta_2 
- N + n \geq \Delta_1$ would require $\Delta_2 \geq N-n > N/2$; similarly, 
it is also exponentially smaller than $2^{\Delta_2}$.  Moreover the 
changes of the exponentially rare eigenvalues affected are at most 
of $O(1)$.  We conclude that the effect of $\tilde{D}_{12}$ is 
exponentially suppressed.
\item[(ii)] $\Delta_1, \Delta_2 > n$.
\\
In this case, the condition that $|c_1|^2 + |c_2|^2 = 1$ ensures that 
$|c_1|^2 D_{11} + |c_2|^2 D_{22}$ is a $2^n \times 2^n$ unit matrix 
multiplied by $2^{-n}$.  Meanwhile, $\tilde{D}_{12}$ consists of zeros 
except for $2^{\Delta_1 + \Delta_2 - N + n} \ll 2^n$ nonzero entries 
of size $2^{-(\Delta_1+\Delta_2)/2} \ll 2^{-n}$.
\item[(iii)] $\Delta_1 \leq n < \Delta_2$.
\\
In this case, $D_{22}$ is a $2^n \times 2^n$ unit matrix multiplied by 
$2^{-n}$ while $D_{11}$ is a diagonal matrix having $2^{\Delta_1}$ nonzero 
entries of order $2^{-\Delta_1}$.  Once again, the number of nonzero entries 
in $\tilde{D}_{12}$ is exponentially smaller than $2^{\Delta_1}$, since 
$\Delta_1 + \Delta_2 - N + n \geq \Delta_1$ would require $\Delta_2 \geq 
N - n > N/2$, and the fractional corrections to eigenvalues from 
these entries are of order $2^{-(\Delta_2-n)}$.  This implies that 
the effect of $\tilde{D}_{12}$ is negligible.  The same argument 
also applies to the case that $\Delta_2 \leq n < \Delta_1$.
\end{itemize}

We conclude that for $n < N/2$, we find
\begin{equation}
  \rho_{1 \cdots n} = |c_1|^2 D_{11} + |c_2|^2 D_{22} 
  = \sum_{i=1}^{2^{\Delta_1}} |a_i|^2\, 
      \ket{x^i_1 \cdots x^i_n} \bra{x^i_1 \cdots x^i_n} 
    + \sum_{i=1}^{2^{\Delta_2}} |b_i|^2\, 
      \ket{y^i_1 \cdots y^i_n} \bra{y^i_1 \cdots y^i_n},
\label{eq:app-rho_n}
\end{equation}
up to effects exponentially suppressed in $N \approx O({\cal A}_*)$.  This 
implies that the reduced density matrix for the state $\ket{\Psi}$ takes 
the form of an incoherent classical mixture
\begin{equation}
  \rho_{1 \cdots n} 
  = |c_1|^2 \rho^{(1)}_{1 \cdots n} + |c_2|^2 \rho^{(2)}_{1 \cdots n},
\end{equation}
where $\rho^{(k)}_{1 \cdots n} = {\rm Tr}_{n+1 \cdots N} \ket{\Psi_k} 
\bra{\Psi_k}$ ($k = 1, 2$) are the reduced density matrices we would 
obtain if the state were $\ket{\Psi_k}$.

The form of Eq.~(\ref{eq:app-rho_n}) also implies that the entanglement entropy
\begin{equation}
  S_{1 \cdots n} = - {\rm Tr}_{1 \cdots n} 
    (\rho_{1 \cdots n} \ln \rho_{1 \cdots n}),
\end{equation}
obeys a similar linear relation
\begin{equation}
  S_{1 \cdots n} 
  = |c_1|^2 S^{(1)}_{1 \cdots n} + |c_2|^2 S^{(2)}_{1 \cdots n} + O(1),
\end{equation}
unless $|c_1|$ or $|c_2|$ is exponentially small.  Here, 
$S^{(k)}_{1 \cdots n} = -{\rm Tr}_{1 \cdots n} (\rho^{(k)}_{1 \cdots n} 
\ln \rho^{(k)}_{1 \cdots n})$.  This can be seen by considering the same 
three cases as above.  If $\Delta_1,\Delta_2 \leq n$, $\rho_{1 \cdots n}$ 
is a diagonal matrix having $2^{\Delta_1}$ nonzero entries with average 
value $|c_1|^2 2^{-\Delta_1}$ and $2^{\Delta_2}$ nonzero entries with 
average value $|c_2|^2 2^{-\Delta_2}$.  In this case, 
\begin{equation}
  S_{1 \cdots n} 
  = -|c_1|^2 \ln \frac{|c_1|^2}{2^{\Delta_1}} 
    - |c_2|^2 \ln \frac{|c_2|^2}{2^{\Delta_2}} 
  = |c_1|^2 \Delta_1 \ln 2 + |c_2|^2 \Delta_2 \ln 2 + O(1),
\end{equation}
while we have $S^{(k)}_{1 \cdots n} = \Delta_k \ln 2$.  The $O(1)$ 
correction from linearity is the entropy of mixing, given by
\begin{equation}
  S_{1 \cdots n, {\rm mix}} = -|c_1|^2 \ln |c_1|^2 -|c_2|^2 \ln |c_2|^2.
\end{equation}
If $\Delta_1, \Delta_2 > n$, then $\rho_{1 \cdots n}$ is a unit matrix 
multiplied by $2^{-n}$.  From this it follows that $S_{1 \cdots n} 
= n \ln 2 = |c_1|^2 n \ln 2 + |c_2|^2 n \ln 2$, which is desirable 
given that $S^{(k)}_{1 \cdots n} = n \ln 2$ for $\Delta_k > n$.%
\footnote{The absence of the mixing contribution in this case is 
 an artifact of the specific qubit model considered here, arising 
 from the fact that two universes cannot be discriminated unless 
 $n$ is larger than one of $\Delta_{1,2}$; see Eq.~(\ref{eq:S_EE-1}). 
 In realistic cases, the mixing contribution should always exist 
 for any macroscopic region in the holographic space as two 
 different universes can be discriminated in that region; see, 
 e.g., Fig.~\ref{fig:Q-gamma}.}
Finally, if $\Delta_1 < n < \Delta_2$, $\rho^{(1)}_{1 \cdots n}$ 
has $2^{\Delta_1}$ nonzero entries of mean value $2^{-\Delta_1}$ 
while $\rho^{(2)}_{1 \cdots n}$ is a unit matrix multiplied by 
$2^{-n}$.  Because $2^{-\Delta_1} \gg 2^{-n}$ the total density 
matrix $\rho_{1 \cdots n}$ given by Eq.~(\ref{eq:app-rho_n}) is 
diagonal and has $2^{\Delta_1}$ entries of size $|c_1|^2 2^{-\Delta_1}$ 
and $2^n$ entries of size $|c_2|^2 2^{-n}$.  We thus find that 
$S_{1 \cdots n} = |c_1|^2 \Delta_1 \ln 2 + |c_2|^2 n \ln 2 + 
S_{1 \cdots n, {\rm mix}} = |c_1|^2 S^{(1)}_{1 \cdots n} + |c_2|^2 
S^{(2)}_{1 \cdots n} + O(1)$.  (This expression is valid for 
$\Delta_1 = n < \Delta_2$ as well.)


\begin{thebibliography}{99}

\bibitem{Bekenstein:1972tm}
J.~D.~Bekenstein,
``Black holes and the second law,''
Lett.\ Nuovo Cim.\  {\bf 4}, 737 (1972).

\bibitem{Bekenstein:1973ur}
J.~D.~Bekenstein,
``Black holes and entropy,''
Phys.\ Rev.\ D {\bf 7}, 2333 (1973).

\bibitem{Bardeen:1973gs}
J.~M.~Bardeen, B.~Carter and S.~W.~Hawking,
``The four laws of black hole mechanics,''
Commun.\ Math.\ Phys.\  {\bf 31}, 161 (1973).

\bibitem{Hawking:1974rv}
S.~W.~Hawking,
``Black hole explosions,''
Nature {\bf 248}, 30 (1974).

\bibitem{Hawking:1974sw}
S.~W.~Hawking,
``Particle creation by black holes,''
Commun.\ Math.\ Phys.\  {\bf 43}, 199 (1975)
Erratum: [Commun.\ Math.\ Phys.\  {\bf 46}, 206 (1976)].

\bibitem{Gibbons:1977mu}
G.~W.~Gibbons and S.~W.~Hawking,
``Cosmological event horizons, thermodynamics, and particle creation,''
Phys.\ Rev.\ D {\bf 15}, 2738 (1977).

\bibitem{'tHooft:1993gx}
G.~'t Hooft,
``Dimensional reduction in quantum gravity,''
in {\it Salamfestschrift},
edited by A.~Ali, J.~Ellis, and S.~Randjbar-Daemi
(World Scientific, Singapore, 1994), p.~284
[gr-qc/9310026].

\bibitem{Susskind:1994vu}
L.~Susskind,
``The world as a hologram,''
J.\ Math.\ Phys.\  {\bf 36}, 6377 (1995)
[hep-th/9409089].

\bibitem{Bousso:2002ju}
For a review,
R.~Bousso,
``The holographic principle,''
Rev.\ Mod.\ Phys.\  {\bf 74}, 825 (2002)
[hep-th/0203101].

\bibitem{Maldacena:1997re}
J.~Maldacena,
``The large N limit of superconformal field theories and supergravity,''
Int.\ J.\ Theor.\ Phys.\  {\bf 38}, 1113 (1999)
[Adv.\ Theor.\ Math.\ Phys.\  {\bf 2}, 231 (1998)]
[hep-th/9711200].

\bibitem{Fischler:1998st}
W.~Fischler and L.~Susskind,
``Holography and cosmology,''
hep-th/9806039.

\bibitem{Bousso:1999xy}
R.~Bousso,
``A covariant entropy conjecture,''
JHEP {\bf 07}, 004 (1999)
[hep-th/9905177].

\bibitem{Susskind:1993if}
L.~Susskind, L.~Thorlacius and J.~Uglum,
``The stretched horizon and black hole complementarity,''
Phys.\ Rev.\ D {\bf 48}, 3743 (1993)
[hep-th/9306069].

\bibitem{Nomura:2011dt}
Y.~Nomura,
``Physical theories, eternal inflation, and the quantum universe,''
JHEP {\bf 11}, 063 (2011)
[arXiv:1104.2324 [hep-th]].

\bibitem{Bousso:2011up}
R.~Bousso and L.~Susskind,
``The multiverse interpretation of quantum mechanics,''
Phys.\ Rev.\ D {\bf 85}, 045007 (2012)
[arXiv:1105.3796 [hep-th]].

\bibitem{Bousso:1999cb}
R.~Bousso,
``Holography in general space-times,''
JHEP {\bf 06}, 028 (1999)
[hep-th/9906022].

\bibitem{Nomura:2011rb}
Y.~Nomura,
``Quantum mechanics, spacetime locality, and gravity,''
Found.\ Phys.\  {\bf 43}, 978 (2013)
[arXiv:1110.4630 [hep-th]].

\bibitem{Bousso:2015mqa}
R.~Bousso and N.~Engelhardt,
``New area law in general relativity,''
Phys.\ Rev.\ Lett.\  {\bf 115}, 081301 (2015)
[arXiv:1504.07627 [hep-th]].

\bibitem{Bousso:2015qqa}
R.~Bousso and N.~Engelhardt,
``Proof of a new area law in general relativity,''
Phys.\ Rev.\ D {\bf 92}, 044031 (2015)
[arXiv:1504.07660 [gr-qc]].

\bibitem{Sanches:2016pga}
F.~Sanches and S.~J.~Weinberg,
``Refinement of the Bousso-Engelhardt area law,''
Phys.\ Rev.\ D {\bf 94}, 021502 (2016)
[arXiv:1604.04919 [hep-th]].

\bibitem{Ryu:2006bv}
S.~Ryu and T.~Takayanagi,
``Holographic derivation of entanglement entropy from AdS/CFT,''
Phys.\ Rev.\ Lett.\  {\bf 96}, 181602 (2006)
[hep-th/0603001].

\bibitem{Ryu:2006ef}
S.~Ryu and T.~Takayanagi,
``Aspects of holographic entanglement entropy,''
JHEP {\bf 08}, 045 (2006)
[hep-th/0605073].

\bibitem{Hubeny:2007xt}
V.~E.~Hubeny, M.~Rangamani and T.~Takayanagi,
``A covariant holographic entanglement entropy proposal,''
JHEP {\bf 07}, 062 (2007)
[arXiv:0705.0016 [hep-th]].

\bibitem{VanRaamsdonk:2009ar}
M.~Van Raamsdonk,
``Comments on quantum gravity and entanglement,''
arXiv:0907.2939 [hep-th].

\bibitem{Swingle:2012wq}
B.~Swingle,
``Constructing holographic spacetimes using entanglement renormalization,''
arXiv:1209.3304 [hep-th].

\bibitem{Lewkowycz:2013nqa}
A.~Lewkowycz and J.~Maldacena,
``Generalized gravitational entropy,''
JHEP {\bf 08}, 090 (2013)
[arXiv:1304.4926 [hep-th]].

\bibitem{Maldacena:2013xja}
J.~Maldacena and L.~Susskind,
``Cool horizons for entangled black holes,''
Fortsch.\ Phys.\  {\bf 61}, 781 (2013)
[arXiv:1306.0533 [hep-th]].

\bibitem{Sanches:2016sxy}
F.~Sanches and S.~J.~Weinberg,
``A holographic entanglement entropy conjecture for general spacetimes,''
Phys.\ Rev.\ D {\bf 94}, 084034 (2016)
[arXiv:1603.05250 [hep-th]].

\bibitem{Freedman:2016zud}
M.~Freedman and M.~Headrick,
``Bit threads and holographic entanglement,''
arXiv:1604.00354 [hep-th].

\bibitem{Almheiri:2016blp}
A.~Almheiri, X.~Dong and B.~Swingle,
``Linearity of holographic entanglement entropy,''
arXiv:1606.04537 [hep-th].

\bibitem{Nomura:2016aww}
Y.~Nomura, N.~Salzetta, F.~Sanches and S.~J.~Weinberg,
``Spacetime equals entanglement,''
arXiv:1607.02508 [hep-th].

\bibitem{Harlow:2016vwg}
D.~Harlow,
``The Ryu-Takayanagi formula from quantum error correction,''
arXiv:1607.03901 [hep-th].

\bibitem{Almheiri:2012rt}
A.~Almheiri, D.~Marolf, J.~Polchinski and J.~Sully,
``Black holes:\ complementarity or firewalls?,''
JHEP {\bf 02}, 062 (2013)
[arXiv:1207.3123 [hep-th]].

\bibitem{Almheiri:2013hfa}
A.~Almheiri, D.~Marolf, J.~Polchinski, D.~Stanford and J.~Sully,
``An apologia for firewalls,''
JHEP {\bf 09}, 018 (2013)
[arXiv:1304.6483 [hep-th]].

\bibitem{Marolf:2013dba}
D.~Marolf and J.~Polchinski,
``Gauge/gravity duality and the black hole interior,''
Phys.\ Rev.\ Lett.\  {\bf 111}, 171301 (2013)
[arXiv:1307.4706 [hep-th]].

\bibitem{Papadodimas:2015jra}
K.~Papadodimas and S.~Raju,
``Remarks on the necessity and implications of state-dependence in the black hole interior,''
Phys.\ Rev.\ D {\bf 93}, 084049 (2016)
[arXiv:1503.08825 [hep-th]].

\bibitem{Papadodimas:2012aq}
K.~Papadodimas and S.~Raju,
``An infalling observer in AdS/CFT,''
JHEP {\bf 10}, 212 (2013)
[arXiv:1211.6767 [hep-th]].

\bibitem{Verlinde:2012cy}
E.~Verlinde and H.~Verlinde,
``Black hole entanglement and quantum error correction,''
JHEP {\bf 10}, 107 (2013)
[arXiv:1211.6913 [hep-th]].

\bibitem{Nomura:2012ex}
Y.~Nomura and J.~Varela,
``A note on (no) firewalls:\ the entropy argument,''
JHEP {\bf 07}, 124 (2013)
[arXiv:1211.7033 [hep-th]].

\bibitem{DeWitt:1967yk}
B.~S.~DeWitt,
``Quantum theory of gravity.\ I.\ The canonical theory,''
Phys.\ Rev.\  {\bf 160}, 1113 (1967).

\bibitem{Wheeler:1967}
J.~A.~Wheeler,
``Superspace and the nature of quantum geometrodynamics,''
in {\it Battelle Rencontres:\ 1967 Lectures in Mathematics and Physics},
edited by C.~DeWitt and J.~A.~Wheeler 
(Benjamin, New York, 1968), p.~242.

\bibitem{Bousso:2000xa}
R.~Bousso and J.~Polchinski,
``Quantization of four-form fluxes and dynamical neutralization of the cosmological constant,''
JHEP {\bf 06}, 006 (2000)
[hep-th/0004134].

\bibitem{Kachru:2003aw}
S.~Kachru, R.~Kallosh, A.~Linde and S.~P.~Trivedi,
``De Sitter vacua in string theory,''
Phys.\ Rev.\ D {\bf 68}, 046005 (2003)
[hep-th/0301240].

\bibitem{Susskind:2003kw}
L.~Susskind,
``The Anthropic landscape of string theory,''
in {\it Universe or Multiverse?},
edited by B.~Carr (Cambridge University Press, Cambridge, 2007), p.~247
[hep-th/0302219].

\bibitem{Douglas:2003um}
M.~R.~Douglas,
``The statistics of string/M theory vacua,''
JHEP {\bf 05}, 046 (2003)
[hep-th/0303194].

\bibitem{Nomura:2012zb}
Y.~Nomura,
``The static quantum multiverse,''
Phys.\ Rev.\ D {\bf 86}, 083505 (2012)
[arXiv:1205.5550 [hep-th]].

\bibitem{Nomura:2013nya}
Y.~Nomura, J.~Varela and S.~J.~Weinberg,
``Low energy description of quantum gravity and complementarity,''
Phys.\ Lett.\ B {\bf 733}, 126 (2014)
[arXiv:1304.0448 [hep-th]].

\bibitem{Hayward:1993wb}
S.~A.~Hayward,
``General laws of black hole dynamics,''
Phys.\ Rev.\ D {\bf 49}, 6467 (1994).

\bibitem{Hayward:1997jp}
S.~A.~Hayward,
``Unified first law of black hole dynamics and relativistic thermodynamics,''
Class.\ Quant.\ Grav.\ {\bf 15}, 3147 (1998)
[gr-qc/9710089].

\bibitem{Ashtekar:2002ag}
A.~Ashtekar and B.~Krishnan,
``Dynamical horizons:\ energy, angular momentum, fluxes and balance laws,''
Phys.\ Rev.\ Lett.\ {\bf 89}, 261101 (2002)
[gr-qc/0207080].

\bibitem{Ashtekar:2003hk}
A.~Ashtekar and B.~Krishnan,
``Dynamical horizons and their properties,''
Phys.\ Rev.\ D {\bf 68}, 104030 (2003)
[gr-qc/0308033].

\bibitem{Page:1993df}
D.~N.~Page,
``Average entropy of a subsystem,''
Phys.\ Rev.\ Lett.\  {\bf 71}, 1291 (1993)
[gr-qc/9305007].

\bibitem{Wall:2012uf}
A.~C.~Wall,
``Maximin surfaces, and the strong subadditivity of the covariant holographic entanglement entropy,''
Class.\ Quant.\ Grav.\  {\bf 31}, 225007 (2014)
[arXiv:1211.3494 [hep-th]].

\bibitem{Faulkner:2013ana}
T.~Faulkner, A.~Lewkowycz and J.~Maldacena,
``Quantum corrections to holographic entanglement entropy,''
JHEP {\bf 11}, 074 (2013)
[arXiv:1307.2892 [hep-th]].

\bibitem{Engelhardt:2014gca}
N.~Engelhardt and A.~C.~Wall,
``Quantum extremal surfaces:\ holographic entanglement entropy beyond the classical regime,''
JHEP {\bf 01}, 073 (2015)
[arXiv:1408.3203 [hep-th]].

\bibitem{Bousso:2015eda}
R.~Bousso and N.~Engelhardt,
``Generalized second law for cosmology,''
Phys.\ Rev.\ D {\bf 93}, 024025 (2016)
[arXiv:1510.02099 [hep-th]].

\bibitem{Li:2010dr}
W.~Li and T.~Takayanagi,
``Holography and entanglement in flat spacetime,''
Phys.\ Rev.\ Lett.\  {\bf 106}, 141301 (2011)
[arXiv:1010.3700 [hep-th]].

\bibitem{Dvali:2007hz}
See, e.g., 
G.~Dvali,
``Black holes and large $N$ species solution to the hierarchy problem,''
Fortsch.\ Phys.\  {\bf 58}, 528 (2010)
[arXiv:0706.2050 [hep-th]].

\bibitem{Bombelli:1986rw}
L.~Bombelli, R.~K.~Koul, J.~Lee and R.~D.~Sorkin,
``A quantum source of entropy for black holes,''
Phys.\ Rev.\ D {\bf 34}, 373 (1986).

\bibitem{Srednicki:1993im}
M.~Srednicki,
``Entropy and area,''
Phys.\ Rev.\ Lett.\  {\bf 71}, 666 (1993)
[hep-th/9303048].

\bibitem{Gubser:1998bc}
S.~S.~Gubser, I.~R.~Klebanov and A.~M.~Polyakov,
``Gauge theory correlators from noncritical string theory,''
Phys.\ Lett.\ B {\bf 428}, 105 (1998)
[hep-th/9802109].

\bibitem{Witten:1998qj}
E.~Witten,
``Anti-de Sitter space and holography,''
Adv.\ Theor.\ Math.\ Phys.\  {\bf 2}, 253 (1998)
[hep-th/9802150].

\bibitem{Nomura:2013lia}
Y.~Nomura and S.~J.~Weinberg,
``Entropy of a vacuum:\ what does the covariant entropy count?,''
Phys.\ Rev.\ D {\bf 90}, 104003 (2014)
[arXiv:1310.7564 [hep-th]].

\bibitem{Nomura:2014woa}
Y.~Nomura, F.~Sanches and S.~J.~Weinberg,
``Black hole interior in quantum gravity,''
Phys.\ Rev.\ Lett.\  {\bf 114}, 201301 (2015)
[arXiv:1412.7539 [hep-th]].

\bibitem{Nomura:2014voa}
Y.~Nomura, F.~Sanches and S.~J.~Weinberg,
``Relativeness in quantum gravity:\ limitations and frame dependence of semiclassical descriptions,''
JHEP {\bf 04}, 158 (2015)
[arXiv:1412.7538 [hep-th]].

\bibitem{Nomura:2016qum}
Y.~Nomura and N.~Salzetta,
``Why firewalls need not exist,''
Phys.\ Lett.\ B {\bf 761}, 62 (2016)
[arXiv:1602.07673 [hep-th]].

\bibitem{Jacobson:1995ab}
T.~Jacobson,
``Thermodynamics of space-time:\ the Einstein equation of state,''
Phys.\ Rev.\ Lett.\  {\bf 75}, 1260 (1995)
[gr-qc/9504004].

\bibitem{Wald:1984rg}
See, e.g.,
R.~M.~Wald,
{\it General Relativity}
(The University of Chicago Press, Chicago, 1984).

\bibitem{Bousso:2010pm}
R.~Bousso, B.~Freivogel and S.~Leichenauer,
``Saturating the holographic entropy bound,''
Phys.\ Rev.\ D {\bf 82}, 084024 (2010)
[arXiv:1003.3012 [hep-th]].

\bibitem{Guth:1980zm}
A.~H.~Guth,
``The inflationary universe:\ a possible solution to the horizon and flatness problems,''
Phys.\ Rev.\ D {\bf 23}, 347 (1981).

\bibitem{Linde:1981mu}
A.~D.~Linde,
``A new inflationary universe scenario:\ a possible solution of the horizon, flatness, homogeneity, isotropy and primordial monopole problems,''
Phys.\ Lett.\ {\bf 108B}, 389 (1982).

\bibitem{Albrecht:1982wi}
A.~Albrecht and P.~J.~Steinhardt,
``Cosmology for grand unified theories with radiatively induced symmetry breaking,''
Phys.\ Rev.\ Lett.\ {\bf 48}, 1220 (1982).

\bibitem{Coleman:1980aw}
S.~Coleman and F.~De Luccia,
``Gravitational effects on and of vacuum decay,''
Phys.\ Rev.\ D {\bf 21}, 3305 (1980).

\bibitem{Guth:2013sya}
A.~H.~Guth, D.~I.~Kaiser and Y.~Nomura,
``Inflationary paradigm after Planck 2013,''
Phys.\ Lett.\ B {\bf 733}, 112 (2014)
[arXiv:1312.7619 [astro-ph.CO]].

\bibitem{Nomura:2015zda}
Y.~Nomura,
``A note on Boltzmann brains,''
Phys.\ Lett.\ B {\bf 749}, 514 (2015)
[arXiv:1502.05401 [hep-th]].

\bibitem{Page:1983uc}
D.~N.~Page and W.~K.~Wootters,
``Evolution without evolution:\ dynamics described by stationary observables,''
Phys.\ Rev.\ D {\bf 27}, 2885 (1983).

\bibitem{Aguirre:2011ac}
A.~Aguirre, S.~M.~Carroll and M.~C.~Johnson,
``Out of equilibrium:\ understanding cosmological evolution to lower-entropy states,''
JCAP {\bf 02}, 024 (2012)
[arXiv:1108.0417 [hep-th]].

\bibitem{Guth:1982pn}
A.~H.~Guth and E.~J.~Weinberg,
``Could the universe have recovered from a slow first order phase transition?,''
Nucl.\ Phys.\ B {\bf 212}, 321 (1983).

\bibitem{Vilenkin:1983xq}
A.~Vilenkin,
``The birth of inflationary universes,''
Phys.\ Rev.\ D {\bf 27}, 2848 (1983).

\bibitem{Linde:1986fd}
A.~D.~Linde,
``Eternally existing self-reproducing chaotic inflationary universe,''
Phys.\ Lett.\ B {\bf 175}, 395 (1986).

\bibitem{Linde:1986fc}
A.~D.~Linde,
``Eternal chaotic inflation,''
Mod.\ Phys.\ Lett.\ A {\bf 01}, 81 (1986).

\bibitem{Carroll:2004pn}
S.~M.~Carroll and J.~Chen,
``Spontaneous inflation and the origin of the arrow of time,''
hep-th/0410270.

\bibitem{Dyson:2002pf}
L.~Dyson, M.~Kleban and L.~Susskind,
``Disturbing implications of a cosmological constant,''
JHEP {\bf 10}, 011 (2002)
[hep-th/0208013].

\bibitem{Albrecht:2002uz}
A.~Albrecht,
``Cosmic inflation and the arrow of time,''
in {\it Science and Ultimate Reality:\ Quantum Theory, Cosmology, 
and Complexity},
edited by J.~D.~Barrow, P.~C.~W.~Davies, and C.~L.~Harper,~Jr. 
(Cambridge University Press, Cambridge, 2004), p.~363 
[astro-ph/0210527].

\bibitem{Page:2006dt}
D.~N.~Page,
``Is our universe likely to decay within 20 billion years?,''
Phys.\ Rev.\ D {\bf 78}, 063535 (2008)
[hep-th/0610079].

\end{thebibliography}
\end{document}